\definecolor{bluekeywords}{rgb}{0,0,1}
\definecolor{greencomments}{rgb}{0,0.5,0}
\definecolor{redstrings}{rgb}{0.64,0.08,0.08}
\definecolor{xmlcomments}{rgb}{0.5,0.5,0.5}
\definecolor{types}{rgb}{0.17,0.57,0.68}
\newcolumntype{L}[1]{>{\raggedright\let\newline\\\arraybackslash\hspace{0pt}}m{#1}}
\newcolumntype{C}[1]{>{\centering\let\newline\\\arraybackslash\hspace{0pt}}m{#1}}
\newcolumntype{R}[1]{>{\raggedleft\let\newline\\\arraybackslash\hspace{0pt}}m{#1}}
\renewcommand{\ALG@beginalgorithmic}{\footnotesize}
\lstdefinelanguage{JavaScript}{%
  keywords = { async, await, break, case, catch, class, const, continue, debugger, default, delete, do, each, 
  else, export, finally, for, function, if, import, in, instanceof, let, new, of, return, switch, this, throw, 
  try, typeof, var, void, while, with, yield },
  morecomment = [l]{//},
  morecomment = [s]{/*}{*/},
  morestring  = [b]',
  morestring  = [b]",
  sensitive   = true,
}
\lstdefinelanguage{Java10}{
  language      = Java,
  morekeywords  ={ var, async },
}
\begin{document}

\title{Automated Synthesis of Asynchronizations\thanks{This work is supported in part by the European Research Council (ERC) under the Horizon 2020 research and innovation programme (grant agreement No 678177).}}
%\thanks{This work is supported in part by the European Research Council (ERC) under the Horizon 2020 research and innovation programme (grant agreement No 678177).}
%\author{}
%\institute{}
%
%\titlerunning{Abbreviated paper title}
% If the paper title is too long for the running head, you can set
% an abbreviated paper title here
%
\author{Sidi Mohamed Beillahi \inst{1} \and
Ahmed Bouajjani \inst{2} \and
Constantin Enea \inst{3} \and Shuvendu Lahiri \inst{4}}
\authorrunning{S.M. Beillahi, A. Bouajjani, C. Enea, and S. Lahiri.}
%% First names are abbreviated in the running head.
%% If there are more than two authors, 'et al.' is used.
%%
\institute{University of Toronto, Canada \\
\email{sm.beillahi@utoronto.ca} \and 
Universit\'e Paris Cit\'e, IRIF, CNRS, Paris, France \\
\email{abou@irif.fr} \and 
LIX, Ecole Polytechnique, CNRS and Institut Polytechnique de Paris, France \\
\email{cenea@irif.fr} \and 
Microsoft Research Lab - Redmond \\
\email{shuvendu@microsoft.com}}
%
%\vspace{-30pt}
\maketitle

    \begin{abstract}
        \vspace{-10pt}
       Asynchronous programming is widely adopted for building responsive and efficient software, and 
       modern languages such as C\# provide async/await primitives to simplify the use of asynchrony.
       %Their use is however error-prone because of the non-determinism in their semantics.
       In this paper, we propose an approach for refactoring a sequential program into an asynchronous program that uses async/await, called asynchronization.
       The refactoring process is parametrized by a set of methods to replace with asynchronous versions, and it is constrained
       to avoid introducing data races. 
%       Since the space of possible solutions is exponential in general, 
       We investigate the delay complexity of enumerating all data race free asynchronizations, which quantifies the delay
       between outputting two consecutive solutions. We show that this is polynomial time modulo an oracle for solving reachability 
       in sequential programs. We also describe a pragmatic approach based on an interprocedural data-flow analysis with polynomial-time 
       delay complexity. The latter approach has been implemented and evaluated on a number of non-trivial C\# programs extracted from
       open-source repositories.
       \vspace{-5pt}
    \end{abstract}

    \everymath{\displaystyle}

    %!TEX root = draft.tex
\vspace{-15pt}
\section{Introduction}
\vspace{-5pt}

Asynchronous programming is widely adopted for building responsive and efficient software. 
%Unlike synchronous procedure
%calls, asynchronous procedure calls may run only partially and return the control to their caller. 
%Later, when the callee finishes execution, a callback procedure registered by the caller is invoked. 
%Asynchronous programming is essential for programming efficient device drivers, UI-driven, web and cloud applications, etc.
%
%Asynchronous programming provides a means to defer high latency steps in a computation. Unlike a synchronous pro- cedure call, an asynchronously called procedure does not run to completion. Rather, it returns the control to its caller immediately. Later, when the callee finishes execution, a callback procedure registered by the caller is invoked. This helps improve responsiveness of the application. The asyn- chronous style of programming is therefore used in many domains, ranging from device drivers and UI-driven desktop software to mobile, web and cloud applications.
As an alternative to %the tedious model of asynchronous programming that required 
explicitly registering callbacks with asynchronous calls, C\# 5.0~\cite{DBLP:conf/ecoop/BiermanRMMT12} introduced the \texttt{async}/\texttt{await} primitives. These primitives allow the programmer to write code in a familiar sequential style without explicit callbacks. An asynchronous procedure, marked with \texttt{async}, returns a task object that the caller uses to ``await'' it. 
Awaiting may suspend the execution of the caller, but does not block the thread it is running on. The code after \texttt{await} is the continuation called back when the callee result is ready. 
%If the awaited task finished, then \texttt{await} is interpreted as skip (it has no effect). 
%For instance, on the right of Figure~\ref{fig:example}, the method \texttt{ContentLen} calls an asynchronous method \texttt{GetStringAsync} that returns a task object \texttt{t5} used to await it at line~\ref{ln:AwaitT6}. Executing this await suspends the execution of \texttt{ContentLen} and returns the control to its caller \texttt{MainAsync}, assuming that \texttt{GetStringAsync} did not finish. Passing the control to the caller is a constraint of the await semantics. When \texttt{GetStringAsync} finishes and the thread is idle, the continuation after line~\ref{ln:AwaitT6} is scheduled. 
This paradigm has become popular across many languages, C++, JavaScript, Python.

The \texttt{async}/\texttt{await} primitives introduce concurrency which is notoriously complex. The code in between a call and a matching \texttt{await} (referring to the same task) may execute before some part of the awaited task 
%(if the latter passed the control to its caller before finishing), 
or after the awaited task finished. For instance, on the middle of Fig.~\ref{fig:example}, the assignment \texttt{y=1} at line~\ref{ln:introy} 
%in \texttt{Main} between the call and the \texttt{await} of \texttt{RdFile} 
 can execute before or after \texttt{RdFile} finishes. The \texttt{await} for \texttt{ReadToEndAsync} in \texttt{RdFile} (line~\ref{ln:AwaitT6}) may suspend \texttt{RdFile}'s execution because \texttt{ReadToEndAsync} did not finish, and pass the control to \texttt{Main} which executes \texttt{y=1}. %Passing the control to the caller is a constraint of the await semantics. 
If \texttt{ReadToEndAsync} finishes before this \texttt{await} executes, then the latter has no effect and \texttt{y=1} gets executed after \texttt{RdFile} finishes.
%or after \texttt{ReadFile} finishes, otherwise (i.e., the call to \texttt{ReadToEndAsync} finishes before the matching await at line~\ref{ln:InvocTn} gets executed).
The resemblance with sequential code can be especially deceitful since this non-determinism is opaque. It is common that \texttt{await}s are placed immediately after the corresponding call which limits the benefits that can be obtained from executing steps in the caller and callee concurrently~\cite{DBLP:conf/icse/OkurHDD14}. 

In this paper, we address the problem of writing efficient asynchronous code that uses \texttt{async}/\texttt{await}. We propose a procedure for automated synthesis of asynchronous programs \emph{equivalent} to a given synchronous (sequential) program $\aprog$. This can be seen as a way of refactoring synchronous code to asynchronous code. Solving this problem in its full generality would require checking equivalence between arbitrary programs, which is known to be hard. Therefore, we consider a restricted space of asynchronous program candidates defined by substituting synchronous methods in $\aprog$ with asynchronous versions (assumed to be behaviorally equivalent). The substituted methods are assumed to be leaves of the call-tree (they do not call any method in $\aprog$). Such programs are called \emph{asynchronizations} of $\aprog$. A practical instantiation is replacing IO synchronous calls for reading/writing files or managing http connections with asynchronous versions.
% (for instance, replacing calls to \texttt{Stream\-Reader.\-Read\-ToEnd()} in C\# for reading the content of a stream with calls to its asynchronous version \texttt{Stream\-Reader.\-Read\-ToEnd\-Async()}). 

%Extending our approach for such programs is an interesting direction for future work. 

\begin{figure}[t]
    %\centering
    \lstset{basicstyle=\ttfamily\scriptsize,numbers=left,
        stepnumber=1,numberblanklines=false,mathescape=true,morekeywords  ={ async, Task }}
    \hspace{1mm}
\begin{minipage}[c]{0.3\textwidth}
\begin{lstlisting}[numbersep=2pt]
void Main(string f) {
 x = 0;
 int val = RdFile(f);
 y = 1; 

 int r = x;
 Debug.Assert(r == val); $\label{ln:assertSeq}$}
 
int RdFile(string f) {
 var rd=new StreamReader(f);
 string s = rd.ReadToEnd();  
 int r1 = x;

 x = r1 + s.Length;
 return s.Length;    }
\end{lstlisting}
\end{minipage}
\hfill
\begin{minipage}[c]{0.42\textwidth}
\begin{lstlisting}[xleftmargin=2.5mm,numbersep=2pt]
async Task Main(string f) {
 x = 0; $\label{ln:WriteX0}$
 Task<int> t1 = RdFile(f); $\label{ln:InvocT2}$
 y = 1; $\label{ln:introy}$
 int val = await t1;  $\label{ln:AwaitT3}$          
 int r = x;   $\label{ln:ReadX1}$
 Debug.Assert(r == val);  }  
 
async Task<int> RdFile(string f) {
 var rd = new StreamReader(f);
 Task<string> t=rd.ReadToEndAsync(); $\label{ln:InvocT6}$
 int r1 = x;             $\label{ln:WriteX}$
 string s = await t;     $\label{ln:AwaitT6}$
 x = r1 + s.Length; $\label{ln:return:InvocT6}$
 return s.Length;                }
\end{lstlisting}
\end{minipage}
\begin{minipage}[c]{0.23\textwidth}
    %\hspace{-1.3cm}
    \includegraphics[width=1\linewidth]{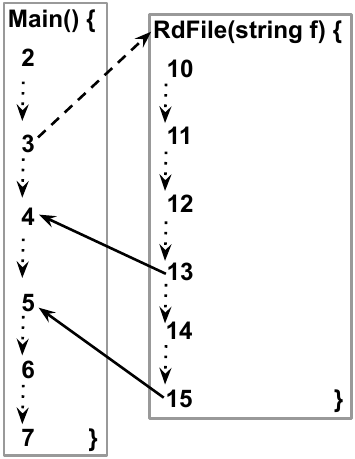}
\end{minipage}
\vspace{-0.3cm}
\caption{Synchronous and asynchronous C\# programs ({\tt x}, {\tt y} are static variables).} % Calls to asynchronous methods return a \texttt{Task} object that is used in awaits.}
\label{fig:example}
\vspace{-.5cm}
\end{figure}

For instance, the sequential C\# program on the left of Fig.~\ref{fig:example} contains a \texttt{Main} that invokes a method \texttt{RdFile} that returns the length of the text in a file. The file name input to \texttt{RdFile} is an input to \texttt{Main}. The program uses a variable \texttt{x} to aggregate the lengths of all files accessed by \texttt{RdFile}; this would be more useful when \texttt{Main} calls \texttt{RdFile} multiple times which we omit for simplicity. Note that this program passes the assertion at line~\ref{ln:assertSeq}.
The time consuming method \texttt{Read\-ToEnd} for reading a file
%\footnote{Actually, the \texttt{.Net} platform does not contain such a method. We use it here to simplify the exposition. Reading the content of a webpage should pass through \texttt{Web\-Request} and \texttt{Http\-Web\-Response} objects. The explanations would remain valid.}, 
is an obvious choice for being replaced with an equivalent \emph{asynchronous} version whose name is suffixed with \texttt{Async}.
%i.e., \texttt{Stream\-Reader.\-Read\-ToEnd\-Async} and \texttt{Http\-Client.\-Get\-String\-Async}, respectively. 
Performing such tasks asynchronously can lead to significant performance boosts. 
The program on the middle of Fig.~\ref{fig:example} is an example of an asynchronization defined by this substitution.
%of the program on the left where the calls to \texttt{StreamReader.ReadToEnd} and \texttt{HttpClient.}\texttt{GetString} are replaced with asynchronous counterparts (assumed to have the same effect). 
 The syntax of \texttt{async}/\texttt{await} imposes that every method that transitively calls one of the substituted methods, i.e., \texttt{Main} and \texttt{RdFile}, must also be declared as asynchronous. Then, every asynchronous call must be followed by an \texttt{await} that specifies the control location where that task should have completed. % (e.g., the return value should have been computed). 
 For instance, the \texttt{await} for \texttt{Read\-ToEndAsync} is placed at line~\ref{ln:AwaitT6} since the next instruction (at line~\ref{ln:return:InvocT6}) uses the computed value. Therefore, synthesizing such refactoring reduces to finding a correct placement of \texttt{await}s (that implies equivalence) for every call of a method that transitively calls a substituted method (we do not consider ``deeper'' refactoring like rewriting conditionals or loops). 

%Note that asynchronous programs that rely exclusively on async/await are deadlock-free by definition. Deadlocks can occur in a mix of async/await with "explicit" multi-threading that includes blocking *wait* primitives. 

We consider an equivalence relation between a synchronous program and an asynchronization that corresponds to absence of data races in the asynchronization. Data race free asynchronizations are called \emph{sound}. Relying on absence of data races avoids reasoning about equality of sets of reachable states which is harder in general, and an established compromise in reasoning about concurrency.  
% instead of a more precise equivalence relation like equality of reachable sets of states could prevent enumerating some number of  asynchronizations that reach the same set of states as the original synchronous program. However, checking equality of reachable sets of states is known to be hard in general, and relying on absence of data races is a well established compromise. 
For instance, the asynchronization  
%TODO WHY ALL ?
%The program on 
in Fig.~\ref{fig:example} is sound because the call to \texttt{RdFile} accessing \texttt{x} finishes before the read of \texttt{x} in \texttt{Main} (line~\ref{ln:ReadX1}). Therefore, accesses to \texttt{x} are performed in the same order as in the synchronous program.

The asynchronization on 
%TODO WHY ALL ?
%The program on 
the right of Fig.~\ref{fig:example}  is not the only sound (data-race free) asynchronization of the program on the left. The \texttt{await} at line~\ref{ln:AwaitT6} can be moved one statement up (before the read of \texttt{x}) and the resulting program remains equivalent to the sequential one. In this paper, we investigate the problem of enumerating \emph{all} sound asynchronizations of a sequential program $P$ w.r.t. substituting a set of methods with asynchronous versions. This makes it possible to deal separately with the problem of choosing the best asynchronization in terms of performance based on some metric (e.g., performance tests). 
%In this work, we consider the problem of enumerating \emph{all} sound asynchronizations of a sequential program $\aprog$ w.r.t. substituting a set of methods with asynchronous versions. This problem reduces to finding all possible placements of awaits that do not introduce data races.

\begin{figure}[t]
    %\centering
\lstset{basicstyle=\ttfamily\scriptsize,numbers=left,
        stepnumber=1,numberblanklines=false,mathescape=true,morekeywords  ={ async, Task }}
\hspace{3mm}
\begin{minipage}[t]{0.30\textwidth}
\begin{lstlisting}[numbersep=2pt]
async Task Main() {
 var t1 = Foo();

 await t1;
                  }
                  
async Task Foo() {
 var t = IO();

 Thread.Sleep(200);

 Thread.Sleep(200); 
 await t;        }
 
async Task IO() {
 var t0 = Task.Delay(300);

 await t0;      }
\end{lstlisting}
\end{minipage}
\hfill
\begin{minipage}[t]{0.31\textwidth}
    \begin{lstlisting}[numbersep=2pt]
async Task Main() {
 var t1 = Foo();
 var t2 = IO();
 await t1;
 await t2;        }
 
async Task Foo() {
 var t = IO();
 
 Thread.Sleep(200);
 await t;
 Thread.Sleep(200); 
                  }
                  
async Task IO() {
 var t0 = Task.Delay(300);
    
 await t0;      }
\end{lstlisting}
\end{minipage}
\hfill
\begin{minipage}[t]{0.31\textwidth}
\begin{lstlisting}[xleftmargin=2.5mm,numbersep=2pt]
async Task Main() {
 var t1 = Foo();
 var t2 = IO();
 await t1;
 await t2;        }
 
async Task Foo() {
 var t = IO();
 await t;
 Thread.Sleep(200);
 
 Thread.Sleep(200); 
                 }
                 
async Task IO() {
 var t0 = Task.Delay(300);
 Thread.Sleep(150);
 await t0;      }
\end{lstlisting}
\end{minipage}

\vspace{-4mm}
    \begin{center}
    \includegraphics[width=\linewidth]{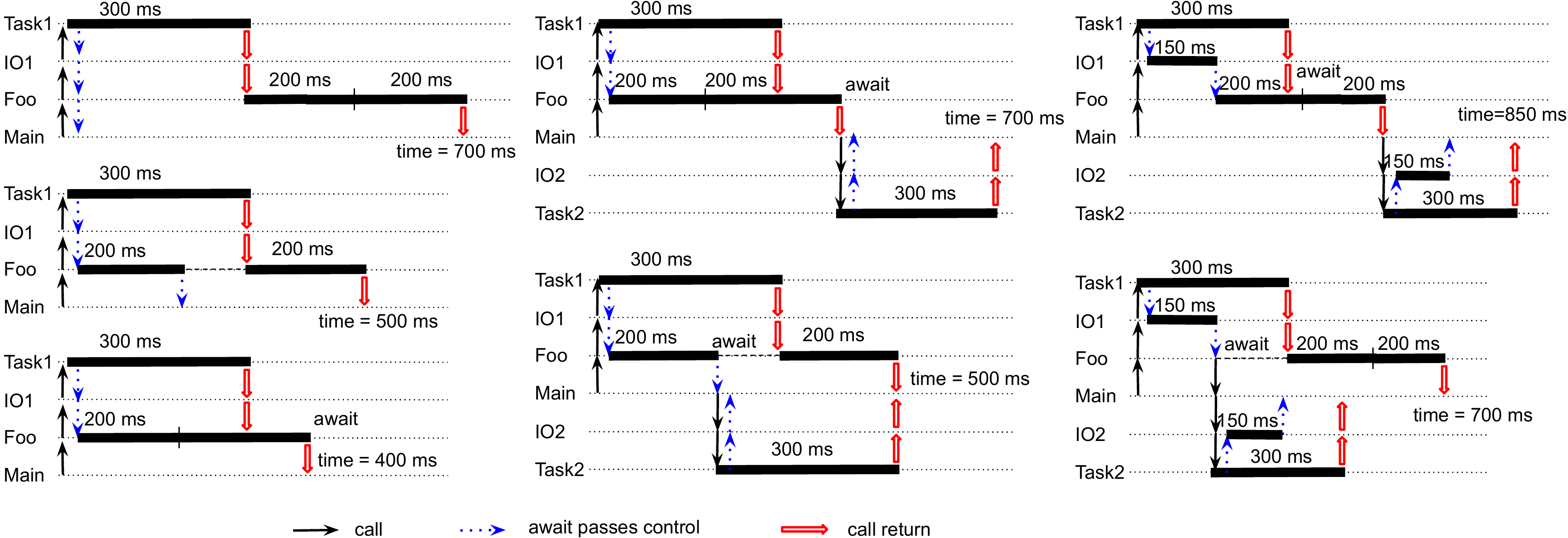}
%    \caption{Executions.}
%    \label{fig:per}
    \end{center}
\vspace{-0.4cm}
\caption{Asynchronous C\# programs and executions. On the bottom, time durations of executing code blocks from the same method are aligned horizontally, and time goes from left to right. Vertical single-line arrows represent method call steps, dashed arrows represent \texttt{await}s passing control to the caller, and double-line arrows represent a call return. Total execution time is marked \texttt{time=...}.}
\label{fig:performance}
\vspace{-0.6cm}
\end{figure}

%\begin{figure}
%    \begin{center}
%    \includegraphics[width=\linewidth]{performance.pdf}
%    \caption{Executions.}
%    \label{fig:per}
%    \end{center}
%    \vspace{-5mm}
%\end{figure}

Identifying the most efficient asynchronization is difficult and can not be done syntactically. 
%In general, evaluating the performance of an asynchronous program in a static manner. Estimating the time of executing IO operations, e.g., http connections, is difficult and the placement 
It is tempting to consider that increasing the distance between calls and matching \texttt{await}s so that more of the caller code is executed while waiting for an asynchronous task to finish increases performance. However, this is not true in general.
%has an impact on performance. Increasing this distance could mean that more of the caller code is executed while waiting for an asynchronous task to finish. This is however not true in general.
%It is not possible to define a rule of \texttt{awaits} placements to obtain the optimal asynchronization in terms of performance. 
We use the programs in Fig.~\ref{fig:performance} to show that the best \texttt{await} placement w.r.t. performance depends on execution times of code blocks in between calls and \texttt{await}s in a non-trivial manner. Note that estimating these execution times, especially for IO operations like http connections, can not be done statically.

The programs in Fig.~\ref{fig:performance} use \texttt{Thread.Sleep(n)} to abstract sequential code executing in $n$ milliseconds and \texttt{Task.Delay(n)} to abstract an asynchronous call executing in $n$ milliseconds on a different thread. The functions named \texttt{Foo} differ only in the position of \texttt{await t}. We show that modifying this position worsens execution time in each case.
For the left program, best performance corresponds to maximal distance between \texttt{await t} in \texttt{Foo} and the corresponding call. This allows the \texttt{IO} call to execute in parallel with the caller, as depicted on the bottom-left of Fig.~\ref{fig:performance}. The executions corresponding to the other two positions of \texttt{await t} are given just above.
For the middle program, placing \texttt{await t} in between the two code blocks in \texttt{Foo} optimizes performance (note the extra \texttt{IO} call in \texttt{Main}): %Placing \texttt{await t} in the middle 
the \texttt{IO} call in \texttt{Foo} executes in parallel with the first code block in \texttt{Foo} and the \texttt{IO} call in \texttt{Main} executes in parallel with the second one. This is depicted on the bottom-middle of Fig.~\ref{fig:performance}. The execution above shows that placing \texttt{await t} as on the left (after the two code blocks) leads to worse execution time (placing \texttt{await t} immediately after the call is also worse). Finally, for the right program, placing \texttt{await t} immediately after the call is best (note that \texttt{IO} executes another code block before \texttt{await}). 
The \texttt{IO} call in \texttt{Main} executes in parallel with \texttt{Foo} as shown on the bottom-right of Fig.~\ref{fig:performance}. The execution above shows the case where \texttt{await t} is placed in the middle (the await has no effect because \texttt{IO} already finished, and \texttt{Foo} continues to execute). This leads to worse execution time (placing \texttt{await t} after the two code blocks is also worse). These differences in execution times have been confirmed by running the programs on a real machine.

As demonstrated by the examples in Fig.~\ref{fig:performance}, the performance of an asynchronization depends on the execution environment, e.g.,  the overhead of IO operations like http connections and disk access (in Fig.~\ref{fig:performance}, we use \texttt{Thread.Sleep(n)} or \texttt{Task.Delay(n)} to model such overheads). Since modeling the behavior of an execution environment w.r.t.  performance is difficult in general, selecting the most performant asynchronization using static reasoning is also difficult. As a way of sidestepping this difficulty, we focus on enumerating \emph{all} sound asynchronizations that allows to evaluate performance separately in a dynamic manner using performance tests for instance (for each sound asynchronization).
%Our focus on enumerating \emph{all} sound asynchronizations is motivated by generality since it enables a synthesis procedure that is agnostic to a particular execution environment: enumerate  
%Therefore, a \emph{generic} procedure for synthesizing a performance-optimal sound asynchronization, which is agnostic to the execution environment, has to rely on enumerate all sound asynchronizations.}  

In the worst-case, the number of (sound) asynchronizations is exponential in the number of method calls in the program. Therefore, we focus on the \emph{delay complexity} of the problem of enumerating sound asynchronizations, i.e., the complexity of the delay between outputting two consecutive (distinct) solutions, and show that this is polynomial time modulo an oracle for solving reachability (assertion checking) in \emph{sequential} programs. Note that a trivial enumeration of all asynchronizations and checking equivalence for each one of them has an exponential delay complexity modulo an oracle for checking equivalence. 

%We show that both the delay complexity of the enumeration problem, and the complexity of computing an maximal asynchronization are polynomial time modulo an oracle for solving reachability (assertion checking) in \emph{sequential} programs (they both reduce to a quadratic number of reachability queries). 
As an intermediate step, we consider the problem of computing \emph{maximal} sound asynchronizations that maximize the distance between every call and its matching \texttt{await}. We show that rather surprisingly, 
%The former relies on the latter via a rather surprising result, which differs from other concurrency synthesis problems (e.g., insertion of locks), which is that 
there exists a \emph{unique} maximal sound asynchronization. This is not trivial since asynchronizations can be incomparable w.r.t. distances between calls and \texttt{await}s (i.e., better for one \texttt{await} and worse for another, and vice-versa).
%
%The code in between these two statements can execute in parallel with the awaited task. Note however that it is hard to argue that that maximal asynchronizations translate always to optimal performance in practice. For instance, the two programs in the middle and on the right of Fig.~\ref{fig:performance} with optimal performance do not correspond to maximal asynchronizations. 
%
This holds even if maximality is relative to a given asynchronization $\Aprog$ imposing an upper bound on the distance between awaits and calls. In principle, avoiding data races could reduce to a choice between moving one await or another closer to the matching call. We show that this is not necessary because the maximal asynchronization is required to be equivalent to a \emph{sequential} program, which %is %deterministic and 
executes statements in a fixed order. 
%The supplementary material contains a discussion of the related problem of synthesizing sound multi-threaded refactorings, where every method call is executed by a different thread. We show that the techniques used to compute asynchronizations can be extended to this case as well. 
% show the robustness of these results, we also investigate the related problem of synthesizing sound multi-threaded refactorings, where every method call is executed by a different thread. We show that the techniques used to compute asynchronizations can be extended to this case as well. 
% complexity bounds apply to this case well.
%has a single execution; and 2) the control-flow imposed by awaits, i.e., passing the control from callee to caller.

As a more pragmatic approach, we define a procedure for computing sound asynchronizations which relies on a bottom-up interprocedural data-flow analysis. The placement of awaits is computed by traversing the call graph bottom up
%, from ``base'' methods that do not call any other method in the program, to methods that call only base methods, and so on. Each method $\ameth$ is considered only once, and the placement of awaits in $\ameth$ is derived based on 
and using a data-flow analysis that computes read or write accesses made in the callees. We show that this procedure computes maximal sound asynchronizations of abstracted programs where every Boolean condition % in if-then-else constructs or while loops 
is replaced with non-deterministic choice. These asynchronizations are sound for the concrete programs as well. This procedure enables a polynomial-time delay enumeration of sound asynchronizations of abstracted programs.

% (explain, maybe on a small example). A naive algorithm that enumerates asynchronizations and checks equivalence would be at least exponential-time (depending on the difficulty of checking equivalence). We describe an \emph{enumeration algorithm} that enumerates all equivalent asynchronizations that has \emph{polynomial delay} complexity (the time between outputting two solutions is polynomial in the size of the program). 
%
%Give a rough idea that we traverse the space of solutions trying to maximize the distance between a call and the corresponding await (this maximizes parallelism, at least from a theoretical point of view) starting from an initial await placement. 
%
%This solution is possible also because we consider an equivalence relation between programs based on data-race freeness: an asynchronization is equivalent to the original synchronous implementation if it has no data races. An interesting result is the uniqueness of the global optimum (even the relative one).
%Next, we reason on program abstractions where we replace any Boolean condition with a non-deterministic choice, and introduce a syntactic way of finding data races (it's probably not important to discuss about the order between data races and the actual repair of moving awaits - this can be left for the overview).

We implemented the asynchronization enumeration based on data-flow analysis in a prototype tool for C\# programs. We evaluated this implementation on a number of non-trivial programs extracted from open source repositories 
%This evaluation shows that sound asynchronizations can be enumerated efficiently and in some cases, we found new asynchronizations that increase the amount of parallelism (the distance between calls and awaits). This demonstrates 
to show that our techniques have the potential to become the basis of refactoring tools that allow programmers to improve their usage of async/await primitives.

In summary, this paper makes the following contributions:
\vspace{-1.5mm}
\begin{itemize}
    \item Define the problem of data race-free (sound) asynchronization synthesis for refactoring sequential code to equivalent asynchronous code (Section~\ref{sec:4}). % (\S~\ref{sec:4}).
    \item Show that the problem of computing a sound asynchronization that maximizes the distance between calls and awaits has a unique solution  (Section~\ref{sec:synthe-corrctProg}).
    \item The delay complexity of sound asynchronization synthesis (Sections~\ref{sec:6}--\ref{sec:asymComplexity}).  %Investigate the problem of enumerating all sound asynchronizations.
%    \item Develop an exact algorithm for solving the problem and give the problem complexity.
    \item A pragmatic algorithm for computing sound asynchronizations based on a data-flow analysis  (Section~\ref{sec:analysis}).
    \item A prototype implementation of this algorithm and an evaluation of this prototype on a benchmark of non-trivial C\# programs (Section~\ref{sec:experiments}). % extracted from open-source repositories. 
\vspace{-3mm}
\end{itemize}
Additional formalization and proofs are included in the appendix.

    %\input overview-new
    %!TEX root = draft.tex
\vspace{-5pt}
\section{Asynchronous Programs}\label{sec:1}
\vspace{-3pt}

We consider a simple programming language to formalize our approach, shown in Fig.~\ref{Figure:syntax}. 
A \emph{program} is a set of methods, including a distinguished \texttt{main}, 
which are classified as \emph{synchronous} or \emph{asynchronous}.  
Synchronous methods %execute immediately as they are invoked and 
run continuously until completion when they are invoked.
Asynchronous methods, marked using the keyword \texttt{async}, 
can run only partially and be interrupted when executing an \texttt{await}. 
Only asynchronous methods can use \texttt{await}, and all methods 
using \texttt{await} must be defined as asynchronous. We assume that methods are not (mutually) recursive.
A program is called \emph{synchronous} if it is a set of synchronous methods. 

A method is defined by a name from a set $\mathbb{M}$ and 
a list of statements over 
%These statements use 
a set $\mathbb{PV}$ of \emph{program variables}, which 
can be accessed from different methods (ranged over using $x$, $y$, $z$,$\ldots$), 
and a set $\mathbb{LV}$ of method \emph{local variables} (ranged over using $r$, $r_1$, $r_2$,$\ldots$).  
Input/return parameters are modeled using program variables. 
Each method call returns a \emph{unique task identifier} from a set $\mathbb{T}$, 
used to record control dependencies imposed by \plog{await}s 
(for uniformity, synchronous methods return a task identifier as well). 
Our language includes assignments, 
\plog{await}s, \plog{return}s, loops, and conditionals. 
%We assume that variables take values from a data domain $\mathbb{D}$, which includes $\mathbb{T}$ to 
%account for variables storing task identifiers.
Assignments to a local variable $r := x$, where $x$ is a program variable, 
are called \emph{reads} of $ x$, and assignments to a program 
variable $ x :=  le$ ($le$ is an expression over local variables) are called \emph{writes} 
to $x$. A \emph{base} method is a method whose body does \emph{not} contain method calls.

\begin{figure}[t]
  %\vspace{-20pt}
  {\footnotesize
  \setlength{\grammarindent}{9em}
  \setlength{\grammarparsep}{0.1cm}
  \begin{grammar}
  <prog> ::= \plog{program}  <md>
  
  <md> ::= \plog{method} <m> \{ <inst> \} |  \plog{async\ method} <m> \{ <inst> \} | <md> <md>
  
  <inst> ::=  <x> ":=" <le> | <r> ":=" <x> | <r> := \plog{call} <m> | \plog{return} | \plog{await} <r> | \plog{await} $*$ | \plog{if} <le> \{<inst>\} \plog{else} \{<inst>\} | \plog{while} <le> \{<inst>\} | <inst> ; <inst>
  \end{grammar}}
  \vspace{-5mm}
  \caption{Syntax. $\langle m\rangle$, $\langle x \rangle$, and $\langle r \rangle$ represent method names, program and local variables, resp. $\langle le \rangle$ is an expression over local variables, or $*$ which is non-deterministic choice.
  }
  \label{Figure:syntax}
  \vspace{-5mm}
  \end{figure}

\begin{comment}
\begin{figure}[t]
{\footnotesize
\setlength{\grammarindent}{9em}
\setlength{\grammarparsep}{0.1cm}
\begin{grammar}
<prog> ::= \plog{program}  <md>

<md> ::= \plog{method} <m> { <inst>  } |  \plog{async\ method} <m> { <inst> } | <md> <md>

<inst> ::=  <x> ":=" <le> | <r> ":=" <x> | <r> := \plog{call} <m> | \plog{return} | \plog{await} <r> | \plog{await} $*$ | \plog{if} <le> \{<inst>\} \plog{else} \{<inst>\} 
     \alt\plog{while} <le> \{<inst>\} | <inst> ; <inst>
\end{grammar}}
\vspace{-2mm}
\caption{Syntax. $\langle m\rangle$, $\langle x \rangle$, and $\langle r \rangle$ represent method names, program and local variables, resp. $\langle le \rangle$ is an expression over local variables, or $*$ which is non-deterministic choice.
}
\label{Figure:syntax}
\vspace{-0.5cm}
\end{figure}
\end{comment}

%\vspace{1mm}
\noindent
\textbf{Asynchronous methods.} Asynchronous methods can use \plog{await}s to wait for the completion of a task (invocation) while \emph{the control is passed to their caller}. The parameter  $r$ of the \plog{await} specifies the id of the awaited task. As a sound abstraction of awaiting the completion of an IO operation (reading or writing a file, an http request, etc.), which we do not model explicitly, we use a variation \plog{await} $*$. This has a non-deterministic effect of either continuing to the next statement in the same method (as if the IO operation already completed), or passing the control to the caller (as if the IO operation is still pending).  

%TODO Example for  \plog{await} $*$
%\begin{comment}
\begin{wrapfigure}{r}{0.32\textwidth}
\vspace{-1.5cm}
\lstset{basicstyle=\ttfamily\scriptsize,numbers=none,
          stepnumber=1,numberblanklines=false,mathescape=true,morekeywords={method,async,await}}
\hspace{-11mm}
\begin{minipage}[t]{0.47\textwidth}
\begin{lstlisting}  
async method ReadToEndAsync() {       
 await $*$;
 ind = Stream.index;
 len = Stream.content.Length;
 if (ind >= len)
  retVal  =  ""; return
 Stream.index = len;     
 retVal = Stream.content(ind,len);
 return                       }
\end{lstlisting}
\end{minipage}
\vspace{-0.6cm}
\caption{An IO method.}
\label{fig:IO-modeling}
\vspace{-.99cm}
\end{wrapfigure}
%\end{comment}
Fig.~\ref{fig:IO-modeling} lists our modeling of the IO method \texttt{Read\-ToEnd\-Async} used in Fig.~\ref{fig:example}. We use program variables to represent system resources such as the file system. The await for the completion of accesses to such resources is modeled by \plog{await} $*$.  This enables capturing racing accesses to system resources in asynchronous executions. Parameters or return values are modeled using program variables. \texttt{ReadToEndAsync} is modeled using reads/writes of the index/content of the input stream, and \plog{await} $*$ models the await for their completion.

We assume that the body of every asynchronous method $\ameth$ satisfies several well-formedness 
syntactic constraints, defined on its control-flow graph (CFG). 
We recall that each node of the CFG represents a basic block of code (a maximal-length sequence of branch-free code), and nodes are connected by directed edges which represent a possible transfer of control between blocks. Thus,
\begin{enumerate}[noitemsep,topsep=1pt]
	\item every call $r := \mathtt{call}\ \ameth'$ uses a distinct variable $r$ (to store task identifiers),
	\item every CFG block containing an $\mathtt{await}\ r$ is dominated by the CFG block containing the call $r := \mathtt{call}\ \ldots$ (i.e., every CFG path from the entry to the await has to pass through the call),
	\item every CFG path starting from a block containing a call $r := \mathtt{call}\ \ldots$ to the exit has to pass through an $\mathtt{await}\ r$ statement.
\end{enumerate}
\begin{comment}
\begin{wrapfigure}{r}{0.65\textwidth}
\vspace{-0.7cm}
\lstset{basicstyle=\ttfamily\scriptsize,numbers=none,
          stepnumber=1,numberblanklines=false,mathescape=true,morekeywords={method,async,await}}
\begin{minipage}[l]{0.323\linewidth}
\begin{lstlisting}  
async method Main {         
  while $*$
    r = call m;

  await r;
}  
\end{lstlisting}
\end{minipage}
\hfill
\begin{minipage}{0.323\linewidth}
\begin{lstlisting} 
async method Main {   
  r = call m; 

  if $*$
    await r;
} 
\end{lstlisting}
\end{minipage}
\hfill
\begin{minipage}[r]{0.323\linewidth}
\begin{lstlisting} 
async method Main { 
  r = call m; 
  while $*$
    r' = call m;
    await r';
  await r;   
} 
\end{lstlisting}
\end{minipage}
\vspace{-0.7cm}
  \caption{Examples of programs} % (the code of $\ameth$ is irrelevant)
  \label{fig:synatactic-examples}
  \vspace{-0.4cm}
\end{wrapfigure}
\end{comment}
The first condition simplifies the technical exposition, while the last two ensure that $r$ stores a valid task identifier when executing an $\mathtt{await}\ r$, and that every asynchronous invocation is awaited before the caller finishes. Languages like C\# or Javascript do not enforce the latter constraint, but it is considered bad practice due to possible exceptions that may arise in the invoked task and are not caught. 
We forbid passing task identifiers as method parameters (which is possible in C\#). A statement $\mathtt{await}\ r$  is said to \emph{match} a statement $r := \mathtt{call}\ \ameth'$. 

\begin{wrapfigure}{r}{0.59\textwidth}
  \vspace{-1.1cm}
  \lstset{basicstyle=\ttfamily\scriptsize,numbers=none,
            stepnumber=1,numberblanklines=false,mathescape=true,morekeywords={method,async,await}}
  \begin{minipage}[l]{0.323\linewidth}
  \begin{lstlisting}  
async method m {         
  while $*$
    r = call m1;

  await r;
}  
  \end{lstlisting}
  \end{minipage}
  \hfill
  \begin{minipage}{0.323\linewidth}
  \begin{lstlisting} 
async method m {   
  r = call m1; 

  if $*$
    await r;
} 
  \end{lstlisting}
  \end{minipage}
  \hfill
  \begin{minipage}[r]{0.323\linewidth}
  \begin{lstlisting} 
async method m { 
  r = call m1; 
  while $*$
    r' = call m1;
    await r';
  await r;   
} 
  \end{lstlisting}
  \end{minipage}
  \vspace{-0.9cm}
    \caption{Examples of programs} % (the code of $\ameth$ is irrelevant)
    \label{fig:synatactic-examples}
    \vspace{-0.9cm}
  \end{wrapfigure}
In Fig.~\ref{fig:synatactic-examples}, we give three examples of programs to explain in more details the well-formedness 
syntactic constraints. The program on the left of Fig.~\ref{fig:synatactic-examples} does not satisfy the second condition since \texttt{await} \texttt{r} can be reached without entering the loop. The program in the center of Fig.~\ref{fig:synatactic-examples} does not satisfy the third condition since we can reach the end of the method without entering the if branch and thus, without executing \texttt{await} \texttt{r}. The program on the right of Fig.~\ref{fig:synatactic-examples} satisfies both conditions.

%For example, the program on the left of Fig.~\ref{fig:synatactic-examples} does not satisfy the second condition above since \texttt{await} \texttt{r} can be reached without entering the loop. The program in the center of Fig.~\ref{fig:synatactic-examples} does not satisfy the third condition since we can reach the end of the method without entering the if branch and thus, without executing \texttt{await} \texttt{r}. The program on the right of Fig.~\ref{fig:synatactic-examples} satisfies both conditions. %The program on the right of Fig.~\ref{fig:synatactic-examples} satisfies both conditions.

\noindent
\textbf{Semantics.}
A program configuration is a tuple $(\gsconf, \rtaskconf, \wtaskconf,\returnconf, \allowbreak \callrel, \awaitrel)$ where $\gsconf$ is composed of the valuation of the program variables excluding the program counter, % and the valuation of environment variables, e.g., files, 
$\rtaskconf$ is the call stack, $\wtaskconf$ is the set of asynchronous tasks, e.g., continuations predicated on the completion of some method call, $\returnconf$ is the set of completed tasks, $\callrel$ represents the relation between a method call and its caller, and $\awaitrel$ represents the control dependencies imposed by \plog{await} statements. 
%As expected, $\wtaskconf$ is empty in configurations of synchronous programs.
The activation frames in the call stack and the asynchronous tasks are represented using triples $(\aniden, \ameth,\lsconf)$ where $\aniden\in\mathbb{T}$ is a task identifier, $\ameth\in \mathbb{M}$ is a method name, and $\lsconf$ is a valuation of local variables, including as usual a dedicated program counter.
%and $\isIO$ is a boolean flag\footnote{We use $\top$ and $\bot$ to denote the true and false truth values.} set to true iff the task was interrupted by an \plog{await} $*$ statement.
The set of completed tasks is represented as a function $\returnconf:\mathbb{T}\rightarrow\{\top,\perp\}$ such that $\returnconf(\aniden)=\top$ when $\aniden$ is completed and $\returnconf(\aniden)=\perp$, otherwise.
We define $\callrel$ and $\awaitrel$ as partial functions $\mathbb{T}\rightharpoonup \mathbb{T}$ with the meaning that $\callrel(\aniden)=\anidenp$, resp., $\awaitrel(\aniden)=\anidenp$, iff $\aniden$ is called by $\anidenp$, resp., $\aniden$ is waiting for $\anidenp$. 
We set $\awaitrel(\aniden)=*$ if the task $\aniden$ was interrupted because of an $\mbox{\plog{await}}\ *$ statement. 

The semantics of a program $\aprog$ is defined as a labeled transition system (LTS) $[\aprog] =(\pstatesconf,\actionsconf,\psconf_0,\rightarrow)$ where $\pstatesconf$ is the set of program configurations, $\actionsconf$ is a set of transition labels called \emph{actions}, $\psconf_0$ is the initial configuration, and $\rightarrow\subseteq \pstatesconf\times \actionsconf\times \pstatesconf$ is the transition relation. Each program statement is interpreted as a transition in $[\aprog]$. The set of actions is defined by ($\actionsid$ is a set of action identifiers):
\vspace{-2mm}
\begin{align*}
  \actionsconf = & \{ (\actiden, \aniden, \event): \actiden \in \actionsid, \aniden \in \mathbb{T}, \event\in\{ \loadact(\anaddr), \storeact(\anaddr), \callact(\anidenp), \awaitact(\anidenk), \returnact,   \\ & \continueact: \anidenp \in \mathbb{T}, \anidenk \in \mathbb{T}\cup\{*\}, \anaddr\in \mathbb{PV}\}\} 
\vspace{-3mm}
\end{align*}

The transition relation $\rightarrow$ is defined in Fig.~\ref{Table:SemanticsRules}. Transition labels are written on top of $\rightarrow$. 
%\textcolor{red}{For a transition labeled by $(\actiden, \aniden, \event)$, $\actiden \in \actionsid$ is the unique identifier of the transition. We define the following mapping $\acttost$ that maps every action $\action$ in $\actionsconf \setminus \{ (\_, \_, \continueact)\}$ to a corresponding program statement $\acttost(\action) = \statement$.}  

Transitions labeled by $(\actiden, \aniden, \loadact(\anaddr))$ and $(\actiden, \aniden, \storeact(\anaddr))$ represent a read and a write accesses to the program variable $\anaddr$, respectively, executed by the task (method call) with identifier $\aniden$. 
%An $(\aniden, \loadact(\anaddr))$ transition reads the value of a program variable and stores it in a local variable, while an $(\aniden, \storeact(\anaddr))$ transition writes to a program variable $x$ the value of an expression over local variables.
%the value of a local variable and store it in .
A transition labeled by $(\actiden, \aniden, \callact(\anidenp))$ corresponds to the fact that task $\aniden$ executes a method call that results in creating a task $\anidenp$.
% a method call executed by the task with id $\aniden$  that creates another  task with the  unique id $\anidenp$ from the set $\mathbb{T}$.
Task $\anidenp$ is added on the top of the stack of currently executing tasks, declared pending (setting $\returnconf(j)$ to $\perp$), and $\callrel$ is updated to track its caller ($\callrel(\anidenp)=\aniden$).
%We assume that calls to synchronous method are aligned. Thus, in our semantics we don't consider synchronous calls.
A transition $(\actiden, \aniden, \returnact)$ represents the return from task $\aniden$.
Task $\aniden$ is removed from the stack of currently executing tasks, and $\returnconf(\aniden)$ is set to $\top$ to record the fact that task $\aniden$ is finished.

A transition $(\actiden, \aniden, \awaitact(\anidenp))$ relates to task $\aniden$ waiting asynchronously for task $\anidenp$. Its effect depends on whether task $\anidenp$ is already completed. If this is the case (i.e., $\returnconf[\anidenp]=\top$), task $\aniden$ continues and executes the next statement.
%% in the task with id $\aniden$ for the return of the task identified by $\anidenp$.
%Its corresponding transition will try to wait for the result of the task $\anidenp$ and when the task $\anidenp$ is already finished and its result is ready (i.e., $\returnconf[\anidenp]$  is equal to $\top$), the task executing the transition continues and executes the next statement. 
Otherwise, task $\aniden$ executing the $\awaitact$ is removed from the stack and added to the set of pending tasks, and $\awaitrel$ is updated to track the waiting-for relationship ($\awaitrel(\aniden)=\anidenp$). 
%Updating $\awaitrel$ signals that the task with the identifier $\aniden$ is waiting for the return of the task with the identifier $\anidenp$.
Similarly, a transition $(\actiden, \aniden, \awaitact(*))$ corresponds to task $\aniden$ waiting asynchronously for the completion of an unspecified task. Non-deterministically, task $\aniden$ continues to the next statement, or task $\aniden$ is interrupted and transferred to the set of pending tasks ($\awaitrel(\aniden)$ is set to $*$).

A transition $(\actiden, \aniden, \continueact)$ represents the scheduling of the continuation of task $\aniden$. There are two cases depending on whether $\aniden$ waited for the completion of another task $\anidenp$ modeled explicitly in the language (i.e., $\awaitrel(\aniden)=\anidenp$), or an unspecified task (i.e., $\awaitrel(\aniden)=*$). In the first case, the transition is enabled only when the call stack is empty and $\anidenp$ is completed. In the second case, the transition is always enabled. The latter models the fact that methods implementing IO operations (waiting for unspecified tasks in our language) are executed in background threads and can interleave with the main thread (that executes the \texttt{Main} method). Although this may seem restricted because we do not allow arbitrary interleavings between IO methods and \texttt{Main}, this is actually sound when focusing on the existence of data races as in our approach. As shown later in Table~\ref{Table:TracesOrders}, any two instructions that follow an \plog{await} $*$ are not happens-before related and form a race.

By the definition of $\rightarrow$, every action $\action\in \actionsconf \setminus \{ (\_, \_, \continueact)\}$ corresponds to executing some statement in the program, which is denoted by $\acttost(a)$.

\begin{figure}[!t]
  \small\addtolength{\tabcolsep}{-5pt}
  \therules{
  \scriptsize
  \therule
  {$\text{\plog{\theload{\areg}{\anaddr}}}\in\instrOf(\lsconf(\pcconf))$\quad
  $\actiden\in\mathbb{\actionsid}$ fresh \quad 
  $\lsconf' = \lsconf[\areg \mapsto \gsconf(\anaddr),\pcconf\mapsto \mathsf{next}(\lsconf(\pcconf))]$}
  {$(\gsconf, (\aniden,\ameth,\lsconf) \circ \rtaskconf, \_, \_, \_, \_)
  \mpitrans{(\actiden, \aniden, \loadact(\anaddr))}
   (\gsconf, (\aniden,\ameth,\lsconf') \circ \rtaskconf, \_, \_, \_, \_)$}
  \scriptsize
  \therule
  {$\text{\plog{\thestore{\anaddr}{le}}}\in\instrOf(\lsconf(\pcconf))$\quad
  $\actiden\in\mathbb{\actionsid}$ fresh \quad 
  $\lsconf' = \lsconf[\pcconf\mapsto \mathsf{next}(\lsconf(\pcconf))]$ \quad
  $\gsconf' = \gsconf[\anaddr \mapsto \lsconf(\text{\plog{le}})]$}
  {$(\gsconf, (\aniden,\ameth,\lsconf) \circ \rtaskconf, \_, \_, \_, \_)
  \mpitrans{(\actiden, \aniden, \storeact(\anaddr))}
  (\gsconf', (\aniden,\ameth,\lsconf') \circ \rtaskconf, \_, \_, \_, \_)$}
  \scriptsize
  %\therule
  \dfrac
  {\splitdfrac{\text{$\theassign{\areg}{\callact\ \ameth}\in\instrOf(\lsconf(\pcconf))$\quad
  $\actiden\in\mathbb{\actionsid}$ fresh \quad 
  $\ell_0 = \mathsf{init}(\gsconf,\ameth)$ \quad
  $\anidenp\in\mathbb{T}$ fresh  }}% \quad
  {\text{$\ell' = \lsconf[\areg\mapsto\anidenp,\pcconf\mapsto \mathsf{next}(\lsconf(\pcconf))]$ \quad 
  $\returnconf' = \returnconf[\anidenp \mapsto \perp]$ \quad $\callrel' = \callrel[\anidenp \mapsto \aniden]$}}}
  {\text{$(\gsconf, (\aniden, \ameth',\lsconf) \circ \rtaskconf, \_, \returnconf, \callrel, \_)
  \mpitrans{(\actiden, \aniden, \callact(\anidenp))}
  (\gsconf, (\anidenp,\ameth,\lsconf_0) \circ (\aniden,\ameth',\ell') \circ \rtaskconf, \_, \returnconf', \callrel', \_)$}}
  \\[1.5em]
  \scriptsize
  \therule
  {$\text{\plog{\returnact}}\in\instrOf(\lsconf(\pcconf))$ \quad
  $\actiden\in\mathbb{\actionsid}$ fresh \quad 
  $\returnconf' = \returnconf[\aniden \mapsto \top]$}
  {$(\gsconf, (\aniden, \ameth,\lsconf) \circ \rtaskconf, \_, \returnconf, \_, \_)
  \mpitrans{(\actiden, \aniden, \returnact)}
  (\gsconf, \rtaskconf, \_, \returnconf', \_, \_)$}
  \scriptsize
  \therule
  {$\text{\plog{\awaitact\ \areg}}\in\instrOf(\lsconf(\pcconf))$ \quad
  $\actiden\in\mathbb{\actionsid}$ fresh \quad 
  $\returnconf(\lsconf(\areg)) = \top$ \quad
  $\lsconf' = \lsconf[\pcconf\mapsto \mathsf{next}(\lsconf(\pcconf))]$ }
  {$(\gsconf, (\aniden, \ameth,\lsconf) \circ \rtaskconf, \_, \returnconf, \_, \_)
  \mpitrans{(\actiden, \aniden, \awaitact(\lsconf(\areg)))}
  (\gsconf, (\aniden, \ameth, \lsconf') \circ \rtaskconf, \_, \returnconf, \_, \_)$}
  \scriptsize
  \dfrac
  {\splitdfrac{\text{$\text{\plog{\awaitact\ \areg}}\in\instrOf(\lsconf(\pcconf))$ \quad
  $\actiden\in\mathbb{\actionsid}$ fresh \quad 
  $\returnconf(\lsconf(\areg)) = \perp$ \quad
  $\awaitrel' = \awaitrel[\aniden \mapsto \lsconf(\areg)]$}}%\ \quad}}
  {\text{$\lsconf' = \lsconf[\pcconf\mapsto \mathsf{next}(\lsconf(\pcconf))]$}}}
  {\text{$(\gsconf, (\aniden, \ameth,\lsconf) \circ \rtaskconf, \wtaskconf, \returnconf, \_, \awaitrel)
  \mpitrans{(\actiden, \aniden, \awaitact(\lsconf(\areg)))}
  (\gsconf, \rtaskconf, \{(\aniden, \ameth,\lsconf')\} \uplus \wtaskconf, \returnconf, \_, \awaitrel')$}}
  \\[1.5em]
  \scriptsize
  \therule
  {$\text{\plog{\awaitact\ *}}\in\instrOf(\lsconf(\pcconf))$ \quad
  $\actiden\in\mathbb{\actionsid}$ fresh \quad 
  $\lsconf' = \lsconf[\pcconf\mapsto \mathsf{next}(\lsconf(\pcconf))]$ }
  {$(\gsconf, (\aniden, \ameth,\lsconf) \circ \rtaskconf, \_, \_, \_, \_)
  \mpitrans{(\actiden, \aniden, \awaitact(*))}
  (\gsconf, (\aniden, \ameth, \lsconf') \circ \rtaskconf, \_, \_, \_, \_)$}
  \scriptsize
  \therule
  {$\text{\plog{\awaitact\ *}}\in\instrOf(\lsconf(\pcconf))$ \quad
  $\actiden\in\mathbb{\actionsid}$ fresh \quad 
  $\awaitrel' = \awaitrel[\aniden \mapsto \ *]$ \quad
  $\lsconf' = \lsconf[\pcconf\mapsto \mathsf{next}(\lsconf(\pcconf))]$}
  {$(\gsconf, (\aniden, \ameth,\lsconf) \circ \rtaskconf, \wtaskconf, \_, \_, \awaitrel)
  \mpitrans{(\actiden, \aniden, \awaitact(*))}
  (\gsconf, \rtaskconf, \{(\aniden, \ameth,\lsconf')\} \uplus \wtaskconf, \_, \_, \awaitrel')$}
  \scriptsize
  \therule
  {$\actiden\in\mathbb{\actionsid}$ fresh \quad 
  $\awaitrel(\aniden) = \anidenp$ \quad
  $\returnconf(\anidenp) = \top$
  }
  {$(\gsconf, \epsilon, \{(\aniden, \ameth,\lsconf)\} \uplus \wtaskconf, \returnconf, \_, \awaitrel)
  \mpitrans{(\actiden, \aniden, \continueact)}
  (\gsconf,  (\aniden, \ameth,\lsconf) ,  \wtaskconf, \returnconf, \_,  \awaitrel)$}
  \scriptsize
  \therule
  {$\actiden\in\mathbb{\actionsid}$ fresh \quad 
  $\awaitrel(\aniden) = *$
  }
  {$(\gsconf, \rtaskconf, \{(\aniden, \ameth,\lsconf)\} \uplus \wtaskconf, \_, \_, \awaitrel)
  \mpitrans{(\actiden, \aniden, \continueact)}
  (\gsconf,  (\aniden, \ameth,\lsconf) \circ \rtaskconf,  \wtaskconf, \_, \_,  \awaitrel)$}
  }
  \vspace{-0.7cm}
  \caption{Program semantics. For a function $f$, we use $f[a\mapsto b]$ to denote a function $g$ such that $g(c)=f(c)$ for all $c\neq a$ and $g(a)=b$. The function $\instrOf$ returns the instruction at some given control location while $\mathsf{next}$ gives the next instruction to execute. We use $\circ$ to denote sequence concatenation and $\mathsf{init}$ to denote the initial state of a method call.
  }
  \label{Table:SemanticsRules}
  \vspace{-1cm}
  \end{figure}

An execution of $\aprog$ is a sequence $\rho=\psconf_0\xrightarrow{\action_1}\psconf_1\xrightarrow{\action_2}\ldots$ of transitions starting in the initial configuration $\psconf_0$
  and leading to a configuration $\psconf$ where the call stack and the set of pending tasks are empty. 
   $\pstatesconf[\aprog]$ denotes the set of all program variable valuations included in configurations that are reached in executions of $\aprog$.
  % to a configuration $\psconf \in \pstatesconf[\aprog]$. 
  Since we are only interested in reasoning about the sequence of actions $\action_1\cdot\action_2\cdot\ldots$ labeling the transitions of an execution, we will call the latter an execution as well. The set of executions of a program $\aprog$ is denoted by $\executionsconf(\aprog)$. 
  
  %\vspace{1mm}
  \noindent
  {\bf Traces.} 
  %We define the notion of \emph{trace} which is an abstract representation of an execution. 
  The \emph{trace} of an execution $\rho \in \executionsconf(\aprog)$ is a tuple $\traceof{\rho} = (\rho, \mo{}, \co{}, \so{}, \hbo{})$ of strict partial orders between the actions in $\rho$ defined in Table \ref{Table:TracesOrders}. The \emph{method invocation order} $\mo$ records the order between actions in the same invocation, and the \emph{call order} $\co$ is an extension of $\mo$ that additionally orders actions before an invocation with respect to those inside that invocation. The \emph{synchronous happens-before order} $\so$ orders the actions in an execution as if all the invocations were synchronous (even if the execution may contain asynchronous ones). It is an extension of $\co$ where additionally, every action inside a callee is ordered before the actions following its invocation in the caller. The (asynchronous) \emph{happens-before order} $\hbo{}$ contains typical control-flow constraints: it is an extension of $\co$ where every action $a$ inside an asynchronous invocation is ordered before the corresponding $\mathtt{await}$ in the caller, and before the actions following its invocation in the caller if $a$ precedes the first\footnote{Code in between two awaits can execute before or after the control is returned to the caller, depending on whether the first awaited task finished or not.} $\mathtt{await}$ in $\mo$ (an invocation can be interrupted only when executing an $\mathtt{await}$) or if the callee does not contain an $\mathtt{await}$ (it is  synchronous).
  %Fig. \ref{fig:execution3} shows a trace where two statements are linked by a dotted arrow if the corresponding actions are related by $\mo$, a dashed arrow if the corresponding actions are related by the $\co$ but not by $\mo$, and a solid arrow if the corresponding actions are related by the $\hbo{}$ but not by $\co$. 
  $\tracesconf(\aprog)$ is the set of traces of $\aprog$.

  \begin{table}
    \vspace{-7mm}
    \caption{Strict partial orders included in a trace. $\co{}$, $\so{}$, and $\hbo{}$ are the smallest satisfying relations.} \label{Table:TracesOrders}
    \vspace{-0mm}
  {\footnotesize
  \begin{tabular}{|c|c|}
    \hline
    $\action_1 <_{\rho} \action_2$ & $\action_1$ occurs before  $\action_2$ in $\rho$ and $\action_1 \neq \action_2$\\[.5mm] \hline
    $\action_1 \sim \action_2$ & $\action_1=(\_,\aniden, \_)$ and $\action_2=(\_,\aniden, \_)$ \\[.5mm] \hline
    \hline
    $(\action_1,\action_2) \in \mo{}$ & $\action_1 \sim \action_2 \wedge \action_1 <_{\rho} \action_2$ \\ \hline
    $(\action_1,\action_2) \in \co{}$ & $(\action_1,\action_2) \in \mo{} \vee (\action_1 = (\_,\aniden, \callact(\anidenp)) \wedge \action_2 = (\_,\anidenp,\_))$ \\
    & $\vee\ (\exists\ \action_3.\  (\action_1,\action_3) \in \co{} \wedge (\action_3,\action_2) \in \co{})$ \\[.5mm]\hline
    $(\action_1,\action_2) \in \so{}$ & $(\action_1,\action_2) \in \co{} \vee (\exists\ \action_3.\  (\action_1,\action_3) \in \so{} \wedge (\action_3,\action_2) \in \so{})$ \\
    & $\vee\ (\action_1 = (\_,\anidenp, \_) \wedge \action_2 = (\_,\aniden,\_) \wedge \exists\ \action_3 = (\_,\aniden, \callact(\anidenp)).\ \action_3 <_{\rho} \action_2)$ \\[.5mm]\hline
    $(\action_1,\action_2) \in \hbo{}$  & $(\action_1,\action_2) \in \co{} \vee (\exists\ \action_3.\  (\action_1,\action_3) \in \hbo{} \wedge (\action_3,\action_2) \in \hbo{})$ \\ 
     & $\vee\ (\ \action_1 = (\_,\anidenp, \_) \wedge \action_2 = (\_,\aniden,\_) \wedge \exists\ \action_3 = (\_,\aniden, \awaitact(\anidenp)).\ \action_3 <_{\rho} \action_2\ )$ \\
      & $\vee\ (\ \action_1 = (\_, \anidenp, \awaitact(\aniden'))$ is the first await in $\anidenp\ \wedge$\\
     & $\action_2 = (\_,\aniden,\_)  \wedge \exists\ \action_3 = (\_,\aniden, \callact(\anidenp)).\ \action_3 <_{\rho} \action_2\ )$ \\
     & $\vee\ (\ \action_1 = (\_, \anidenp, \_)\ \wedge \not\exists\ (\_, \anidenp, \awaitact(\_)) \in\rho\ \wedge$ \\ 
     & $\action_2 = (\_,\aniden,\_)  \wedge \exists\ \action_3 = (\_,\aniden, \callact(\anidenp)).\ \action_3 <_{\rho} \action_2\ )$ \\ [.5mm]\hline
  \end{tabular}}
  \vspace{-5mm}
  \end{table}

On the right of Fig.~\ref{fig:example}, we show a trace where two statements (represented by the corresponding lines numbers) are linked by a dotted arrow if 
the corresponding actions are related by $\mo$, a dashed arrow if the corresponding actions are related by 
 $\co$ but not by $\mo$, and a solid arrow if the corresponding actions are related by the $\hbo{}$ but not by $\co$.

\vspace{-5pt}
\section{Synthesizing Asynchronous Programs}\label{sec:4}
\vspace{-5pt}
%We define the synthesis problem we investigate in this work. 
Given a synchronous program $\aprog$ and a subset of \emph{base} methods $\alib\subseteq \aprog$,
our goal is to synthesize \emph{all} asynchronous programs $\aprog_a$ that are equivalent to $\aprog$ and that are obtained by substituting every method in $\alib$ with an equivalent \emph{asynchronous} version. The base methods are considered to be models of standard library calls (e.g., IO operations) %from a practical context, 
and asynchronous versions are defined by inserting \plog{await} $*$ statements in their body. % (in the original synchronous code).  
We use $\aprog[\alib]$ to emphasize a subset of base methods $\alib$ in a program $\aprog$. Also, we call $\alib$ a \emph{library}. A library is called (a)synchronous when all methods are (a)synchronous.

%\vspace{-5pt}
%\subsection{Asynchronizations of a Synchronous Program}\label{ssec:transf}

%\begin{comment}

%\end{comment}
\noindent
\textbf{Asynchronizations of a synchronous program.} Let $\aprog[\alib]$ be a synchronous program, and $\alib_a$ a set of asynchronous methods obtained from those in $\alib$ by inserting at least one \plog{await} $*$ statement in their body (and adding the keyword \texttt{async}). Each method in $\alib_a$ corresponds to a method in $\alib$ with the same name, and vice-versa. $\Aprog[\Alib]$ is called an \emph{asynchronization} of $\aprog[\alib]$ with respect to $\alib_a$  
if it is a syntactically correct program obtained by replacing the methods in $\alib$ with those in $\alib_a$ and adding \plog{await} statements as necessary. 
\begin{wrapfigure}{r}{0.57\textwidth}
  \vspace{-0.9cm}
\lstset{basicstyle=\ttfamily\scriptsize,numbers=none,
			  stepnumber=1,numberblanklines=false,mathescape=true,morekeywords={method,async,await}}
\begin{minipage}[l]{0.31\linewidth}
\begin{lstlisting}  
method m {  
  r1 = call m1;

  r2 = x;
            }
method m1 {

  retVal = x;
  x = input;
  return;  }
\end{lstlisting}
\end{minipage}
\hspace{-3mm}
\begin{minipage}[l]{0.34\linewidth}
\begin{lstlisting} 
async method m {  
  r1 = call m1;
  await r1;
  r2 = x;
                 } 
async method m1 {
  await $*$
  retVal = x;
  x = input;
  return;      }
\end{lstlisting}
\end{minipage}
\hspace{0mm}
\begin{minipage}[r]{0.34\linewidth}
\begin{lstlisting} 
async method m {  
  r1 = call m1;

  r2 = x;
  await r1;      } 
async method m1 {
  await $*$
  retVal = x;
  x = input;
  return;      }
\end{lstlisting}
\end{minipage}
\vspace{-0.4cm}
\caption{A program and its asynchronizations.}
\label{fig:asynchronization}
\vspace{-0.6cm}
\end{wrapfigure}  
More precisely, let $\alib^*\subseteq \aprog$ be the set of all methods of $\aprog$ that transitively call methods of $\alib$. Formally, $\alib^*$ is the smallest set of methods that includes $\alib$ and satisfies the following: if a method $\ameth$ calls $\ameth'\in \alib^*$, then $\ameth \in \alib^*$. Then, $\Aprog[\Alib]$ is an \emph{asynchronization} of $\aprog[\alib]$ w.r.t.  $\alib_a$ if it is obtained from $\aprog$ as follows:
\begin{itemize}[noitemsep,topsep=0pt]
  \item Each method in $\alib$ is replaced with the corresponding method from $\alib_a$.
	\item All methods in $\alib^*\setminus \alib$ are declared as asynchronous (because every call to an asynchronous method is followed by an \plog{await} and any method using \plog{await} must be asynchronous). 
	\item For each invocation $\theassign{\areg}{\callact\ \ameth}$ of $\ameth\in \alib^*$, add \texttt{aw\-a\-it} statements $\awaitact\ \areg$ satisfying the well-formedness syntactic constraints described in Section~\ref{sec:1}. 
\end{itemize}
Fig. \ref{fig:asynchronization} lists a synchronous program and its two asynchronizations, wh\-ere $\alib=\{\ameth 1\}$ and  $\alib^*=\{\ameth, \ameth 1\}$. 
%TODO Example
Asynchronizations differ only in the await placement.
 
$\Asyof{\aprog, \alib, \Alib}$ is the set of all asynchronizations of $\aprog[\alib]$ w.r.t. $\alib_a$. 
The \emph{strong} asynchronization $\BAsyof{\aprog, \alib, \Alib}$ is an asynchronization where every \plog{await} \emph{immediately} follows the matching call. It reaches exactly the same set of program variable valuations as $\aprog$.

%\vspace{-5pt}
%\subsection{Problem Definition}\label{sec:probdef}

\noindent
\textbf{Problem definition.} We investigate the problem of enumerating \emph{all} asynchronizations of a given program w.r.t. a given asynchronous library, which are \emph{sound}, in the sense that they do not admit data races. 
Two actions $\action_1$ and $\action_2$ in a trace $\tau= (\rho, \mo{}, \co{}, \so{}, \hbo{})$ are \emph{concurrent} if $(\action_1, \action_2) \not\in \hbo{}$ and $(\action_2, \action_1) \not\in \hbo{}$.

%\vspace{-1mm}
%\begin{definition}[Data Race]
An ansynchronous program $\Aprog$ \emph{admits a data race $(\action_1, \action_2)$}, where $(\action_1, \action_2)\in \so{}$, if $\action_1$ and $\action_2$ are two concurrent actions of a trace $\tau\in \tracesconf(\Aprog)$, and $\action_1$ and $\action_2$ are read or write accesses to the same program variable $\anaddr$, and at least one of them is a write.
%\vspace{-1mm}
%\end{definition}
%
We write data races as ordered pairs w.r.t. $\so{}$ to simplify the definition of the algorithms in the next sections. %For example, the program on the right of Fig. \ref{fig:asynchronization} admits a data race $(\action_1, \action_2)$ between the actions that correspond to $\texttt{x = input}$ and $\texttt{r2 = x}$, respectively, in a trace where the call to $\ameth$ is suspended when it reaches $\awaitact\ *$ and the control is transferred to \texttt{Main} which executes \texttt{r2 = x}. The action $a_1$ is before $a_2$ in $\so{}$ because $\texttt{x = input}$ is included in the callee $\ameth$ and would execute before $\texttt{r2 = x}$ were the call to $\ameth$ be synchronous.
Also, note that traces of \emph{synchronous} programs can \emph{not} contain concurrent actions, and therefore they do not admit data races. $\BAsyof{\aprog, \alib, \Alib}$ does not admit data races as well.

%\vspace{-1mm}
%\begin{definition}
$\Aprog[\Alib]$ is called \emph{sound} when it does not admit data races.
%\vspace{-1mm}
%\end{definition}
%
The absence of data races implies equivalence to the original program, in the sense of reaching the same set of configurations (program variable valuations).

\begin{comment}
\vspace{-1mm}
\begin{lemma}\label{lemma:soundDR}
  $\Aprog[\Alib]$ is sound implies $\pstatesconf[\aprog[\alib]] = \pstatesconf[\Aprog[\Alib]]$, for every $\Aprog[\Alib]\in \Asyof{\aprog, \alib, \Alib}$
\vspace{-1mm}
\end{lemma}
\end{comment}

\begin{comment}
\begin{wrapfigure}{r}{0.21\textwidth}
  \vspace{-0.6cm}
  \centering
  \lstset{basicstyle=\ttfamily\scriptsize,numberblanklines=false,mathescape=true,morekeywords={async,method,await}}
  \begin{minipage}[l]{0.97\linewidth}
  \begin{lstlisting} 
async method Main {  
  r1 = call m;
  x = 1;
  await r1;
} 
async method m {
  r2 = call m1;
  r3 = call m1;
  await r2;
  r4 = x;
  if r4 == 1
    y = 2;
  await r3;
}
async method m1 {
  await *;
  r5 = y;
  return;
}
\end{lstlisting}
\end{minipage}
\vspace{-0.5cm}
  \caption{}
  \label{fig:dataRacesExample10}
  \vspace{-1.3cm}
\end{wrapfigure}
\end{comment}
\begin{comment}
The absence of data races between actions/statements that are reachable (enabled) in the synchronous program $\aprog[\alib]$ ensures the absence of any data race in $\Aprog[\Alib]$. %For instance, the program in Fig.~\ref{fig:dataRacesExample10} has two data races, one between $x = 1$ and $r4 = x$ and the other between $y=2$ and $r5=y$. However, the statement $y = 2$ is not reachable in the corresponding synchronous program. It is reachable in this asynchronization because of the data race between $x = 1$ and $r4 = x$, which are both reachable in the synchronous program. Eliminating the latter data race by moving the statement $\awaitact\ r1$ before $x =1$, makes $y = 2$ unreachable and the data race between $y=2$ and $r5=y$ is also eliminated. 

We say that an action $\action$ representing a read/write access in a trace $\tau$ of an asynchronization $\Aprog$ of $\aprog$ is \emph{synchronously reachable} if there exists an action $\action'$ in a trace $\tau'$ of $\aprog$ that corresponds to the same program statement, i.e., $\acttost(\action) = \acttost(\action')$.
For any unsound asynchronization, it can be proved that any trace with a data race contains at least one data race that involves two actions that are synchronously reachable. Therefore,

\vspace{-1mm}
\begin{lemma}\label{lem:races}
An asynchronization $\Aprog[\Alib]$ is sound iff it does not admit data races between actions that are synchronously reachable.
\vspace{-1mm}
\end{lemma}
%
\end{comment}

\vspace{-2mm}
\begin{definition}\label{def:problem1}
  For a synchronous program $\aprog[\alib]$ and asynchronous library $\Alib$, the \emph{asychronization synthesis problem} asks to enumerate all sound asynchronizations in $\Asyof{\aprog, \alib, \Alib}$. % that are sound.
  \vspace{-3mm}
\end{definition}

%  There are other approaches for checking behavioral equivalence, e.g.,  reachable states equivalence, and input-output equivalence, however, they are either undecidable or decidable but highly complex. Our approach offers a stronger notion of equivalence that is simpler to check. 

    %!TEX root = draft.tex
%\section{Data Race Free Asynchronous Programs}
\vspace{-5pt}
\section{Enumerating Sound Asynchronizations} \label{sec:synthe-corrctProg}
\vspace{-5pt}

We present an algorithm for solving asynchronization synthesis, which relies on a partial order between asynchronizations that guides the enumeration of possible solutions. The partial order takes into account the distance between calls and corresponding awaits. 
%Asynchronizations are partially ordered depending on the distance, i.e., the number of statements from the original program, between a call and a matching await. 
Fig. \ref{fig:space} pictures the partial order for asynchronizations of the program on the left of Fig.~\ref{fig:example}. Each asynchronization is written as 
\begin{wrapfigure}{r}{0.15\textwidth}
  \vspace{-7mm}    
  \begin{minipage}{0.15\textwidth}
          \scalebox{0.78} {
              \begin{tikzpicture}[node distance=2.5cm]
              \node(A0)                           {$(2,1)$};
              %\node(F0)      [below=-0.35cm of A0]  {$\vdots$};
              \node(B0)      [below=0.35cm of A0]  {};
              \node(B2)      [right=0.015cm of B0, fill=yellow!30]        {$(1,1)$};
              \node(B1)      [left=0.015cm of B0]         {$(2,0)$};
              \node(C0)      [below=0.35cm of B1, fill=yellow!30]    {$(1,0)$};
              \node(C1)      [below=0.35cm of B2, fill=yellow!30]  {$(0,1)$};
              \node(D1)      [below=0.35cm of C0]  {};
              \node(D0)      [right=0.015cm of D1, fill=yellow!30]  {$(0,0)$};
          
              %\draw(A0)       -- (F0);
              \draw(A0)       --  (B1);
              \draw(A0)       --  (B2);
              \draw(B2)       --  (C1);
              \draw(B2)       --  (C0);
              \draw(B1)       --  (C0);
              \draw(C0)       --  (D0);
              \draw(C1)       --  (D0);
             \end{tikzpicture}}
              \vspace{-1mm}
  \end{minipage}
\vspace{-0.5cm}
\caption{}
%\caption{Partially-ordered set of asynchronizations of the program on the left of Fig.~\ref{fig:example}.  }
\label{fig:space}
\vspace{-0.9cm}
\end{wrapfigure}
a vector of distances, the first (second) element is the number of statements between \texttt{await t1} (\texttt{await t}) and the matching call (we count only statements that appear in the sequential program). The edges connect comparable elements, smaller elements being below bigger elements. The asynchronization on the middle of Fig.~\ref{fig:example} corresponds to the vector $(1,1)$. 
%The bottom of this order represents the strong asynchronization.
 %where every call is immediately followed by await. 
 The highlighted elements constitute the set of all sound asynchronizations. 
 The strong asynchronization corresponds to the vector $(0,0)$.

 %To enumerate all sound asynchronizations, we perform a top-down traversal of the partial order on the left of Fig. \ref{fig:space}. We first compute the biggest element that is data race free. Although this is a partial order, we show that this element is actually \emph{unique}. Intuitively, uniqueness is proved by contradiction, showing that the least upper bound of two maximal incomparable sound asynchronizations is also a sound asynchronization. For instance, the least common ancestor of the two sound asynchronizations with vectors of distances $(1,0,0,0,0)$ and $(0,0,0,0,1)$ is the sound asynchronization $(1,0,0,0,1)$. 
%
%Then, for each immediate predecessor $\Aprog$ of the biggest sound asynchronization, we compute the biggest sound asynchronization which is smaller than $\Aprog$ (the first step is a particular case where $\Aprog$ is the top element). As an extension of the previous case, this is also unique, and called a maximal asynchronization relative to $\Aprog$. The enumeration finishes when reaching the bottom, which is data race free by definition, on all branches of the recursion. Such an enumeration is necessary because the set of sound asynchronizations is \emph{not} downwards closed with respect to the partial order defined above, i.e., it is possible that an asynchronization is not sound although it is smaller than a sound asynchronization. 
%
%\vspace{-1pt}
%\subsection{Maximal Asynchronization} 
Formally, an \plog{await} statement $\statement_w$ in a method $\ameth$ of an asynchronization $\Aprog[\Alib]\in \Asyof{\aprog, \alib, \Alib}$ \emph{covers} a read/write statement $\statement$ in $\aprog$ if there exists a path in the CFG of $\ameth$ 
 from the call statement matching $\statement_w$ to $\statement_w$ that contains $\statement$. The set of statements covered by an await $\statement_w$ is denoted by $\cover(\statement_w)$. 
We compare asynchronizations in terms of sets of statements covered by awaits that match the same call from the synchronous program $\aprog[\alib]$. Since asynchronizations are obtained by adding \plog{await}s, every call in asynchronization $\Aprog[\Alib]\in \Asyof{\aprog, \alib, \Alib}$ corresponds to a \emph{fixed} call in $\aprog[\alib]$. 
Therefore, for two asynchronizations $\Aprog,\Aprog'\in \Asyof{\aprog, \alib, \Alib}$, $\Aprog$ is \emph{smaller} than $\Aprog'$, denoted by $\Aprog \leq \Aprog'$, iff for every \plog{await} $\statement_w$ in $\Aprog$, there exists an \plog{await} $\statement_w'$ in $\Aprog'$ that matches the same call as $\statement_w$, such that $\cover(\statement_w)\subseteq \cover(\statement_w')$. 
For example, the two asynchronous programs in Fig. \ref{fig:asynchronization} are ordered by $\leq$ since  $\cover(\awaitact\ \areg 1) = \{\}$ in the first and $\cover(\awaitact\ \areg 1) = \{\texttt{r2 = x}\}$ in the second. 
Note that the strong asynchronization is smaller than every other asynchronization. Also, note that $\leq$ has a unique maximal element that is called the weakest asynchronization and denoted by $\WAsyof{\aprog, \alib, \Alib}$. In Fig.~\ref{fig:space}, the weakest asynchronization corresponds to the vector $(2,1)$. 
%The program on the right of Fig.~\ref{fig:asynchronization} is the weakest asynchronization of the synchronous program on the left of the figure. 

In the following, we say \emph{moving an await down (resp., up)} when moving the await further away from  (resp. closer to) the matching call while preserving well-formedness conditions in Section \ref{sec:1}. Further away or closer to means increasing or decreasing the set of statements that are covered by the await. For instance, if an await $\statement_w$ in a program $\Aprog$ is preceded by a while loop, then \emph{moving it up} means moving it before the whole loop and not inside the loop body. Otherwise, the third well-formedness condition would be violated. 
% since there exists a control-flow path that does not pass through the loop body and that does not contain an await. In the resulting program $\Aprog'$, there must exist an \plog{await} statement $\statement_w'$ that matches the same call as $\statement_w$ such that $\cover(\statement_w') \subset \cover(\statement_w)$.

\noindent
{\bf Relative Maximality.}
A crucial property of this partial order is that for every asynchronization $\Aprog$, there exists a \emph{unique} maximal asynchronization that is smaller than $\Aprog$ and that is sound. Formally, 
%given $\Aprog\in \Asyof{\aprog, \alib, \Alib}$, 
an asynchronization $\Aprog'$ is called a \emph{maximal asynchronization of $\aprog$ relative to $\Aprog$} if 
  (1) $\Aprog' \leq \Aprog$, $\Aprog'$ is sound, and 
  (2) %$\Aprog'$ is maximal among other sound asynchronizations smaller than $\Aprog$, i.e., 
  $\forall\ \Aprog'' \in \Asyof{\aprog, \alib, \Alib}.\ \Aprog''$ is sound and $\Aprog'' \leq \aprog_a \Rightarrow \Aprog'' \leq \Aprog'$.

\vspace{-2mm}
\begin{lemma}\label{lemma:optimalAsync}
  Given an asynchronization $\Aprog \in \Asyof{\aprog, \alib, \Alib}$,
  there exists a unique program $\Aprog'$ that is a maximal asynchronization of $\aprog$ relative to $\Aprog$.
  \vspace{-2mm}
\end{lemma}

%The following lemma shows that for a given $\Aprog$ there exists a unique $\Aprog'$ that is a maximal asynchronization of $\aprog$ relative to $\Aprog$. 
The asynchronization $\Aprog'$ exists because the bottom element of $\leq$ is sound. To prove uniqueness, assume by contradiction that there exist two incomparable maximal asynchronizations $\Aprog^1$ and $\Aprog^2$ and select the first await $\statement_{w}^{1}$ w.r.t. the control-flow of the sequential program that is placed in different positions in the two programs. Assume that $\statement_{w}^{1}$ is closer to its matching call in $\Aprog^1$. Then, we move $\statement_{w}^{1}$ in $\Aprog^1$ further away from its matching call to the same position as in $\Aprog^2$. This modification does not introduce data races since $\Aprog^2$ is data race free. Thus, the resulting program is data race free, bigger than $\Aprog^1$, and smaller than $\Aprog$ w.r.t. $\leq$ contradicting the fact that $\Aprog^1$ is a maximal asynchronization.

\vspace{-5pt}
\subsection{Enumeration Algorithm}
\vspace{-1pt}
\setlength{\textfloatsep}{10pt}
\begin{algorithm}[t]
  \caption{An algorithm for enumerating all sound asynchronizations (these asynchronizations are obtained as a result of the \textbf{output} instruction). 
  \textsc{MaxRel} returns the maximal asynchronization of $\aprog$ relative to $\Aprog$
  }\label{algo0}
  \begin{algorithmic}[1]
  \Procedure{AsySyn}{$\Aprog$, \textcolor{blue}{$\statement_{w}$}}
  \State $\ \ \Aprog' \leftarrow \textsc{MaxRel}(\Aprog)$; 
  \State \ \ \textbf{output} $\Aprog'$;
  \State $\ \ \mathcal{P}  \leftarrow \NextEle(\Aprog', \textcolor{blue}{\statement_{w}})$;
  \State \ \ \textbf{for each}\ $(\Aprog'', \textcolor{blue}{\statement_{w}''}) \in \mathcal{P} $
  \State $\ \ \ \ \ \ \ \textsc{AsySyn}(\Aprog'',\textcolor{blue}{\statement_{w}''})$;
  \EndProcedure
  \end{algorithmic}
\end{algorithm}

Our algorithm for enumerating all sound asynchronizations is given in Algorithm~\ref{algo0} as a recursive procedure \textsc{AsySyn} that we describe in two phases. 

First, ignore the second argument of \textsc{AsySyn} (in blue), which represents an \plog{await} statement. For an asynchronization $\Aprog$, \textsc{AsySyn} outputs \emph{all} sound asynchronizations that are smaller than $\Aprog$. It uses \textsc{MaxRel} to compute the maximal asynchronization $\Aprog'$ of $\aprog$ relative to $\Aprog$, and then, calls itself recursively for all immediate predecessors of $\Aprog'$. \textsc{AsySyn} outputs all sound asynchronizations of $\aprog$ when given as input the weakest asynchronization of $\aprog$. 

\begin{wrapfigure}{r}{0.44\textwidth}
  \vspace{-1cm}
\lstset{basicstyle=\ttfamily\scriptsize,numbers=none,
			  stepnumber=1,numberblanklines=false,mathescape=true,morekeywords={method,async,await}}
\begin{minipage}[l]{0.47\linewidth}
\begin{lstlisting} 
async method m {  
  r1 = call m1;
  r2 = x;
  await r1;     } 
async method m1 {
  r3 = call m2;
  x = x + 1;
  await r3;     }
async method m2 {
  await $*$
  retVal = input;
  return;       }
\end{lstlisting}
\end{minipage}
\hfill
\begin{minipage}[r]{0.47\linewidth}
\begin{lstlisting} 
async method m {  
  r1 = call m1;
  r2 = x;
  await r1;     } 
async method m1 {
  r3 = call m2;
  await r3; 
  x = x + 1;    }
async method m2 {
  await $*$
  retVal = input;
  return;       }
\end{lstlisting}
\end{minipage}
\vspace{-0.4cm}
\caption{Asynchronizations.}
\label{fig:soundasynchronization}
\vspace{-0.6cm}
\end{wrapfigure}
Recursive calls on immediate predecessors are necessary because the set of sound asynchronizations is not downward-closed w.r.t. $\leq$. For instance, the asynchronization on the right of Fig.~\ref{fig:soundasynchronization} is an immediate predecessor of the sound asynchronization on the left but it has a data race on $x$.

The delay complexity of this algorithm remains exponential in general, since a sound asynchronization may be outputted multiple times. Asynchronizations are only partially ordered by $\leq$ and different chains of recursive calls starting in different immediate predecessors may end up outputting the same solution. For instance, for the asynchronizations in Fig.~\ref{fig:space}, the asynchronization $(0,0)$ will be outputted twice because it is an immediate predecessor of both $(1,0)$ and $(0,1)$. 

To avoid this redundancy, we use a refinement of the above that \emph{restricts} the set of immediate predecessors available for a (recursive) call of \textsc{AsySyn}. This is based on a \emph{strict total order} $\wao$ between \plog{await}s in a program $\Aprog$ that follows a topological ordering of its inter-procedural CFG, i.e., if $\statement_{w}$ occurs before  $\statement_{w}'$ in the body of a method $\ameth$, then $\statement_{w}\ \wao\ \statement_{w}'$, and if $\statement_{w}$ occurs in a method $\ameth$ and $\statement_{w}'$ occurs in a method $\ameth'$ s.t. $\meth$ (indirectly) calls $\ameth'$, then $\statement_{w}\ \wao\ \statement_{w}'$. Therefore, \textsc{AsySyn} takes an await statement $\statement_{w}$ as a second parameter, which is initially the maximal element w.r.t. $\wao$, and it calls itself only on immediate predecessors of a solution obtained by \emph{moving up} an await $\statement_{w}''$ \emph{smaller than or equal to} $\statement_{w}$ w.r.t. $\wao$. The recursive call on that predecessor will receive as input $\statement_{w}''$. Formally, this relies on a function $\NextEle$ that returns pairs of immediate predecessors and await statements defined as follows:

\vspace{-4mm}
{\small
\begin{align*} 
  &\NextEle(\Aprog',\textcolor{blue}{\statement_{w}}) = \{ (\Aprog'',\textcolor{blue}{\statement_{w}''}): 
  \Aprog'' < \Aprog'\mbox{ and }
	\forall\ \Aprog''' \in \Asyof{\aprog, \alib, \Alib}.\   \Aprog''' < \Aprog' \implies \Aprog'''  \leq \Aprog''  \\
  &\hspace{4.4cm}  \textcolor{blue}{\mbox{and }\statement_{w}''\ \waoeq\ \statement_{w}\mbox{ and } \Aprog'' \in \Aprog' \uparrow \statement_{w}''}\ \}
\end{align*}}

\vspace{-2mm}
\noindent
($\Aprog' \uparrow \statement_{w}''$ is the set of asynchronizations obtained from $\Aprog'$ by changing \emph{only} the position of $ \statement_{w}''$, moving it up w.r.t. the position in $\Aprog'$).
For instance, looking at immediate predecessors of $(1,1)$ in Fig.~\ref{fig:space}, $(0,1)$ is obtained by moving the \emph{first} await in $\wao$. Therefore, the recursive call on $(0,1)$ computes the maximal asynchronization relative to $(0,1)$, which is $(0,1)$, and stops ($\NextEle$ returns $\emptyset$ because the input $\statement_{w}$ is the minimal element of $\wao$, and already immediately after the call). Its immediate predecessor is explored when recursing on $(1,0)$. %The complexity analysis also relies on a property of the maximal asynchronization relative to an immediate predecessor: if the predecessor is defined by moving an await $\statement_{w}''$, then the maximal asynchronization is obtained by moving only awaits smaller than $\statement_{w}''$ w.r.t. $\wao$. 

Algorithm~\ref{algo0} %can not output the same asynchonization twice 
%and computing immediate predecessors is polynomial time. The former is a consequence of 
%the property that 
outputs all sound asynchronizations because after having computed a maximal asynchronization $\Aprog'$ in a recursive call with parameter $\statement_{w}$, any smaller sound asynchronization is smaller than some predecessor in $\NextEle(\Aprog',\textcolor{blue}{\statement_{w}})$. Also, it can not output the same asynchonization twice.
Let $\Aprog^1$ and $\Aprog^2$ be two predecessors in $\NextEle(\Aprog',\textcolor{blue}{\statement_{w}})$ obtained by moving up the awaits $\statement_{w}^1$ and $\statement_{w}^2$, respectively, and assume that $\statement_{w}^1\wao \statement_{w}^2$. 
Then, all solutions computed in the recursive call on $\Aprog^1$ will have $\statement_{w}^2$ placed as in $\Aprog'$  
while all the solutions computed in the recursive call on $\Aprog^2$ will have $\statement_{w}^2$ closer to the matching call. 
Therefore, the sets of solutions computed in these two recursion branches are distinct. % and the same solution cannot be outputted twice. 

\begin{comment}
\vspace{-1mm}
\begin{lemma}\label{lemma:MonotoneAwaitRepair}
  If $\Aprog''$ is an immediate predecessor of a sound asynchronization $\Aprog'$, which is defined by moving an await $\statement_{w}''$ in $\Aprog'$ up, then the maximal sound asynchronization relative to $\Aprog''$ is obtained by moving only awaits smaller than $\statement_{w}''$ w.r.t. $\wao$. 
\vspace{-1mm}
\end{lemma}
\end{comment}

%We show that Algorithm~\ref{algo0} returns all sound asynchronizations when called with the weakest asynchronization and the maximum await in $\wao$. The restriction to a subset of predecessors is without loss of completeness (see Appendix~\ref{sec:synthe-corrctProg-app}).

\vspace{-1mm}
\begin{theorem}\label{theorem:AsySyn}
\emph{\textsc{AsySyn}($\WAsyof{\aprog, \alib, \Alib},\statement_{w}$)}, where $\statement_{w}$ is maximal in $\WAsyof{\aprog, \alib, \Alib}$ w.r.t. $\wao$, outputs all sound asynchronizations of $\aprog[\alib]$ w.r.t. $\Alib$.
\vspace{-1mm}
\end{theorem}

The delay complexity of Algorithm~\ref{algo0} is polynomial time modulo an oracle that returns a maximal asynchronization relative to a given one. In the next section, we show that the latter problem can be reduced in polynomial time to the reachability problem in sequential programs.
\vspace{-5pt}
\section{Computing Maximal Asynchronizations}\label{sec:6}
\vspace{-5pt}

In this section, we present an implementation of the procedure \textsc{MaxRel} that relies on a reachability oracle.
In particular, we first describe an approach for computing the maximal asynchronization relative to a given asynchronization $\Aprog$, which can be seen as a way of repairing $\Aprog$ so that it becomes data-race free. Intuitively, we repeatedly eliminate data races in $\Aprog$ by moving certain \plog{await} statements closer to the matching calls. The data races in $\Aprog$ (if any) are enumerated in a certain order that prioritizes data races between actions that occur first in executions of the original synchronous program. This order allows to avoid superfluous repair steps. %Therefore,

\vspace{-5pt}
\subsection{Data Race Ordering}\label{sec:DRordering}
\vspace{-3pt}
\begin{comment}
\begin{wrapfigure}{r}{0.15\textwidth}
\vspace{-0.6cm}
\lstset{basicstyle=\ttfamily\scriptsize,numberblanklines=false,mathescape=true,morekeywords={async,method,await}}
\centering
\begin{lstlisting} 
method Main {         
  while $*$
    if $*$
      r1 = x;
    r2 = y; }  
\end{lstlisting}
\vspace{-0.4cm}
\caption{}
\label{fig:nondeterministicprogram}
\vspace{-0.4cm}
\end{wrapfigure}
\end{comment}

An action $\action$ representing a read/write access in a trace $\tau$ of an asynchronization $\Aprog$ of $\aprog$ is \emph{synchronously reachable} if there is an action $\action'$ in a trace $\tau'$ of $\aprog$ that represents the same statement, i.e., $\acttost(\action) = \acttost(\action')$.
It can be proved that any trace of an asynchronization contains a data race if it contains a data race between two synchronously reachable actions (see Appendix~\ref{sec:Appendix6}). In the following, we focus on data races between actions that are synchronously reachable.  

We define an order between such data races %of asynchronizations  
based on the order between actions in executions of the original synchronous program $\aprog$. % $\aprog$. %By Lemma~\ref{lem:races}, it is enough to consider only races between actions that are synchronously reachable and this order is well defined on such races. 
This order relates data races in possibly different executions or asynchronizations of $\aprog$, which is possible because each action in a data race corresponds to a statement in $\aprog$. 

%We also show that the absence of data races between actions/statements that are reachable (enabled) in the synchronous program ensures the absence of any data race in $\Aprog$. %For instance, the program in Fig.~\ref{fig:dataRacesExample10} has two data races, one between $x = 1$ and $r4 = x$ and the other between $y=2$ and $r5=y$. However, the statement $y = 2$ is not reachable in the corresponding synchronous program. It is reachable in this asynchronization because of the data race between $x = 1$ and $r4 = x$, which are both reachable in the synchronous program. Eliminating the latter data race by moving the statement $\awaitact\ r1$ before $x =1$, makes $y = 2$ unreachable and the data race between $y=2$ and $r5=y$ is also eliminated. 

For two read/write statements $\statement$ and $\statement'$, $\statement\prec \statement'$ denotes the fact that there is an execution of $\aprog$ in which the \emph{first} time $\statement$ is executed occurs before the \emph{first} time $\statement'$ is executed. For two actions $\action$ and $\action'$ in an execution/trace of an asynchronization, generated by two read/write statements $\statement=\acttost(a)$ and $\statement'=\acttost(a')$, $\action\prec_\so \action'$ holds if $\statement\prec \statement'$ and either $\statement'\not\prec \statement$ or $\statement'$ is reachable from $\statement$ in the interprocedural\footnote{The interprocedural graph is the union of the control-flow graphs of each method along with edges from call sites to entry nodes, and from exit nodes to return sites.} control-flow graph of $\aprog$ without taking any back edge\footnote{A back edge points to a block that has already been met during a depth-first traversal of the control-flow graph, and corresponds to loops.}. For a \emph{deterministic} synchronous program (admitting a single execution), $\action\prec_\so \action'$ iff $\acttost(a)\prec \acttost(a')$. For non-deterministic programs, when $\acttost(a)$ and $\acttost(a')$ are contained in a loop body, it is possible that $\acttost(a)\prec \acttost(a')$ and $\acttost(a')\prec \acttost(a)$. %For instance, the statements \texttt{r1 = x} and \texttt{r2 = y} of the program in Fig.~\ref{fig:nondeterministicprogram} can be executed in different orders depending on the number of loop iterations and whether the if branch is entered during the  first loop iteration. 
In this case, we use the control-flow order to break the tie between $\action$ and $\action'$.

\begin{comment}
\begin{wrapfigure}{r}{0.21\textwidth}
  \vspace{-0.5cm}
  \centering
  \lstset{basicstyle=\ttfamily\scriptsize,numberblanklines=false,mathescape=true,morekeywords={async,method,await}}
  \begin{minipage}[l]{0.99\linewidth}
  \begin{lstlisting} 
async method Main {  
  r1 = call m;
  if $*$
    r2 = x;
    x = r2 + 1;     
  else 
    r3 = x;
  await r1;
} 
async method m {
  await $*$
  retVal = x;
  x = input;
  return;
}
\end{lstlisting}
\end{minipage}
\vspace{-0.5cm}
  \caption{}
  \label{fig:dataRacesExample2}
  \vspace{-0.5cm}
\end{wrapfigure}
\end{comment}

The order between data races corresponds to the colexicographic order induced by $\prec_\so$. This is a partial order since actions may originate from different control-flow paths and are incomparable w.r.t. $\prec_\so$. 

\vspace{-1.5mm}
\begin{definition}[Data Race Order] \label{def:dro}
  Given two races $(\action_1,\action_2)$ and $(\action_3,\action_4)$ admitted by (possibly different) asynchronizations of a synchronous program $\aprog$, we have that $(\action_1,\action_2)\prec_\so(\action_3,\action_4)$ iff $\action_2 \prec_\so \action_4$, or $\action_2 = \action_4$ and $\action_1\prec_\so \action_3$.
\vspace{-2mm}
\end{definition}

Repairing a minimal data race $(\action_1,\action_2)$ w.r.t. $\prec_\so$ removes any other 
data race $(\action_1,\action_4)$ with $(\action_2, \action_4) \in \hbo{}$ (note that we cannot have $(\action_4, \action_2) \not\in \hbo{}$ since $\action_2\prec_\so \action_4$). The repair will enforce that $(\action_1, \action_2) \in \hbo{}$ which implies that $(\action_1, \action_4) \in \hbo{}$.  

\vspace{-1mm}
\begin{comment}
\begin{example}
For the program in Fig.~\ref{fig:dataRacesExample2}, we have the following order between data races: $(\texttt{x = input}, \texttt{r2 = x})\prec_\so$$(\texttt{retVal = x},$ $\texttt{x = r2 + 1})$ because \texttt{r2 = x} is executed before the write \texttt{x = r2 + 1} in the original synchronous program (for simplicity we use statements instead of actions). However, the data races $(\texttt{x = input},\texttt{r2 = x})$ and $(\texttt{x = input},\texttt{r3 = x})$ are incomparable. 
\vspace{-3mm}
\end{example}
\end{comment}

\vspace{-5pt}
\subsection{Repairing Data Races}\label{ssec:repair}
\vspace{-3pt}
Repairing a data race $(\action_1, \action_2)$ reduces to modifying the position of a certain \plog{await}. 
%In general, we can either move an \plog{await} down, for instance in the method executing $\action_1$, or move an \plog{await} up, for instance in the method executing $\action_2$. 
%For example, the data race between \texttt{x = 1} and \texttt{r4 = x} in Fig. \ref{fig:dataRacesExample10} can be repaired by either moving \plog{await} \texttt{r2} in $m$ after the read \texttt{r4 = x}, so the call to $m$ is suspended later, or \plog{await} \texttt{r1} in \texttt{Main} before \texttt{x = 1}, to restrict the set of statements that can execute before $m$ returns. 
We consider only repairs where \plog{await}s are moved up (closer to the matching call). The ``completeness'' of this set of repairs follows from the particular order in which we enumerate data races. 
%Intuitively, moving the other \plog{await} down would introduce a data race we have already repaired.

\begin{wrapfigure}{r}{0.35\textwidth}
\vspace{-0.1cm}
\includegraphics[width=\linewidth]{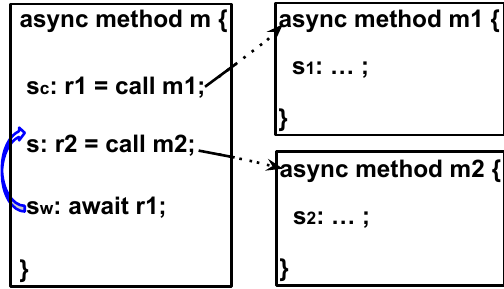}
\vspace{-0.5cm}
\caption{A data race repair.}
\label{fig:dataRacesRepair}
\vspace{-.7cm}
\end{wrapfigure}
Let $\statement_1$ and $\statement_2$ be the statements generating $\action_1$ and $\action_2$. 
In general, there exists a method $\ameth$ that (transitively) calls another asynchronous method $\ameth 1$ that contains $\statement_1$ and before awaiting for $\ameth 1$ it (transitively) calls a method $\ameth 2$ that executes $\statement_2$. This is pictured in Fig.~\ref{fig:dataRacesRepair}. It is also possible that $\ameth$ itself contains $\statement_2$ (see the program on the right of Fig.~\ref{fig:asynchronization}). 
The repair consists in moving the await for $\ameth 1$ before the call to $\ameth 2$ since this implies that $\statement_1$ will always execute before $\statement_2$ (and the corresponding actions are related by happens-before).

Formally, any two racing actions have a common ancestor in the call order $\co$ which is a call action.   
%This is at least the call action of \texttt{main}. 
The least common ancestor of $\action_1$ and $\action_2$ in $\co$ among call actions is denoted by $\lca_\co(\action_1,\action_2)$. In Fig.~\ref{fig:dataRacesRepair}, it corresponds to the call statement $\statement_c$. More precisely, $\lca_\co(\action_1,\action_2)$ is a call action $\action_c = (\_,\aniden, \callact(\anidenp))$ s.t. $(\action_c, \action_1) \in \co$, $(\action_c, \action_2) \in \co$, and for each other call action $\action_c'$, if $(\action_c,\action_c')\in \co$ then $(\action_c', \action_1) \not\in \co$. 
%For instance, the call action corresponding to \texttt{r1 = call m} in Fig.~\ref{fig:dataRacesExample10} is the \emph{least} common ancestor of the racing actions discussed above. 
This call action represents an asynchronous call for which the matching await $\statement_w$ must move to repair the data race. The await should be moved before the last statement in the same method generating an action which precedes $\action_2$ in the reflexive closure of call order (statement $\statement$ in Fig.~\ref{fig:dataRacesRepair}). 
%It also identifies the position where the await $\statement_w$ matching $\lca_\co(\action_1,\action_2)$ should be moved in order to repair the data race. Intuitively, this is just before the last statement in the same method which precedes $\action_2$ in the call order, i.e., the statement $\statement$ in Fig.~\ref{fig:dataRacesRepair} 
%In the special case where $\action_2$ is in the same method as the await, 
%the statement $\statement_2$ matching $\action_2$ is in the same method as $\statement_c$, as in Fig.~\ref{fig:asynchronization}, $\statement$ corresponds to $\statement_2$). 
This way every statement that follows $\statement_c$ in call order will be executed before $\statement$ and before any statement which succeeds $\statement$ in call order, including $\statement_2$. Note that moving the await $\statement_w$ anywhere after $\statement$ will not affect the concurrency between $\action_1$ and $\action_2$. 
%In Fig.~\ref{fig:dataRacesExample10}, \plog{await} \texttt{r1} has to be moved before the statement \texttt{x = 1}, which plays the role of $\action_2$.

%if 

\begin{comment}
\vspace{-1.5mm}
\begin{lemma} \label{lemma:anc2}
Let $(\action_1, \action_2)$ be a data race in a trace $\tau$ of an asynchronization $\Aprog$, and $\action_c = (\aniden, \callact(\anidenp)) = \lca_\co(\action_1,\action_2)$. Then, $\tau$ contains a unique action $\action_w = (\aniden, \awaitact(\anidenp))$ and a unique action $\action$ such that:
%\vspace{-1mm}
\begin{itemize}[noitemsep,topsep=0pt]
  \item $(\action,\action_w)\in\mo$, and $\action$ is the latest action in the method order $\mo$ such that $(\action_c,\action)\in \mo$ and $(\action,\action_2)\in \co^*$ ($\co^*$ denotes the reflexive closure of $\co$).
\end{itemize}
\vspace{-1mm}
\end{lemma}
\end{comment}

\begin{comment}
\begin{wrapfigure}{r}{0.44\textwidth}
  \vspace{-0.5cm}
  \lstset{basicstyle=\ttfamily\scriptsize,numbers=none,
          stepnumber=1,numberblanklines=false,mathescape=true,morekeywords={async,method,await}}
  \begin{minipage}[t]{0.48\linewidth}
  \begin{lstlisting} 
async method Main {  
  r1 = call m;
  if $*$

    r2 = x;
  else 
    r3 = y;

  await r1;
} 
async method m {
  await $*$
  retVal = x;
  x = input;
  return;
}
\end{lstlisting}
  \end{minipage}
  \hfill
  \begin{minipage}[t]{0.48\linewidth}
  \begin{lstlisting} 
async method Main {  
  r1 = call m;
  if $*$
    await r1;
    r2 = x;
  else 
    r3 = y;
    await r1;

} 
async method m {
  await $*$
  retVal = x;
  x = input;
  return;
}
\end{lstlisting}
\end{minipage}
  \vspace{-0.5cm}
  \caption{Examples of asynchronizations.}
  \label{fig:dataRacesExamples}
  \vspace{-0.5cm}
\end{wrapfigure}
\end{comment}

The pair $(\statement_c,\statement)$ %, where $\statement_c$ is the call statement generating $\lca_\co(\action_1,\action_2)$, 
is called the \emph{root cause} of the data race $(a_1,a_2)$. 
Let $\repairdatarace(\Aprog, \statement_c,\statement)$ be the maximal asynchronization $\Aprog'$ smaller than $\Aprog$ w.r.t. $\leq$, s.t. no await statement matching $\statement_c$ occurs after $\statement$ on a CFG path. 

\vspace{-5pt}
\subsection{A Procedure for Computing Maximal Asynchronizations}
\vspace{-3pt}

Given an asynchronization $\Aprog$, the procedure \textsc{MaxRel} in Algorithm~\ref{algo2} computes the maximal asynchronization relative to $\Aprog$ by repairing data races iteratively until the program becomes data race free. 
The sub-procedure $\textsc{RCMinDR}(\Aprog')$ computes the root cause of a minimal data race $(\action_1, \action_2)$ of $\Aprog'$ w.r.t. $\prec_\so$ such that the two actions are synchronously reachable. If $\Aprog'$ is data race free, $\textsc{RCMinDR}(\Aprog')$ returns $\bot$. The following theorem states the correctness of \textsc{MaxRel}.

\setlength{\textfloatsep}{7pt}
\begin{algorithm}[t]
  \caption{The procedure \textsc{MaxRel} to find the maximal asynchronization of $\aprog$ relative to $\Aprog$. }\label{algo2}
  \begin{algorithmic}[1]
  \Procedure{MaxRel}{$\Aprog$}
  \State $\ \ \Aprog' \leftarrow \Aprog$
  \State $\ \ \mathit{root} \leftarrow \textsc{RCMinDR}(\Aprog')$
  \State $\ \ \textbf{while}\ \mathit{root} \neq \bot$
  \State $\ \ \ \ \ \ \ \Aprog'  \leftarrow \repairdatarace(\Aprog', \mathit{root})$
  \State $\ \ \ \ \ \ \ \mathit{root} \leftarrow \textsc{RCMinDR}(\Aprog')$
  \State $\ \ \textbf{return}\ \Aprog'$
  \EndProcedure
  \end{algorithmic}
\end{algorithm}

\vspace{-1.5mm}
\begin{theorem} \label{theorem:anc3}
  Given an asynchronization $\Aprog\in \Asyof{\aprog, \alib, \Alib}$, \textsc{MaxRel}($\Aprog$) returns the maximal asynchronization of $\aprog$ relative to $\Aprog$.
\vspace{-1.5mm}
\end{theorem}

\textsc{MaxRel}($\Aprog$) repairs a number of data races which is linear in the size of the input. Indeed, each repair results in moving an await closer to the matching call and before at least one more statement from the original program $\aprog$.

%The fact that data races are enumerated in the order defined by $\prec_\so$ guarantees a bound on the number of times an await matching the same call is moved during the execution of  \textsc{MaxRel}($\Aprog$). In general, this bound is the number of statements covered by all the awaits matching the call in the input program $\Aprog$. Actually, this is a rather coarse bound. A more refined analysis has to take into account the number of branches in the CFGs. For programs without conditionals or loops, every await is moved at most once during the execution of  \textsc{MaxRel}($\Aprog$). In the presence of branches, a call to an asynchronous method may match multiple \plog{await} statements (one for each CFG path starting from the call), and the data races that these \plog{await} statements may create may be incomparable w.r.t. $\prec_\so$. Therefore, for a call statement $\statement_c$, let $|\statement_c|$ be the sum of $|\cover(\statement_w)|$ for every await $\statement_w$ matching $\statement_c$ in $\Aprog$.

The problem of computing root causes of minimal data races %, the sub-procedure $\textsc{RCMinDR}(\Aprog')$, 
is reducible to reachability (assertion checking) in sequential programs. This reduction builds on a program instrumentation for checking if there exists a data race that involves two given statements $(\statement_1,\statement_2)$ that are reachable in an executions of $P$.%, whose correctness relies on the assumption that another pair of statements cannot produce a smaller data race
This instrumentation is used in an iterative process where pairs of statements are enumerated according to the colexicographic order induced by $\prec$. %This specific enumeration ensures that the assumption made for the correctness of the instrumentation is satisfied.
For lack of space, we present only the main ideas of the instrumentation (see Appendix~\ref{sec:recApp}). The instrumentation simulates executions of an asynchronization $\Aprog$ using non-deterministic synchronous code where methods may be only partially executed (modeling \plog{await} interruptions). Immediately after executing $\statement_1$,
%(representing the first action in the data race), 
the current invocation $t_1$ is interrupted (by executing a \plog{return} added by the instrumentation). The active invocations that transitively called $t_1$ are also interrupted when reaching an \plog{await} for an invocation in this call chain (the other invocations are executed until completion as in the synchronous semantics). When reaching $\statement_2$, if $\statement_1$ has already been executed and at least one invocation has been interrupted, which means that $\statement_1$ is concurrent with $\statement_2$, then the instrumentation stops with an assertion violation. The instrumentation also computes the root cause of the data race using additional variables for tracking call dependencies.

\vspace{-5pt}
\section{Asymptotic Complexity of Asynchronization Synthesis} \label{sec:asymComplexity}
\vspace{-5pt}

We state the complexity of the asynchronization synthesis problem. Algorithm~\ref{algo0} shows that the delay complexity of this problem is polynomial-time in the number of statements in input program modulo the complexity of computing a maximal asynchronization, which Algorithm~\ref{algo2} shows to be polynomial-time reducible to reachability in sequential programs.
%Theorem~\ref{th:delay1} shows that its delay complexity is polynomial modulo the complexity of \textsc{MaxRel} in Algorithm~\ref{algo2}. 
%\textsc{MaxRel} reduces to a polynomial number of reachability queries in sequential programs. 
%This because that data races are enumerated in the order defined by $\prec_\so$ guarantees a bound on the number of times an await matching the same call is moved during the execution of  \textsc{MaxRel}($\Aprog$). In general, this bound is the number of statements covered by all the awaits matching the call in the input program $\Aprog$.  Thus, the while loop in \textsc{MaxRel}($\Aprog$) does a quadratic number of iterations that result in moving an await matching. Furthermore, the sub-procedure $\textsc{RCMinDR}$ performs a quadratic number of reachability queries in sequential programs. 
Since the reachability problem is PSPACE-complete for finite-state sequential programs~\cite{DBLP:conf/tacas/GodefroidY13}, we get the following:

\vspace{-1mm}
\begin{theorem}\label{th:compl1}
The output complexity\footnote{Note that all asynchronizations can be enumerated with polynomial space.} and delay complexity of the asynchronization synthesis problem is polynomial time modulo an oracle for reachability in sequential programs, and PSPACE for finite-state programs.
\vspace{-1mm}
\end{theorem}

This result is optimal, i.e., checking whether there exists a sound asynchronization which is different from the trivial strong synchronization is PSPACE-hard (follows from a reduction from the reachability problem). See Appendices~\ref{sec:recApp} and \ref{appendixF} for the detailed formal proofs. 

\begin{comment}
\vspace{-.5mm}
\begin{theorem}\label{th:compl3}
Checking whether there exists a sound asynchronization different from the strong asynchronization is PSPACE-complete.
\vspace{-1mm}
\end{theorem}
\begin{proof}%[Theorem \ref{th:compl3}]
  For hardness, checking if a sequential program $\aprog$ reaches a particular control location $\ell$ can be reduced to the non-existence of a non-trivial sound asynchronization of a program $\aprog'$ defined as follows: 
  (1) define a new method $\ameth$ that writes to a new program variable $x$, and insert a call to $\ameth$ followed by a write to $x$ at location $\ell$, and 
 (2) insert a write to $x$ after every call statement that calls a method in $\{\ameth'\}^*$, where $\ameth'$ is the method containing $\ell$.
  Let $\ameth_a$ be an asynchronous version of $\ameth$ obtained by inserting an \plog{await} $*$ at the beginning. Then, $\ell$ is reachable in $\aprog$ iff the only sound asynchronization of $\aprog'$ w.r.t. $\{\ameth_a\}$ is the strong asynchronization.
\end{proof}
\end{comment}

    %!TEX root = draft.tex
\vspace{-5pt}
\section{Asynchronization Synthesis Using Data-Flow Analysis}\label{sec:analysis}
\vspace{-5pt}
In this section, we present a refinement of Algorithm~\ref{algo2} that relies on a bottom-up inter-procedural data flow analysis. 
%In particular, we present a  procedure for computing sound asynchronizations, based on a bottom-up inter-procedural data-flow analysis. 
The analysis is used to compute maximal asynchronizations for abstractions of programs where every Boolean condition (in if-then-else or while statements) is replaced with the non-deterministic choice $*$, and used as an implementation of \textsc{MaxRel} in Algorithm~\ref{algo0}.

For a program $\aprog$, we define an abstraction $\Cprog$ where every conditional \texttt{if} $\langle le\rangle$  $\{S_1\}$ \texttt{else} $\{S_2\}$ is rewritten to \texttt{if} $*$ $\{S_1\}$ \texttt{else} $\{S_2\}$, and every \texttt{while} $\langle le\rangle$ $\{S\}$ is rewritten to \texttt{if} $*$ $\{S\}$. Besides adding the non-deterministic choice $*$, loops are unrolled exactly once. Every asynchronization $\Aprog$ of $\aprog$ corresponds to an abstraction $\CAprog$ obtained by applying exactly the same rewriting.
$\Cprog$ is a sound abstraction of $\aprog$ in terms of sound asynchronizations it admits. Unrolling loops once is sound because every asynchronous call in a loop iteration should be awaited for in the same iteration (see the syntactic constraints in Section \ref{sec:1}).

\vspace{-1mm}
\begin{theorem}\label{lem:absSound}
  If $\CAprog$ is a sound asynchronization of $\Cprog$ w.r.t. $\Alib$, then $\Aprog$ is a sound asynchronization of $\aprog$ w.r.t. $\Alib$.
\vspace{-1mm}
\end{theorem}

The procedure for computing maximal asynchronizations of $\Cprog$ relative to a given asynchronization $\CAprog$ traverses methods of $\CAprog$ in a bottom-up fashion, detects data races using summaries of read/write accesses computed using a straightforward data-flow analysis, and repairs data races using the schema presented in Section~\ref{ssec:repair}. Applying this procedure to a real programming language requires an alias analysis to detect statements that may access the same memory location (this is trivial in our language which is used to simplify the exposition).

We consider an enumeration of methods called \emph{bottom-up order}, which is the reverse of a topological ordering of the call graph\footnote{The nodes of the call graph are methods and there is an edge from a method $\ameth_1$ to a method $\ameth_2$ if $\ameth_1$ contains a call statement that calls $\ameth_2$.}. %Since we do not consider recursive procedures in this work, the call graph is actually a.
For each method $\ameth$, let $\mathcal{R}(\ameth)$ be the set of program variables that $\ameth$ can read, which is defined as the union of $\mathcal{R}(\ameth')$ for every method $\ameth'$ called by $\ameth$ and the set of program variables read in statements in the body of $\ameth$. The set of variables $\mathcal{W}(\ameth)$ that $\ameth$ can write is defined in a similar manner. We define $\mathsf{RW\text{-}var}(\ameth)=(\mathcal{R}(\ameth),\mathcal{W}(\ameth))$. We extend the notation $\mathsf{RW\text{-}var}$ to statements as follows: $\mathsf{RW\text{-}var}(\langle r\rangle := \langle x\rangle)=(\{x\},\emptyset)$, $\mathsf{RW\text{-}var}(\langle x\rangle := \langle le\rangle)=(\emptyset,\{x\})$, $\mathsf{RW\text{-}var}(r := \mathtt{call}\ \ameth)=\mathsf{RW\text{-}var}(\ameth)$, and $\mathsf{RW\text{-}var}(\statement)=(\emptyset,\emptyset)$, for any other type of statement $s$. Also, let $\mathsf{CRW\text{-}var}(\ameth)$ be the set of read or write accesses that $\ameth$ can do and that can be concurrent with accesses that a caller of $\ameth$ can do after calling $\ameth$. These correspond to read/write statements that follow an \plog{await} in $\ameth$, or to accesses in $\mathsf{CRW\text{-}var}(\ameth')$ for a method $\ameth'$ called by $\ameth$. These sets of accesses can be computed using the following data-flow analysis: for all methods $\ameth\in\CAprog$ in bottom-up order, and for each statement $\statement$ in the body of $\ameth$ from begin to end,
\begin{itemize}[noitemsep,topsep=1pt]
			\item if $\statement$ is a call to $\ameth'$ and $\statement$ is \emph{not} reachable from an \plog{await} in the CFG of $\ameth$
			\vspace{.5mm}
			\begin{itemize}[noitemsep,topsep=1pt]
				\item[$\bullet$] $\mathsf{CRW\text{-}var}(\ameth) \leftarrow  \mathsf{CRW\text{-}var}(\ameth) \cup \mathsf{CRW\text{-}var}(\ameth')$
			\end{itemize}
			\item if $\statement$ is reachable from an \plog{await} statement in the CFG of $\ameth$
			\vspace{.5mm}
			\begin{itemize}[noitemsep,topsep=1pt]
				\item[$\bullet$] $\mathsf{CRW\text{-}var}(\ameth)  \leftarrow  \mathsf{CRW\text{-}var}(\ameth) \cup \mathsf{RW\text{-}var}(\statement)$
			\end{itemize}			
\end{itemize}
\vspace{0.5mm}
We use $(\mathcal{R}_1,\mathcal{W}_1)\bowtie(\mathcal{R}_2,\mathcal{W}_2)$ to denote the fact that $\mathcal{W}_1\cap (\mathcal{R}_2\cup \mathcal{W}_2)\neq\emptyset$ or $\mathcal{W}_2\cap (\mathcal{R}_1\cup \mathcal{W}_1)\neq\emptyset$ (i.e., a conflict between read/write accesses). We define the procedure \textsc{MaxRel}$^\#$ that given an asynchronization $\CAprog$ works as follows: 
%\setlength{\itemsep}{1mm}
%\vspace{-7pt}
\begin{itemize}[topsep=5pt] %noitemsep
	\item for all methods $\ameth\in\CAprog$ in bottom-up order, and for each statement $\statement$ in the body of $\ameth$ from begin to end,
%		\vspace{1mm}
		\begin{itemize}[noitemsep,topsep=0pt]
			\item if $\statement$ occurs between $r := \mathtt{call}\ \ameth'$ and $\mathtt{await}\ r$ (for some $\ameth'$), and $\mathsf{RW\text{-}var}(\statement) \bowtie \mathsf{CRW\text{-}var}(\ameth')$, then $\CAprog\leftarrow \repairdatarace(\CAprog,r := \mathtt{call}\ \ameth',s)$
		\end{itemize}
	\item return $\CAprog$
\end{itemize}
%The following theorem states the correctness of \textsc{MaxRel}$^\#$. %This procedure repairs data races in an order which is $\prec_\so$ with some exceptions that do not affect optimality, i.e., the number of times an await matching the same call can be moved. For instance, if a method $\ameth$ calls two other methods $\ameth_1$ and $\ameth_2$ in this order, the procedure above may handle $\ameth_2$ before $\ameth_1$, i.e., repair data races between actions that originate from $\ameth_2$ before data races that originate from $\ameth_1$, although the former are bigger than the latter in $\prec_\so$. This does not affect optimality because those repairs are ``independent'', i.e., any repair in $\ameth_2$ cannot influence a repair in $\ameth_1$, and vice-versa. The crucial point is that this procedure repairs data races between actions that originate from a method $\ameth$ before data races that involve actions in methods preceding $\ameth$ in the call graph, which are bigger in $\prec_\so$ than the former. 
\vspace{-2mm}
\begin{theorem}
\mbox{\textsc{MaxRel}$^\#(\CAprog)$ returns a maximal asynchronization relative to $\CAprog$.}
\vspace{-6mm}
\end{theorem}
Since \textsc{MaxRel}$^\#$ is based on a single bottom-up traversal of the call graph of the input asynchronization $\CAprog$ we get the following result.
\vspace{-1mm}
\begin{theorem}
The delay complexity of the asynchronization synthesis problem restricted to abstracted programs $\Cprog$ is polynomial time.
\vspace{-3mm}
\end{theorem}

\vspace{-6pt}
\section{Experimental Evaluation}
\label{sec:experiments}
\vspace{-5pt}

We present an empirical evaluation of our asynchronization synthesis approach, where maximal asynchronizations are computed using the data-flow analysis in Section~\ref{sec:analysis}. Our benchmark consists mostly of asynchronous C\# programs from open-source GitHub projects. 
We evaluate the effectiveness in reproducing the original program as an asynchronization of a program where asynchronous calls are reverted to synchronous calls, along with other sound asynchronizations.
% to answer the following research questions: Does our technique mirrors or outperforms manual asynchronization of synchronous programs? (Section~\ref{sec:evaluation}). 
    
%\subsection{Implementation}
%\label{sec:implementation}
%\smallskip
\noindent
\textbf{Implementation.} 
We developed a prototype tool 
%written in around $1.3$ KLOC of C\# code. Our tool uses the MSBuild platform~\cite{msbuild} to manipulate the configuration files of .NET projects in order to compile the C\# programs. Afterward, it uses the 
that uses the Roslyn .NET compiler platform~\cite{roslyn} to construct CFGs for methods in a C\# program. This prototype supports C\# programs written in static single assignment (SSA) form that include basic conditional/looping constructs and async/await as concurrency primitives. %that is needed for the analysis. 
Note that object fields are interpreted as program variables in the terminology of \S\ref{sec:1} (data races concern accesses to object fields). 
It assumes that alias information is provided apriori; these constraints can be removed in the future with more engineering effort. In general, our synthesis procedure is compatible with any sound alias analysis. The precision of this analysis impacts only the set (number) of asynchronizations outputted by the procedure (a more precise analysis may lead to more sound asynchronizations).

The tool takes as input a possibly asynchronous program, and a mapping between synchronous and asynchronous variations of base methods in this program. It reverts every asynchronous call to a synchronous call, and it enumerates sound asynchronizations of the obtained program (using Algorithm~\ref{algo0}). 
\noindent
\textbf{Benchmark.}
Our evaluation uses a benchmark listed in Table~\ref{tab:experiments}, which
%shown in Fig.~\ref{tab:experiments} 
contains $5$ synthetic examples (variations of the program in Fig.~\ref{fig:example}), $9$ programs extracted from open-source C\# GitHub projects (their name is a prefix of the repository name), and $2$ programs inspired by questions on \texttt{stack\-overflow.com} about async\-/await in C\# (their name ends in Stack\-overflow). % and a manually constructed program (Read\-Files). 
%The first four columns in Fig.~\ref{tab:experiments} list some characteristics of our benchmark, i.e., the number of lines of code, the number of methods, the number of method invocations, and the number of asynchronous method invocations. 
%
Overall, there are 13 base methods involved in computing asynchronizations of these programs (having both synchronous and asynchronous versions), coming from $5$ C\# libraries  (\emph{System.\-IO}, \emph{System.\-Net}, \emph{Windows.\-Storage}, \emph{Microsoft.\-Windows\-Azure.\-Storage}, and \emph{Microsoft.\-Azure.\-Devices}). They are modeled as described in Section~\ref{sec:1}.

%The programs in our benchmark contain calls to $13$ external methods from $5$ libraries (\emph{System.\-IO}, \emph{System.\-Net}, \emph{Windows.\-Storage}, \emph{Microsoft.\-Windows\-Azure.\-Storage}, and \emph{Microsoft.\-Azure.\-Devices}) which have synchronous and asynchronous versions.  These methods are interpreted as base methods to be substituted with asynchronous versions, and they are modeled as described in \S\ref{sec:1}.

%All the C\# programs extracted from GitHub projects were initially asynchronous programs and we manually transformed them to synchronous programs except in one program which uses asynchronous methods that don't have synchronous counterparts such as \emph{AddDeviceAsync} from \emph{Microsoft.Azure.Devices} library.  

%We apply a preprocessing pass over the programs in our benchmark before running our tool on them, where we apply some simple simplifications such as adding intermediate explicit assignments to local variables to create data flow dependency between statements in the programs. 

\begin{table}[t]
       % \vspace{-0.7cm}
            \caption{Empirical results. Syntactic characteristics of input programs: lines of code (loc), number of methods (m), number of method calls (c), number of asynchronous calls (ac), number of awaits that \emph{could} be placed at least one statement away from the matching call (await$_{\#}$). Data concerning the enumeration of asynchronizations: number of awaits that \emph{were} placed at least one statement away from the matching call (await), number of races discovered and repaired (races), number of statements that the awaits in the maximal asynchronization are covering \emph{more than} in the input program (cover), number of computed asynchronizations (async), and running time (t).}
            \label{tab:experiments}
            \vspace{-0mm}
                \scriptsize
                \setlength{\tabcolsep}{0.7em}
                %!TEX root = draft.tex
%% overview table
\begin{tabular}{lllllllllll}
\toprule
Program                     & loc & m  & c & ac & await${_\#}$ & await & races & cover & async & t(s)  \\ \cmidrule(lr){1-11}
SyntheticBenchmark-1        & 77  & 3  & 6   & 5      & 4   & 4  & 5  & 0     & 9     & 1.4     \\ 
SyntheticBenchmark-2        & 115 & 4  & 12  & 10     & 6   & 3  & 3  & 0     & 8     & 1.4     \\ 
SyntheticBenchmark-3        & 168 & 6  & 16  & 13     & 9   & 7  & 4  & 0     & 128   & 1.5     \\
SyntheticBenchmark-4        & 171 & 6  & 17  & 14     & 10  & 8  & 5  & 0     & 256   & 1.9    \\
SyntheticBenchmark-5        & 170 & 6  & 17  & 14     & 10  & 8  & 9  & 0     & 272   & 2   \\
Azure-Remote                & 520 & 10 & 14  & 5      & 0   & 0  & 0  & 0     & 1     & 2.2   \\
Azure-Webjobs               & 190 & 6  & 14  & 6      & 1   & 1  & 0  & 1     & 3     & 1.6     \\
FritzDectCore               & 141 & 7  & 11  & 8      & 1   & 1  & 0  & 1     & 2     & 1.6     \\
MultiPlatform               & 53  & 2  & 6   & 4      & 2   & 2  & 0  & 2     & 4     & 1.1     \\
NetRpc                      & 887 & 13 & 18  & 11     & 4   & 1  & 3  & 0     & 3     & 2     \\
TestAZureBoards             & 43  & 3  & 3   & 3      & 0   & 0  & 0  & 0     & 1     & 1.5     \\
VBForums-Viewer             & 275 & 7  & 10  & 7      & 3   & 2  & 1  & 1     & 6     & 1.8     \\
Voat                        & 178 & 3  & 5   & 5      & 2   & 1  & 1  & 1     & 3     & 1.2    \\
WordpressRESTClient         & 133 & 3  & 10  & 8      & 4   & 2  & 1  & 0     & 4     & 1.7     \\
ReadFile-Stackoverflow      & 47  & 2  & 3   & 3      & 1   & 0  & 1  & 0     & 1     & 1.5     \\
UI-Stackoverflow            & 50  & 3  & 4   & 4      & 3   & 3  & 3  & 0     & 12    & 1.5 \\    
%ReadFiles                   & 101 & 4  & 9   & 8      & 6   & 3  & 2  & 0     & 8     & 4     \\
\bottomrule
\end{tabular}

%            \vspace{-0.6cm}
\end{table}

%\vspace{-1mm}
\noindent
\textbf{Evaluation.} 
The last five columns of Table~\ref{tab:experiments} list data concerning the application of our tool. The column async lists the number of outputted sound asynchronizations. In general, the number of asynchronizations depends on the number of invocations (column ac) and the size of the code blocks between an invocation and the instruction using its return value (column await$_{\#}$ gives the number of non-empty blocks).  
The number of \emph{sound} asynchronizations depends roughly, on how many of these code blocks are racing with the method body. These asynchronizations contain \plog{await}s that are at a non-zero distance from the matching call (non-zero values in column await) and for many Github programs, this distance is bigger than in the original program (non-zero values in column cover).
%\footnote{The synthetic examples are weakest asynchronizations to start with.}. 
This shows that we are able to increase the distances between \plog{await}s and their matching calls for those programs. The distance between \plog{await}s and matching calls in maximal asynchronizations of non synthetic benchmarks is $1.27$ statements on average. A statement representing a method call is counted as one independently of the method's body size. With a single level of inlining, the number of statements becomes 2.82 on average. However, these statements are again, mostly IO calls (access to network or disk) or library calls (string/bytes formatting methods) whose 
%code is quite complex and their 
execution time is not negligible.
%Therefore, small increments of this distance can have a big impact on performance.
The running times for the last three synthetic benchmarks show that our procedure is scalable when programs have a large number of sound asynchronizations.

With few exceptions, each program admits multiple sound asynchronizations (values in column async bigger than one), which makes the focus on the delay complexity relevant. This leaves the possibility of making a choice based on other criteria, e.g., performance metrics. As shown by the examples in Fig.~\ref{fig:performance}, 
%While asynchronizations are computed statically, 
their performance can be derived only dynamically (by executing them). % as shown by the examples in Fig.~\ref{fig:performance}.
These results show that our techniques have the potential of becoming the basis of a refactoring tool allowing programmers to improve their usage of the async/await primitives. The artifacts are available in a GitHub repository \cite{artifact7055422}.
%In all the programs in Fig.~\ref{tab:experiments}, we are able to synthesize all their equivalent asynchronous programs other than the optimal asynchronization relative to the weakest asynchronization without detecting new data races to repair. 

    %!TEX root = draft.tex
\vspace{-5pt}
\section{Related Work}
\label{sec:related}
\vspace{-5pt}
%\smallskip
%\noindent
%\textbf{Synthesis of Concurrent Programs.} 
%TODO
There are many works on synthesizing or repairing concurrent programs in the standard multi-threading model, e.g., automatic parallelization in compilers~\cite{DBLP:journals/csur/BaconGS94,DBLP:journals/computer/BlumeDEGHLLPPPRT96,DBLP:conf/ipps/HanT01}, or  synchronization synthesis~\cite{DBLP:conf/cav/CernyHRRT13,DBLP:conf/cav/CernyHRRT14,DBLP:conf/spin/ClarkeE08,DBLP:journals/toplas/MannaW84,DBLP:conf/popl/VechevYY10,DBLP:conf/tacas/VechevYY09,DBLP:conf/fmcad/BloemHKKAS14,DBLP:conf/cav/CernyCHRRST15,DBLP:conf/popl/GuptaHRST15}. We focus on the use of async/await which poses specific challenges that are not covered in these works. 
%For instance, synthesizing lock placements does not admit unique optimal solutions w.r.t. a syntactic order as for async/await. 

Our semantics without $\mathtt{await}\ *$ instructions is equivalent to the semantics defined in \cite{DBLP:conf/ecoop/BiermanRMMT12,DBLP:conf/pldi/SanthiarK17}. But, to simplify the exposition, we consider a more restricted programming language. For the modeling of asynchronous IO operations, we follow \cite{DBLP:conf/ecoop/BiermanRMMT12} with the restriction that the code following an $\mathtt{await}\ *$ is executed atomically. This is sound when focusing on data-race freedom because even if executed atomically, any two instructions from different asynchronous IO operations (following $\mathtt{await}\ *$) are not happens-before related.

%\smallskip
\noindent
\textbf{Program Refactoring.} 
Program refactoring tools have been proposed for converting C\# programs using explicit callbacks into async/await programs~\cite{DBLP:conf/icse/OkurHDD14} or Android programs using AsyncTask into programs that use IntentService~\cite{DBLP:conf/kbse/LinOD15}. 
% or sequential applications into parallel applications using concurrent libraries for Java~\cite{DBLP:conf/icse/DigME09}. 
The C\# tool~\cite{DBLP:conf/icse/OkurHDD14}, which is the closest to our work, makes it possible to repair misusage of async/await that might result in deadlocks. 
%Their repairing mechanism is based on forcing the continuations after the first \plog{await} to run on background threads. 
This tool cannot modify procedure calls to be asynchronous as in our work. 
A static analysis based technique for refactoring JavaScript programs is proposed in~\cite{DBLP:journals/pacmpl/GokhaleTT21}. 
As opposed to our work, this refactoring technique is unsound in general. It requires that programmers review the refactoring for correctness, which is error-prone. Also, in comparison to~\cite{DBLP:journals/pacmpl/GokhaleTT21}, we carry a formal study of the more general problem of finding all sound asynchronizations and investigate its complexity. 
%Compared to all these works, we give an algorithmic framework with precise specifications and complexity analysis.
%, ensuring through syntactic pattern matching of C\# namespaces that the accesses in these continuations cannot cause data races. 
%Our work investigates a different synthesis problem with a different specification that enables a complete refactoring of a sequential program into an equivalent asynchronous program. Differently from these works, we give precise solutions for this problem and study its complexity. 

%\smallskip
\noindent
\textbf{Data Race Detection.}
Many works study dynamic data race detection using happens-before and lock-set analysis, or timing-based detection~\cite{DBLP:conf/sosp/LiLMNP19,DBLP:conf/pldi/KiniM017,DBLP:conf/popl/SmaragdakisESYF12,DBLP:conf/rv/RamanZSVY10,DBLP:conf/pldi/FlanaganF09}. 
%\cite{DBLP:conf/rv/RamanZSVY10} proposes a dynamic data race detector for async-finish task-parallel programs by adapting the algorithm proposed in~\cite{DBLP:conf/spaa/FengL97} that computes abstract summaries of parallel tasks. \cite{DBLP:conf/sosp/LiLMNP19} presents a testing technique for finding data races in C\# and F\# programs, based on inserting timing delays in unsafe methods (e.g., methods that access memory without locking), and a monitor for finding data races. 
They could be used to approximate our reduction from data race checking to reachability in sequential programs.
Some works~\cite{DBLP:journals/pacmpl/BlackshearGOS18,DBLP:conf/pldi/LiuH18,DBLP:conf/sosp/EnglerA03} propose static analyses for finding data races. 
\cite{DBLP:journals/pacmpl/BlackshearGOS18} designs a compositional data race detector for multi-threaded Java programs, based on an inter-procedural analysis assuming that any two public methods can execute in parallel. Similar to~\cite{DBLP:conf/pldi/SanthiarK17}, they precompute method summaries to extract potential racy accesses. 
These approaches are similar to the analysis in Section~\ref{sec:analysis}, but they concern a different programming model.

%\smallskip
\noindent
\textbf{Analyzing Asynchronous Programs.} 
Several works propose program analyses for various classes of asynchronous programs. 
\cite{DBLP:conf/popl/BouajjaniE12,DBLP:journals/toplas/GantyM12} give complexity results for the reachability problem, and \cite{DBLP:conf/pldi/SanthiarK17} proposes a static analysis for deadlock detection in C\# programs that use both asynchronous and synchronous wait primitives. 
%by combining program control flow with the scheduling of asynchronous programs to identify deadlocks. 
%This work relies on the static analysis introduced in~\cite{DBLP:conf/sas/MadhavanRV12} for computing method summaries in terms of points-to relations. 
\cite{DBLP:conf/esop/BouajjaniEEOT17} investigates the problem of checking whether Java UI asynchronous programs have the same set of behaviors as sequential programs where roughly, asynchronous tasks are executed synchronously. % (using the standard call semantics). 
%They propose an algorithm for this verification problem which ultimately reduces to reachability in sequential programs (absence of data races) as in our work. They study a different set of asynchronous programming primitives and they do not consider the synthesis problem (or the problem of repairing programs that do not satisfy their  correctness condition).
% robustness checking problem for Java UI programs, which is also reduced to reachability in sequential programs. There robustness checking technique corresponds to whether there exists a data race. 
%Thus, they don't search for all data races in a program. Also, they don't give a mechanism for repairing the single data race they find in case of non robustness.  
    %!TEX root = draft.tex
\vspace{-5pt}
\section{Conclusion}\label{sec:conc}
\vspace{-5pt}
We proposed a framework for refactoring sequential programs to equivalent asynchronous programs based on async/await. We determined precise complexity bounds for the problem of computing %a sound asynchronization that maximizes the distance between asyncs and awaits, which in theory, increases the level of parallelism, and the problem of computing 
all sound asynchronizations. This problem makes it possible to compute a sound asynchronization that maximizes performance by separating concerns -- enumerate sound asynchronizations and evaluate performance separately.
%is useful in a context where performance measures cannot be derived statically, which is usually the case, and makes it possible to compute a sound asynchronization that maximizes performance by separating concerns -- enumerate sound asynchronizations and evaluate performance separately. 
%We have also investigated the related problem of synthesizing sound multi-threaded refactorings where every method call is executed by a different thread, showing that our techniques extend quite easily, which witnesses the ``robustness'' of our framework. 
On the practical side, we have introduced an approximated synthesis procedure based on data-flow analysis that we implemented and evaluated on a benchmark of non-trivial C\# programs.
% extracted from open-source repositories.

%
%identified similarities between the two.
%
% asynchronization synthesis problem that also concern its delay complexity when seen as an enumeration problem that outputs all possible solutions. The latter is relevant 
%
%synthesizing asynchronizations that are equivalent to the original sequential program. 
%
%
%Towards refactoring sequential programs with time-consuming procedures calls, 
%Our framework provides a precise algorithm for synthesizing asynchronizations by reduction to reachability that increases the program size only linearly, and therefore, checking data-races is as hard as checking reachability. Checking reachability is known to suffer from the state explosion problem when increasing the size of programs state-space. Thus, our framework provides a pragmatic algorithm using inter-procedural data-flow analysis. The experiments show that our implementation handles in a reasonable time programs with a large number of sound asynchronizations and adds more asynchrony to the original programs. 
The asynchronous programs rely exclusively on async/await and are deadlock-free by definition. Deadlocks can occur in a mix of async/await with ``explicit'' multi-threading that includes blocking \texttt{wait} primitives. Extending our approach for such programs is an interesting direction for future work. 

    \bibliographystyle{splncs04}
    \bibliography{draft}

    \appendix
    \newpage
    %\input{preliminary-app.tex}   
    %!TEX root = draft.tex
\section{Formalization and Proofs of Section~\ref{sec:4}}

The following lemma shows that the absence of data races implies equivalence to the original program, in the sense of reaching the same set of configurations (program variable valuations).

\vspace{-1mm}
\begin{lemma}\label{lemma:soundDR}
  $\Aprog[\Alib]$ is sound implies $\pstatesconf[\aprog[\alib]] = \pstatesconf[\Aprog[\Alib]]$, for every $\Aprog[\Alib]\in \Asyof{\aprog, \alib, \Alib}$
\vspace{-1mm}
\end{lemma}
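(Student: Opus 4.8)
The plan is to prove the statement by showing a bidirectional inclusion of reachable configurations. Since the asynchronization $\Aprog[\Alib]$ is obtained from the sequential program $\aprog[\alib]$ by replacing certain synchronous method calls with asynchronous versions (`async`/`await`), and the soundness hypothesis guarantees data race freedom, the central intuition is that the asynchronous program can only reorder operations that are genuinely independent. My proof would therefore rest on a commutativity/trace-equivalence argument: any interleaving realizable in the asynchronous semantics can be reordered, step by step, into the canonical sequential order without changing the final configuration, precisely because the absence of data races means no two reordered statements both access a shared variable with at least one write.

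I would first establish the inclusion $\pstatesconf[\aprog[\alib]] \subseteq \pstatesconf[\Aprog[\Alib]]$, which I expect to be the easier direction. The key observation is that the sequential execution order is itself a legal schedule of the asynchronous program: one can always choose to `await` each asynchronous call immediately at the point where the original synchronous call occurred, thereby simulating the sequential computation inside the asynchronous semantics. This gives a direct embedding of every sequential run into an asynchronous run reaching the same valuation of program variables, so every configuration reachable by $\aprog[\alib]$ is reachable by $\Aprog[\Alib]$.

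The reverse inclusion $\pstatesconf[\Aprog[\Alib]] \subseteq \pstatesconf[\aprog[\alib]]$ is where the soundness hypothesis does the real work, and this is the step I expect to be the main obstacle. Given an arbitrary asynchronous run reaching some configuration, I would argue that it can be normalized to the sequential order by a sequence of adjacent-transposition rewrites. Each transposition swaps two consecutive statements executed by different logical threads (the main thread and a suspended asynchronous task, or two tasks); such a swap preserves the resulting configuration whenever the two statements commute. Data race freedom, as guaranteed by soundness, ensures that whenever the asynchronous scheduler interleaves statements from distinct tasks, those statements never conflict on a shared memory location, so every such adjacent swap preserves the program-variable valuation. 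Iterating these swaps reorders the asynchronous run into the unique schedule corresponding to the sequential program, yielding the same configuration.

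The technical care in the hard direction lies in making the commutation argument precise for the specific asynchronous semantics defined earlier: I would need to verify that the scheduling points introduced by `await` do not create control-flow dependencies that block the reordering, and that the per-statement effect on program variables depends only on the variables it reads and writes (a locality property that should follow from the operational semantics). Provided these hold, the formal induction is on the number of out-of-order adjacent pairs in the asynchronous run, with data race freedom supplying the commutativity needed at each inductive step; the base case is a run already in sequential order, which trivially reaches a sequential configuration.
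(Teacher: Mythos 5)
Your proposal is correct and follows essentially the same route as the paper's proof: the hard inclusion $\pstatesconf[\Aprog[\Alib]] \subseteq \pstatesconf[\aprog[\alib]]$ is established by reordering concurrent actions of a given asynchronous execution---which preserves the reached configuration precisely because soundness rules out races between concurrent conflicting accesses---until every task executes synchronously between its call and return, yielding an execution of $\aprog[\alib]$ modulo erasing the (effect-free) awaits. The paper leaves the easy inclusion implicit (the asynchronous semantics always admits the schedule where no await suspends, simulating the synchronous run), which you spell out explicitly, and your induction on out-of-order adjacent pairs is just a more detailed rendering of the paper's appeal to reordering concurrent actions.
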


\begin{proof}[Proof of Lemma~\ref{lemma:soundDR}]
  Let $\rho$ be an execution of $\Aprog$ that reaches a configuration $\psconf \in \pstatesconf[\Aprog]$. We show that actions in  $\rho$ can be reordered such that any action that occurs in $\rho$ between $(\_,\aniden, \callact(\anidenp))$ and $(\_,\anidenp, \returnact)$ is not of the form $(\_,\aniden, \_)$ (i.e., the task $\anidenp$ is executed synchronously). If an action $(\_,\aniden, \_)$ occurs in $\rho$ between $(\_,\aniden, \callact(\anidenp))$ and $(\_,\anidenp, \returnact)$, then it must be concurrent with $(\anidenp, \returnact)$.  Since $\Aprog$ does not admit data races, an execution $\rho'$ resulting from $\rho$ by reordering any two concurrent actions reaches the same configuration $\psconf$ as $\rho$. Therefore, there exists an execution $\rho''$ where the actions that occur between any $(\_,\aniden, \callact(\anidenp))$ and $(\_,\anidenp, \returnact)$ are not of the form $(\_,\aniden, \_)$. This is also an execution of $\aprog$ (modulo removing the awaits which have no effect), which implies $\psconf \in \pstatesconf[\aprog]$. 
  %  \vspace{-1mm}
\end{proof}

\begin{comment}
\begin{proof}[Proof of Lemma~\ref{lem:races}]
  Assume by contradiction that $\Aprog[\Alib]$ is sound and it admits a data race $(\action_1,\action_2)$ in a trace $\tau\in \tracesconf(\Aprog[\Alib])$ where one of the actions, say $\action_1$, is not synchronously reachable. We assume w.l.o.g that the data race $(\action_1,\action_2)$ is the first that occurs in $\tau$ with at least one synchronously unreachable action. Then, there must exist a read access $\action_r$ that enabled $\action_1$, and therefore, $\action_r$ reads a value that was not read in any synchronous execution. Thus, the read value must the result of another data race that occurs earlier in the trace $\tau$, which is a contradiction. 
\end{proof}
\end{comment}
    %!TEX root = draft.tex
\section{Formalization and Proofs of Section~\ref{sec:synthe-corrctProg}}\label{sec:synthe-corrctProg-app}

The following lemma shows that for a given $\Aprog$ there exists a unique $\Aprog'$ that is a maximal asynchronization of $\aprog$ relative to $\Aprog$. The existence is implied by the fact that $\BAsyof{\aprog, \alib, \Alib}$ is the bottom element of $\leq$. To prove uniqueness, we assume by contradiction that there exist two incomparable maximal asynchronizations $\Aprog^1$ and $\Aprog^2$ and select the first await statement $\statement_{w}^{1}$, according to the control-flow of the sequential program, that is placed in different positions in the two programs. Assume that $\statement_{w}^{1}$ is closer to its matching call in $\Aprog^1$. Then, we move $\statement_{w}^{1}$ in $\Aprog^1$ further away from its matching call to the same position as in $\Aprog^2$. This modification does not introduce data races since $\Aprog^2$ is data race free. Thus, the resulting program is data race free, bigger than $\Aprog^1$, and smaller than $\Aprog$ w.r.t. $\leq$ contradicting the fact that $\Aprog^1$ is a maximal asynchronization.

\begin{lemma}\label{lemma:optimalAsync}
  Given an asynchronization $\Aprog \in \Asyof{\aprog, \alib, \Alib}$,
  there exists a unique program $\Aprog'$ that is a maximal asynchronization of $\aprog$ relative to $\Aprog$.
\end{lemma}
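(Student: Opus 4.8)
My plan is to treat $\leq$ as a finite partial order on $\Asyof{\aprog, \alib, \Alib}$, where each element is pinned down by the position of every await relative to its matching call, and to prove existence and uniqueness of a maximal sound asynchronization below $\Aprog$ separately.

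For existence, I would look at the set $S$ of all sound asynchronizations $\Aprog'' \leq \Aprog$. This set is nonempty: the bottom element $\BAsyof{\aprog, \alib, \Alib}$ of $\leq$ satisfies $\BAsyof{\aprog, \alib, \Alib} \leq \Aprog$ and is sound, since every await sits immediately after its matching call, so each task $\anidenp$ runs synchronously and no two conflicting accesses are ever concurrent. As $S$ is finite (there are finitely many legal await positions), it has a maximal element, which is by definition a maximal asynchronization of $\aprog$ relative to $\Aprog$.

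For uniqueness I would argue by contradiction. Two distinct maximal elements of $S$ are necessarily incomparable, so assume $\Aprog^1, \Aprog^2 \in S$ are maximal and incomparable. Ordering the awaits of $\aprog$ by the sequential control flow, let $\statement_w$ be the first one placed differently in the two programs, and assume w.l.o.g. that $\statement_w$ is closer to its matching call $(\_, \aniden, \callact(\anidenp))$ in $\Aprog^1$. I would then build $\Aprog^{1'}$ from $\Aprog^1$ by relocating only $\statement_w$ to the strictly later position it has in $\Aprog^2$. By construction $\Aprog^{1'}$ is strictly bigger than $\Aprog^1$, and it is still below $\Aprog$: the relocated await now sits at its $\Aprog^2$-position, which is below its $\Aprog$-position because $\Aprog^2 \leq \Aprog$, while all other awaits are inherited from $\Aprog^1 \leq \Aprog$. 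Showing that $\Aprog^{1'}$ is sound then contradicts the maximality of $\Aprog^1$ and finishes the proof.

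The soundness of $\Aprog^{1'}$ is the main obstacle. The idea I would develop is that relocating $\statement_w$ later changes the happens-before order only by making the task $\anidenp$ concurrent with the region $R$ of actions lying strictly between the old and new positions of $\statement_w$; every other ordering is preserved because $\statement_w$'s matching call and all earlier awaits are untouched (earlier ones by the choice of $\statement_w$, and calls are never moved), and actions past the new position still run after $\anidenp$ completes. Hence any race newly created in $\Aprog^{1'}$ must pair an action of $\anidenp$ with a conflicting action in $R$. But in $\Aprog^2$ the await $\statement_w$ already sits at this later position and all earlier awaits agree with $\Aprog^1$, so $\anidenp$ is concurrent with all of $R$ in $\Aprog^2$ as well; since $\Aprog^2$ is sound, no action of $\anidenp$ conflicts with any action of $R$. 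To make this rigorous I would pin down the relevant read/write sets using Lemma~\ref{lemma:soundDR}: soundness of $\Aprog^2$ (and of the candidate $\Aprog^{1'}$, via a minimal-counterexample argument on the first race in a trace, whose enabling reads must themselves be sequentially reachable) forces the accesses of $\anidenp$ and of $R$ to coincide with those of $\aprog$, so the absence of a conflict transfers from $\Aprog^2$ to $\Aprog^{1'}$. This excludes any new race, making $\Aprog^{1'}$ sound and yielding the desired contradiction.
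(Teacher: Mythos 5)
Your proof takes essentially the same route as the paper's: existence via the bottom element $\BAsyof{\aprog, \alib, \Alib}$, and uniqueness by contradiction, selecting the first await (in sequential control-flow order) placed differently in two incomparable maximal solutions, moving it down in $\Aprog^1$ to its $\Aprog^2$-position, and transferring data-race freedom from $\Aprog^2$ to the modified program (since all awaits executed earlier, in particular those inside the awaited task, agree in the two programs), contradicting maximality of $\Aprog^1$. Your two small additions---the explicit finiteness argument for the existence of a maximal element, and the appeal to synchronous reachability of the accesses involved in a first new race---are harmless refinements of steps the paper leaves implicit.
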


\begin{proof}[Proof of Lemma \ref{lemma:optimalAsync}]
  Since $\BAsyof{\aprog, \alib, \Alib}$ is the bottom element of $\leq$, then there always exists a sound asynchronization smaller than $\Aprog$. Assume by contradiction that there exist two distinct programs $\Aprog^1$ and $\Aprog^2$ that are both maximal asynchronizations of $\aprog$ relative to $\Aprog$. Let $\rho^1$ (resp., $\rho^2$) be an execution of $\Aprog^1$ (resp., $\Aprog^2$) where every $\awaitact\ *$ does not suspend the execution of the current task, i.e., $\rho^1$ and $\rho^2$ simulate the synchronous execution of $\aprog$. Let $\statement_{w}^{1}$ be the statement corresponding to the first \plog{await} action in $\rho^1$ such that (1) there exists an \plog{await} action in $\rho^2$ with the corresponding \plog{await} statement $\statement_{w}^{2}$, such that $\statement_{w}^{1}$ and $\statement_{w}^{2}$ match the same call in $\aprog$, and $\cover(\statement_{w}^{1})\subset \cover(\statement_{w}^{2})$ (this holds because $\Aprog^1$ and $\Aprog^2$ are distinct asynchronizations of the same synchronous program, thus $\cover(\statement_{w}^{1})$ and $\cover(\statement_{w}^{2})$ must be comparable), and (2) for every other \plog{await} statement $\statement_{w}^{3}$ in $\Aprog^1$ that generates an \plog{await} action which occurs before the \plog{await} action  of $\statement_{w}^{1}$ in $\rho^1$, there exists an \plog{await} statement $\statement_{w}^{4}$ in $\Aprog^2$ matching the same call in $\aprog$, such that $\cover(\statement_{w}^{3}) = \cover(\statement_{w}^{4})$. 
    
  Let $\Aprog^3$ be the program obtained from $\Aprog^1$ by moving the await $\statement_{w}^{1}$ down (further away from the matching call) such that $\cover(\statement_{w}^{1}) = \cover(\statement_{w}^{2})$. 
  Moving an await down can only create data races between actions that occur after the execution of the matching call. Then, $\Aprog^3$ contains a data race iff there exists an execution $\rho$ of $\Aprog^3$ and two concurrent actions $\action_1$ and  $\action_2$ that occur between the action $(\_,i, \awaitact(j))$ generated by $\statement_{w}^{1}$ and the action $(\_,i, \callact(j))$ of the call matching $\statement_{w}^{1}$, such that:  
  $$((\_,i, \callact(j)), \action_1) \in \co{},\  
    (\action_1, \action_w) \not\in \hbo{},\ ((\_,i, \callact(j)), \action_2)  \in \co{} \mbox{\ and\ } (\action_2, (\_,i, \awaitact(j))) \in \hbo{}$$ 
  where the action $\action_w$ corresponds to the first await action in the task $j$. Let $\statement_{w}$ be the statement corresponding to the action $\action_w$. Since the only difference between $\Aprog^3$ and $\Aprog^2$ is the placement of awaits then $((\_,i, \callact(j)), \action_1) \in \co{}$ and $((\_,i, \callact(j)), \action_2) \in \co{}$ hold in any execution $\rho'$ of $\Aprog^2$ that contains the actions $\action_1$ and $\action_2$. Also, note that since $\action_w$ occurs in the task $j$ that the action of $\statement_{w}^{1}$ is waiting for. This implies that in $\rho^1$ the action of $\statement_w$ occurs before the action of $\statement_{w}^{1}$ in $\rho^1$. Therefore, by the definition of $\statement_{w}^{1}$ we have that $\statement_w$ in $\Aprog^1$ covers the same set of statements as the corresponding $\statement'_w$ in $\Aprog^2$ that matches the same call as $\statement_w$. Consequently, $(\action_1, \action'_w) \not\in \hbo{}$ and $(\action_2, (\_,i, \awaitact(j))) \in \hbo{}$ hold in any execution $\rho'$ of $\Aprog^2$ that contains the actions $\action_1$ and $\action_2$ ($\action'_w$ is the action of $\statement'_w$). Thus, there exists an execution $\rho'$ of $\Aprog^2$ such that the actions $\action_1$ and $\action_2$ are concurrent. This implies that if $\Aprog^3$ admits a data race, then $\Aprog^2$ admits a data race between actions generated by the same statements. As $\Aprog^2$ is data race free, we get that $\Aprog^3$ is data race free as well. Since $\Aprog^1 < \Aprog^3$, we get that $\Aprog^1$ is not maximal, which contradicts the hypothesis.
\end{proof}

The complexity analysis also relies on a property of the maximal asynchronization relative to an immediate predecessor: if the predecessor is defined by moving an await $\statement_{w}''$, then the maximal asynchronization is obtained by moving only awaits smaller than $\statement_{w}''$ w.r.t. $\wao$.

\begin{lemma}\label{lemma:MonotoneAwaitRepair}
  If $\Aprog''$ is an immediate predecessor of a sound asynchronization $\Aprog'$, which is defined by moving an await $\statement_{w}''$ in $\Aprog'$ up, then the maximal sound asynchronization relative to $\Aprog''$ is obtained by moving only awaits smaller than $\statement_{w}''$ w.r.t. $\wao$. 
\vspace{-1mm}
\end{lemma}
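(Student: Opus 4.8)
The plan is to reduce the statement to a \emph{localization} property of the data-race characterization already isolated in the proof of Lemma~\ref{lemma:optimalAsync}, and to rely on that lemma for the well-definedness of the object under study. First I would fix notation: let $\hat{\Aprog}$ be the maximal sound asynchronization relative to $\Aprog''$, which exists and is unique by Lemma~\ref{lemma:optimalAsync}, and which I regard as the program that keeps $\statement_{w}''$ at the (raised) position it has in $\Aprog''$ while pushing every other await as far from its matching call as soundness allows. Since $\Aprog''$ is obtained from the sound program $\Aprog'$ by moving the single await $\statement_{w}''$ closer to its matching call, $\hat{\Aprog}$ and $\Aprog''$ agree outside the awaits that the re-maximization relocates, and the content of the lemma is exactly that these relocated awaits are all $\wao$-smaller than $\statement_{w}''$. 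Equivalently, I must show that $\hat{\Aprog}$ agrees with $\Aprog''$ on $\statement_{w}''$ and on every await that is $\wao$-larger than $\statement_{w}''$.

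The tool for this is the happens-before/conflict characterization used in the proof of Lemma~\ref{lemma:optimalAsync}: an award $\statement$ (matching a call of task $j$) can be moved one step further from its call without creating a data race iff there is \emph{no} execution with two actions $\action_1,\action_2$ occurring between $(\_,i,\callact(j))$ and $(\_,i,\awaitact(j))$ satisfying the stated $\co{}$/$\hbo{}$ conditions (with $\action_w$ the first await action of $j$). The key structural observation is that the \emph{only} effect of moving $\statement_{w}''$ up, when passing from $\Aprog'$ to $\Aprog''$, is to add happens-before edges out of the task awaited by $\statement_{w}''$ toward the statements that leave its cover. All of these statements, and hence all actions whose $\hbo{}$-ordering relative to that task changes, are located around the (now earlier) await action of $\statement_{w}''$ rather than after it.

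Using this, I would argue the required agreement. Because $\Aprog'$ is sound and maximal, in $\Aprog'$ every await is already blocked from moving further by some witnessing pair $(\action_1,\action_2)$ as above. For an await $\statement_{w_5}$ that is $\wao$-larger than $\statement_{w}''$ (and for $\statement_{w}''$ itself), I would show that none of the freshly added $\hbo{}$-edges can destroy its witnessing pair: the edges added by synchronizing the task of $\statement_{w}''$ earlier propagate only to actions in the region bounded by the await action of $\statement_{w}''$, whereas the witness of a $\wao$-larger await lies beyond that region, so its $\co{}$/$\hbo{}$-conditions are preserved verbatim in $\Aprog''$. Consequently such an award remains blocked and keeps its $\Aprog'$-position in $\hat{\Aprog}$, while a $\wao$-smaller award may have its witness ordered by a new edge and thus become movable. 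Combining these gives that $\hat{\Aprog}$ is obtained from $\Aprog''$ by relocating only awaits $\wao$-smaller than $\statement_{w}''$, as claimed.

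The hard part will be the preservation step for $\wao$-larger awaits, i.e. ruling out that synchronizing the task of $\statement_{w}''$ earlier can break the blocking witness of an await whose matching call/await sits $\wao$-above $\statement_{w}''$. This is essentially a monotonicity statement tying the $\wao$ order to the direction in which the new happens-before edges propagate, and it must be proved against the possibility that raising $\statement_{w}''$ crosses another call and thereby orders a later task. I expect to handle it by tracking, along the lines of Lemma~\ref{lemma:optimalAsync}, each action of the witnessing pair back to the cover it belongs to and arguing that an edge capable of ordering that pair would have to originate from a task whose await is $\wao$-below $\statement_{w}''$; making that dependency precise, and excluding the crossing case, is where the real work lies.
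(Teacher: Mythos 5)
Your central structural claim --- that the only effect of moving $\statement_{w}''$ up is to \emph{add} happens-before edges from the awaited task toward the statements leaving $\cover(\statement_{w}'')$ --- is false, and the error undermines the whole plan. Moving an await up also \emph{removes} happens-before edges: since the invocation containing $\statement_{w}''$ now suspends at the earlier await point, the statements between the new and the old position execute in the post-suspension continuation and become concurrent with the continuations of the (indirect) callers, with which they were previously ordered. This removal is the only reason $\Aprog''$ can fail to be sound in the first place; under your premise, $\Aprog''$ would inherit soundness from $\Aprog'$, the maximal sound asynchronization relative to $\Aprog''$ would be $\Aprog''$ itself, and the lemma would be vacuous. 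The paper's proof opens with exactly this point: moving an await up can only create data races between actions that occur after the (raised) await, \emph{because the invocation is suspended earlier}.

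As a consequence, your preservation step answers the wrong question. Showing that the blocking witness of a $\wao$-larger await survives in $\Aprog''$ only rules out that such an await moves \emph{down}; but that is already forced by definition, since the maximal sound asynchronization relative to $\Aprog''$ is $\leq \Aprog''$, which agrees with $\Aprog'$ on every await other than $\statement_{w}''$. What the lemma actually requires is that \emph{repairing} the freshly created races never moves a $\wao$-larger await (or $\statement_{w}''$ itself) \emph{up}. The paper's argument is a short case analysis on the possible repairs: the new races involve the tail of the method containing $\statement_{w}''$ racing with code of its (transitive) callers, so the only fixes are to move $\statement_{w}''$ back down --- excluded, since the result would then not be smaller than $\Aprog''$ --- or to move up awaits occurring in methods that (indirectly) call the method containing $\statement_{w}''$, and such awaits are smaller than $\statement_{w}''$ w.r.t.\ $\wao$ by the definition of that order. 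Your admittedly unresolved ``crossing case'' is precisely where this case analysis is needed, and your localization picture cannot supply it: the relevant happens-before changes are removals that propagate into every transitive caller, not additions confined to the region around the raised await. (Minor: you also assume $\Aprog'$ is maximal, whereas the lemma assumes only soundness.)
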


\begin{proof}[Proof of Lemma \ref{lemma:MonotoneAwaitRepair}]
  %Let $\Aprog^1$ denotes the sound asynchronization from which $\Aprog$ is obtained. $\Aprog$ is obtained from $\Aprog^1$ by moving up the await $\statement_{w}''$ and 
  Moving an await up in $\Aprog'$ can only create data races between actions that occur after the execution of this await (because the invocation is suspended earlier). The only possible repairs of these data races consists in either moving $\statement_{w}''$ down which results in $\Aprog'$ or moving up some other awaits that occur in methods that (indirectly) call the method in which $\statement_{w}''$ occurs. The first case is not applicable because it gives a program that is not smaller than $\Aprog''$. In the second case, 
  %an optimal asynchronization relative to $\Aprog$. Thus, the second case is the only possible repair that results in an optimal asynchronization relative to $\Aprog$. Also, since 
  every await $\statement_{w}'$ that is moved up occurs in a method that (indirectly) calls the method in which $\statement_{w}''$ occurs, and therefore, $\statement_{w}'$ is smaller than $\statement_{w}''$ w.r.t. $\wao$. % i.e., $\statement_{w}'\ \wao\ \statement_{w}''$.    
\end{proof}

Before giving the proof of Theorem \ref{theorem:AsySyn}, we note that the total order relation $\wao$ between awaits is fixed throughout the recursion of AsySyn and it corresponds to the order of the awaits in the weakest asynchronization of $\aprog$, i.e., $\WAsyof{\aprog, \alib, \Alib}$. This is because the order between awaits in the same method might change from one asynchronization to another in $\Asyof{\aprog, \alib, \Alib}$. If the control-flow graph of a method contains branches, it is possible to replace all \plog{await} statements matching $\statement_c$ that are reachable in the CFG from $\statement$ with a single \plog{await} statement $\statement_{w}$, in this case $\statement_{w}$ is ordered before any other await that one of the awaits that $\statement_{w}$ replaces is ordered before and is ordered after any await that all the awaits that $\statement_{w}$ replaces are ordered before. Also, it is possible to add additional \plog{await}s statements in branches, in this case derive a total order between these awaits and order the awaits before or after any other await that the original await was ordered before or after, respectively.

\begin{proof}[Proof of Theorem \ref{theorem:AsySyn}]
Let $\Aprog$ be the weakest asynchronization of $\aprog$, then the set of all sound asynchronizations of $\aprog$ is 
$\mathcal{A} = \{\Aprog'': \pstatesconf[\Aprog''] = \pstatesconf[\aprog] \mbox{ and }\Aprog'' \leq \Aprog' \}$, 
where $\Aprog'$ is the maximal asynchronization of $\aprog$ relative to $\Aprog$. 
It is clear that every asynchronization outputted by \emph{\textsc{AsySyn}($\Aprog,\statement_{w}^{0}$)} is in the set $\mathcal{A}$. 

Let $\Aprog^0$ be a sound asynchronization of $\aprog[\alib]$ w.r.t. $\Alib$, i.e., $\Aprog^0 \in \mathcal{A}$. We will show that \emph{\textsc{AsySyn}} outputs $\Aprog^0$. We have that either $\Aprog^0 = \Aprog'$ or $\Aprog^0 < \Aprog'$. 
The first case implies that $\Aprog^0$ is in \emph{\textsc{AsySyn}($\Aprog,\statement_{w}$)}. For the second case: let $\statement_{w}^{1}$ be the maximum element in $\Aprog'$ w.r.t. $\wao$ that matches the same call as $\statement_{w}^{1'}$ in $\Aprog^0$ s.t. $\cover(\statement_{w}^{1'}) \subset \cover(\statement_{w}^{1})$. Then, let $(\Aprog^1,\statement_{w}^{1}) \in \NextEle(\Aprog',\statement_{w})$. We obtain that either $\Aprog^0 = \Aprog^1$ or $\Aprog^0 < \Aprog^1$. The fist case implies that $\Aprog^0$ is in \emph{\textsc{AsySyn}($\Aprog,\statement_{w}$)}. 
For the second case: let $\Aprog^{1'} = \textsc{MaxRel}(\Aprog^1)$ then either $\Aprog^0 = \Aprog^{1'}$ or $\Aprog^0 < \Aprog^{1'}$. 
The fist case implies that $\Aprog^0$ is in \emph{\textsc{AsySyn}($\Aprog,\statement_{w}$)}. For the second case: let $\statement_{w}^{2}$ be the maximum element in $\Aprog^{1'}$ w.r.t. $\wao$ that matches the same call as $\statement_{w}^{2'}$ in $\Aprog^0$ s.t. $\cover(\statement_{w}^{2'}) \subset \cover(\statement_{w}^{2})$. 
Since $\Aprog^1$ is an immediate successor of $\Aprog'$ by moving the await $\statement_{w}^{1}$, then Lemma~\ref{lemma:MonotoneAwaitRepair} implies $\Aprog^{1'}$ is obtained by moving only awaits smaller than $\statement_{w}^{1}$ w.r.t. $\wao$. 
%obtained from $\Aprog'$ by moving up the await $\statement_{w}^{1}$ and moving an await up can only create data races between actions that occur after the execution of the await. Then, the only possible repair of these data races is by moving some other awaits that occur in methods that call the method in which $\statement_{w}^{1}$ occurs. 
Then, we either have $\statement_{w}^{2}= \statement_{w}^{1}$ or $\statement_{w}^{2}\ \wao\ \statement_{w}^{1}$. Thus,  $(\Aprog^2,\statement_{w}^{2}) \in \NextEle(\Aprog^{1'},\statement_{w}^{1})$. We then obtain that either $\Aprog^0 = \Aprog^2$ or $\Aprog^0 < \Aprog^2$. Then, we repeat the above proof process until we obtain $\Aprog^n = \Aprog^0$. Thus, \emph{\textsc{Asy\-Syn}} outputs $\Aprog^0$. 

Let $\statement_{w}^{1}$ and $\statement_{w}^{2}$ be two distinct await statements in $\Aprog'$ s.t. $\statement_{w}^{2}\ \wao\ \statement_{w}^{1}$ and  $(\Aprog^1,\statement_{w}^{1}),\ (\Aprog^2,\statement_{w}^{2})  \in \NextEle(\Aprog',\statement_{w})$. Similar to before then we have that $\Aprog^{2'} = \textsc{MaxRel}(\Aprog^2)$ is obtained by moving only awaits smaller than $\statement_{w}^{2}$ w.r.t. $\wao$. Thus, in $\Aprog^{2'}$ the await $\statement_{w}^{1}$ is in the same position as in $\Aprog'$. Then, $\Aprog^{1'} = \textsc{MaxRel}(\Aprog^1)$ is different than $\Aprog^{2'}$. For any two programs $\Aprog^{1''}$ and $\Aprog^{2''}$ s.t. $\Aprog^{1''}$ (resp., $\Aprog^{2''}$) is outputted by \emph{\textsc{Asy\-Syn}($\Aprog^{1'},\statement_{w}^{1}$)} (resp., \emph{\textsc{Asy\-Syn}($\Aprog^{2'},\statement_{w}^{2}$)}), we have that the two programs are distinct since in $\Aprog^{2''}$ the await $\statement_{w}^{1}$ is in the same position as in $\Aprog'$. Thus, we get that \emph{\textsc{Asy\-Syn}} outputs every element of $\mathcal{A}$ only once. 
\end{proof}

    %!TEX root = draft.tex
%\newpage
\section{Formalization and Proofs of Section~\ref{sec:6}} \label{sec:Appendix6}

\begin{wrapfigure}{r}{0.21\textwidth}
  \vspace{-1.2cm}
  \centering
  \lstset{basicstyle=\ttfamily\scriptsize,numberblanklines=false,mathescape=true,morekeywords={async,method,await}}
  \begin{minipage}[l]{0.97\linewidth}
  \begin{lstlisting} 
async method Main {  
  r1 = call m;
  x = 1;
  await r1;
} 
async method m {
  r2 = call m1;
  r3 = call m1;
  await r2;
  r4 = x;
  if r4 == 1
    y = 2;
  await r3;
}
async method m1 {
  await *;
  r5 = y;
  return;
}
\end{lstlisting}
\end{minipage}
\vspace{-0.5cm}
  \caption{}
  \label{fig:dataRacesExample10}
  \vspace{-1cm}
\end{wrapfigure}
The following lemma proves that for any unsound asynchronization, any trace with a data race contains at least one data race that involves two actions that are synchronously reachable. For instance, the program in Fig.~\ref{fig:dataRacesExample10} has two data races, one between $x = 1$ and $r4 = x$ and the other between $y=2$ and $r5=y$. However, the statement $y = 2$ is not reachable in the corresponding synchronous program. It is reachable in this asynchronization because of the data race between $x = 1$ and $r4 = x$, which are both reachable in the synchronous program. Eliminating the latter data race by moving the statement $\awaitact\ r1$ before $x =1$, makes $y = 2$ unreachable and the data race between $y=2$ and $r5=y$ is also eliminated.

\vspace{-1mm}
\begin{lemma}\label{lem:races}
An asynchronization $\Aprog[\Alib]$ is sound iff it does not admit data races between actions that are synchronously reachable.
\vspace{-1mm}
\end{lemma}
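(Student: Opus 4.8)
The plan is to establish the two directions of the equivalence separately, with almost all of the work living in the right-to-left direction, which is precisely the statement illustrated by Fig.~\ref{fig:dataRacesExample10}. The left-to-right direction is immediate: a sound asynchronization is free of data races (this is the property for which Lemma~\ref{lemma:soundDR} guarantees configuration-equivalence), so in particular it has no data race between two synchronously reachable actions. The substantive claim is the converse, which I would prove in contrapositive form: if $\Aprog[\Alib]$ admits a data race in some trace $\tau \in \tracesconf(\Aprog[\Alib])$, then it already admits a data race between two synchronously reachable actions. Proving this per trace yields, as a bonus, exactly the sharper statement advertised before the lemma, namely that every trace carrying a data race carries one between synchronously reachable actions.

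To prove the per-trace claim I would single out the earliest data race. Fix a trace $\tau$ with a data race, order all data races of $\tau$ by the trace position of their later action, and let $(\action_1,\action_2)$ be the minimal one, with $\action_2$ the later action. Let $\tau'$ be the prefix of $\tau$ ending just before $\action_2$; it contains $\action_1$ together with every action enabling $\action_1$ and $\action_2$. By minimality, no data race lies entirely within $\tau'$, so $\tau'$ is data-race-free. I would then invoke the reordering argument behind Lemma~\ref{lemma:soundDR}: a data-race-free execution can be turned, by repeatedly swapping concurrent (hence non-conflicting) actions, into a synchronous execution $\sigma$ of $\aprog$ reaching the same configuration. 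The feature of this reordering that I would extract is that swapping non-conflicting concurrent actions never changes the value returned by a read, so every read in $\tau'$ reads the same value it reads in $\sigma$.

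From there the conclusion follows by observing that the local control flow of each task, up to $\action_1$ and up to $\action_2$, is determined by the values of the reads preceding it; since those values agree with $\sigma$, each task takes in $\tau'$ exactly the control-flow path it takes in the synchronous execution. Hence the statement underlying $\action_1$ is executed in $\sigma$, and the statement underlying $\action_2$ is reached by extending $\sigma$ with one further step from the common configuration reached at the end of $\tau'$ and of $\sigma$ — so both statements are synchronously reachable and $(\action_1,\action_2)$ is a data race between synchronously reachable actions.

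The step I expect to be the main obstacle is this transfer of reachability from the data-race-free prefix to the synchronous execution: making precise that a data-race-free prefix \emph{looks synchronous} to every task, so that the enabling reads of $\action_1$ and $\action_2$ all return synchronous values and the two racing statements therefore lie on control-flow paths actually taken by $\aprog$. A more direct-looking alternative would instead start from a racing action that is \emph{not} synchronously reachable, locate a read $\action_r$ enabling it that reads a non-synchronous value, and argue that such a value must itself be produced by an earlier data race; but that route invites a regress and delicate minimality bookkeeping. Routing the argument through the globally earliest data race of $\tau$ sidesteps the regress and isolates a single, clean appeal to Lemma~\ref{lemma:soundDR}. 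A secondary point still to check is that $\action_2$, being the first action past the prefix, is genuinely synchronously reachable; this is fine because $\sigma$ reaches the very configuration that enabled $\action_2$ in $\tau$, so $\sigma$ extends by $\action_2$'s statement while remaining synchronous.
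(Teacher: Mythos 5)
Your decomposition (trivial left-to-right direction; contrapositive of the converse via an extremal data race) matches the paper's skeleton, but the central step of your argument is genuinely broken: a data-race-free \emph{prefix} does not inherit the reordering guarantee of Lemma~\ref{lemma:soundDR}. That lemma's argument is for complete executions (empty call stack, no pending tasks); for your prefix $\tau'$, a read inside $\tau'$ may race with a write that lies \emph{beyond} $\tau'$, in the not-yet-executed suffix of a suspended task, and such a read returns a value that no synchronous execution produces, since synchronously the callee's suffix runs before the caller's continuation. Concretely, consider \texttt{Main}: \texttt{t = call m; r1 = y; if (r1 == 0) r2 = z; await t;} and \texttt{m}: \texttt{await *; z = 9; y = 5; return}. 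Synchronously \texttt{r1} reads $5$ and \texttt{r2 = z} is never executed. The asynchronous trace executes \texttt{r1 = y} (reads $0$), \texttt{r2 = z} (reads $0$), then \texttt{z = 9}, \texttt{y = 5}. The data races are $(\texttt{r1 = y}, \texttt{y = 5})$ and $(\texttt{r2 = z}, \texttt{z = 9})$; under your ordering (position of the later action), the minimal race is $(\texttt{r2 = z}, \texttt{z = 9})$. Its prefix $\tau'$ contains no complete race, yet the read \texttt{r1 = y} in $\tau'$ disagrees with every synchronous execution --- so your claim that all reads in $\tau'$ agree with $\sigma$ is false --- and worse, your conclusion fails for this minimal race: $\action_1 = \texttt{r2 = z}$ is \emph{not} synchronously reachable. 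The lemma is saved only by the other race $(\texttt{r1 = y}, \texttt{y = 5})$, which your argument never surfaces.

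The repair is essentially the route you considered and rejected. The paper's proof takes the \emph{first race in $\tau$ having at least one synchronously unreachable action} (rather than the globally earliest race), observes that the unreachable action must have been enabled by a read returning a value never read in any synchronous execution, and that such a value can only be produced by an earlier data race --- which, by the choice of minimality, involves only synchronously reachable actions, contradicting the assumption that no such race exists. The regress you feared terminates after one step precisely because minimality is taken over races with an unreachable endpoint, not over all races. A secondary slip in your write-up: even granting read-agreement, ``extending $\sigma$ by one step'' to reach $\action_2$ does not typecheck, since $\sigma$ is a completed synchronous execution reaching the same \emph{final} valuation, not the intermediate configuration enabling $\action_2$ (in Fig.~\ref{fig:dataRacesExample10}, \texttt{x = 1} executes last synchronously but early asynchronously); the correct move would be to argue control-flow agreement for $\action_2$'s task directly, exactly as you do for $\action_1$ --- but that of course depends on the read-agreement claim that fails above.
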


\begin{proof}[Proof of Lemma~\ref{lem:races}]
  Assume by contradiction that $\Aprog[\Alib]$ is sound and it admits a data race $(\action_1,\action_2)$ in a trace $\tau\in \tracesconf(\Aprog[\Alib])$ where one of the actions, say $\action_1$, is not synchronously reachable. We assume w.l.o.g that the data race $(\action_1,\action_2)$ is the first that occurs in $\tau$ with at least one synchronously unreachable action. Then, there must exist a read access $\action_r$ that enabled $\action_1$, and therefore, $\action_r$ reads a value that was not read in any synchronous execution. Thus, the read value must the result of another data race that occurs earlier in the trace $\tau$, which is a contradiction. 
\end{proof}

  \begin{wrapfigure}{r}{0.21\textwidth}
    \vspace{-0.8cm}
    \centering
    \lstset{basicstyle=\ttfamily\scriptsize,numberblanklines=false,mathescape=true,morekeywords={async,method,await}}
    \begin{minipage}[l]{0.97\linewidth}
    \begin{lstlisting} 
async method Main {
 r1 = call m;
 r2 = x;
 x = r2 + 1;
 await r1;
}  
async method m {
 await *; 
 x = 2;
 return;
}
  \end{lstlisting}
  \end{minipage}
  \vspace{-0.8cm}
    \caption{}
    \label{fig:dataRacesExample01}
    \vspace{-0.8cm}
  \end{wrapfigure}
In Fig.~\ref{fig:dataRacesExample01}, we explain how the repairing data races based on the partial order relating data races  allows to avoid superfluous repair steps. For instance, in Fig. \ref{fig:dataRacesExample01}, the first data race to repair involves the read of \texttt{x} from \texttt{Main} and the write to \texttt{x} in \texttt{m}, because these statements are the first to execute in the original sequential program among the other statements involved in data races. Repairing this data race consists in moving \texttt{await r1} before the read of \texttt{x} from \texttt{Main}, which implies that \texttt{m} completes before the read of \texttt{x}. This repair is defined from a notion of \emph{root cause} of a data race, that in this case, contains the call to \texttt{m} and the read of \texttt{x} from \texttt{Main}. Interestingly, this repair step removes the write-write data race between the write to \texttt{x} in \texttt{Main} and the write to \texttt{x} in \texttt{m} as well. If we would have repaired these data races in the opposite order, we would have moved \texttt{await t1} first before the write to \texttt{x}, and then, before the read of \texttt{x}.

\begin{wrapfigure}{r}{0.15\textwidth}
  \vspace{-0.1cm}
  \lstset{basicstyle=\ttfamily\scriptsize,numberblanklines=false,mathescape=true,morekeywords={async,method,await}}
  \centering
  \begin{lstlisting} 
method Main {         
  while $*$
    if $*$
      r1 = x;
    r2 = y; 
}  
  \end{lstlisting}
  \vspace{-0.7cm}
  \caption{}
  \label{fig:nondeterministicprogram}
  \vspace{-0.7cm}
\end{wrapfigure}
In Fig.~\ref{fig:nondeterministicprogram}, we give a non-deterministic program where two statements of the program can be executed in different orders in different executions. In particular, the statements \texttt{r1 = x} and \texttt{r2 = y} of the program can be executed in different orders depending on the number of loop iterations and whether the if branch is entered during the first loop iteration. 

\begin{wrapfigure}{r}{0.21\textwidth}
  \vspace{-0.9cm}
  \centering
  \lstset{basicstyle=\ttfamily\scriptsize,numberblanklines=false,mathescape=true,morekeywords={async,method,await}}
  \begin{minipage}[l]{0.99\linewidth}
  \begin{lstlisting} 
async method Main {  
  r1 = call m;
  if $*$
    r2 = x;
    x = r2 + 1;     
  else 
    r3 = x;
  await r1;
} 
async method m {
  await $*$
  retVal = x;
  x = input;
  return;
}
\end{lstlisting}
\end{minipage}
\vspace{-0.5cm}
  \caption{}
  \label{fig:dataRacesExample2}
  \vspace{-0.7cm}
\end{wrapfigure}
For the program in Fig.~\ref{fig:dataRacesExample2}, we have the following order between data races: $(\texttt{x = input}, \texttt{r2 = x})\prec_\so$$(\texttt{retVal = x},$ $\texttt{x = r2 + 1})$ because \texttt{r2 = x} is executed before the write \texttt{x = r2 + 1} in the original synchronous program (for simplicity we use statements instead of actions). However, the data races $(\texttt{x = input},\texttt{r2 = x})$ and $(\texttt{x = input},\texttt{r3 = x})$ are incomparable.

%\section{ Lemma~\ref{lemma:anc2}}

The following lemma identifies a sufficient transformation for repairing a data race $(\action_1, \action_2)$: moving the await $\statement_w$ generating the action $\action_w$ just before the statement $\statement$ generating $\action$. This is sufficient because it ensures that every statement that follows $\lca_\co(\action_1,\action_2)$\footnote{We abuse the terminology and make no distinction between statements and actions.} in call order will be executed before $\action$ and before any statement which succeeds $\action$ in call order, including $\action_2$. Note that moving the await $\action_w$ anywhere after $\action$ will not affect the concurrency between $\action_1$ and $\action_2$. 

\begin{lemma} \label{lemma:anc2}
  Let $(\action_1, \action_2)$ be a data race in a trace $\tau$ of an asynchronization $\Aprog$, and $\action_c = (\aniden, \callact(\anidenp)) = \lca_\co(\action_1,\action_2)$. Then, $\tau$ contains a unique action $\action_w = (\aniden, \awaitact(\anidenp))$ and a unique action $\action$ such that:
  %\vspace{-1mm}
  \begin{itemize}[noitemsep,topsep=0pt]
    \item $(\action,\action_w)\in\mo$, and $\action$ is the latest action in the method order $\mo$ such that $(\action_c,\action)\in \mo$ and $(\action,\action_2)\in \co^*$ ($\co^*$ denotes the reflexive closure of $\co$).
  \end{itemize}
  \vspace{-1mm}
\end{lemma}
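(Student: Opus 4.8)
The plan is to exploit the fork--join structure induced by a matching call/await pair together with the tree shape of the call order $\co$. I would first locate $\action_1$ and $\action_2$ relative to the call $\action_c = (\aniden, \callact(\anidenp))$. Since $\action_c$ is a call action, its two immediate $\co$-successors are the first action of the callee task $\anidenp$ and the first action of $\aniden$ that follows $\action_c$ in method order (the continuation of $\aniden$). Because $\action_c = \lca_\co(\action_1,\action_2)$, the two racing actions cannot lie in the same $\co$-subtree of $\action_c$, so one of them lies in the subtree spanned by $\anidenp$ and its transitive callees and the other in the continuation subtree of $\aniden$. I would name them so that $\action_1$ is the callee-side action and $\action_2$ the continuation-side one; this is the only reading under which the claimed $\action$ can exist, and it is the configuration targeted by the repair discussed before the lemma.

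Next I would pin down $\action_w$. Since $\Aprog$ is well formed, every call has exactly one matching await, so the call $\action_c$ to the task instance $\anidenp$ is matched by a single await statement generating $\action_w = (\aniden, \awaitact(\anidenp))$; as $\tau$ is a complete trace this await is executed, which yields both the existence and the uniqueness of $\action_w$ in $\tau$.

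For $\action$, I would set it to be the $\mo$-maximum of the set $S = \{ a : (\action_c, a)\in\mo \text{ and } (a,\action_2)\in\co^*\}$. Every element of $S$ is $\mo$-comparable with $\action_c$ and hence belongs to task $\aniden$, and $\mo$ totally orders the actions of a single task, so this maximum is unique as soon as $S\neq\emptyset$. Non-emptiness is where the first step is used: $\action_2$ is a $\co$-descendant of $\action_c$ reached through the continuation of $\aniden$, so either $\action_2 \in \aniden$, in which case $\action_2 \in S$ by reflexivity of $\co^*$ (and then $\action = \action_2$), or $\action_2$ lies in a task invoked from the continuation, in which case the call action in $\aniden$ on the $\co$-path to $\action_2$ belongs to $S$. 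Thus $\action$ is well defined and unique.

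It then remains to establish $(\action, \action_w)\in\mo$, which I would prove by contradiction. If instead $\action_w = \action$ or $(\action_w, \action)\in\mo$, then $\action_w$ precedes $\action$ in method order; by the happens-before edges that an await creates, the completion of $\anidenp$ — and in particular $\action_1$ — would happen-before $\action_w$, hence $(\action_1,\action)\in\hbo$, and since the $\co^*$-path from $\action$ down to $\action_2$ descends through nested calls (each a happens-before edge, as a callee cannot act before it is called) we would get $(\action,\action_2)\in\hbo$ and therefore $(\action_1,\action_2)\in\hbo$, contradicting that $(\action_1,\action_2)$ is a data race and hence concurrent. I expect this last step to be the main obstacle: it requires the precise definitions of $\hbo$ and of the synchronization an await imposes between the awaited task and the awaiting task's subsequent actions, and it must be carried out for the nested case in which $\action_2$ sits several calls below $\aniden$, propagating the ordering from $\action$ down to $\action_2$ along the $\co^*$-path.
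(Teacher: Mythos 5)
Your proposal is correct and takes essentially the same route as the paper's proof: existence and uniqueness of $\action_w$ from the trace being complete (empty call stack and no pending tasks) plus the matching semantics, the choice of $\action$ as the $\mo$-latest action of task $\aniden$ with $(\action,\action_2)\in\co^*$ (either $\action_2$ itself or a call action on the path to $\action_2$), and the concluding contradiction that $(\action_w,\action)\in\mo$ would give $(\action_1,\action_2)\in\hbo{}$ via $(\action_1,\action_w)\in\hbo{}$ and $\mo,\co\subseteq\hbo{}$, contradicting concurrency. The only differences are presentational: you make explicit the orientation of the race (callee side vs.\ continuation side of $\action_c$) and the non-emptiness of the candidate set for $\action$, both of which the paper leaves implicit.
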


\begin{proof}[Proof of Lemma \ref{lemma:anc2}]
  Let $\rho$ be the execution of the trace $\tau$. By definition, $\rho$ ends with a configuration where the call stack and the set of pending tasks are empty. Therefore, $\rho$ contains an action $\action_w = (\_,\aniden, \awaitact(\anidenp))$ matching $a_c$ which is unique by the definition of the semantics.
  % since $\action_c = (\aniden, \callact(\anidenp)) \in \rho$ and  then $\action_w = (\aniden, \awaitact(\anidenp)) \in \rho$, the unique await action in the execution matching the call action $\action_c$. 
  Since $(\action_c, \action_1) \in \co$ and $(\action_c, \action_2) \in \co$ then either 
  $\action_c$ and $\action_2$ occur in the same method, or there exists a call action $\action'$ in the same task as $\action_c$ such that $(\action',\action_2)\in \co$.
  Then, we define $\action=\action_2$ in the first case, and $\action$ as the latest action in the same task as $\action_c$ such that $(\action,\action_2)\in \co$ in the second case.
  %generated the action $\action_2$ occurs in the same method $\action_c$. Let $\action$ be either $\action_2$ in the first case or the call action in the second case. Then, $\action$ is the latest action in the same method as $\action_c$ such that $(\action_c,\action)\in \mo$ and $(\action,\action_2)\in \co^*$. 
  We have that $(\action,\action_w)\in\mo$ because otherwise, $(\action_w,\action)\in\mo$ and $(\action,\action_2)\in \co^*$ implies that $(\action_1,\action_2)\in\hbo{}$ (because $(\action_1,\action_w)\in\hbo{}$, and $\mo$ and $\co$ are included in $\hbo{}$), and this contradicts $\action_1$ and $\action_2$ being concurrent.
\end{proof}

When the control-flow graph of the method contains branches, the construction of $\repairdatarace(\Aprog, \statement_c,\statement)$ involves (1) replacing all \plog{await} statements matching $\statement_c$ that are reachable in the CFG from $\statement$ with a single \plog{await} statement placed just before $\statement$, and (2) adding additional \plog{await} statements in branches that ``conflict'' with the branch containing $\statement$. This is to ensure the syntactic constraints described in Section \ref{sec:1}. These additional \plog{await} statements are at maximal distance from the corresponding call statement because of the maximality requirement.

\begin{wrapfigure}{r}{0.49\textwidth}
  \vspace{-0.2cm}
  \lstset{basicstyle=\ttfamily\scriptsize,numbers=none,
          stepnumber=1,numberblanklines=false,mathescape=true,morekeywords={async,method,await}}
  \begin{minipage}[t]{0.48\linewidth}
  \begin{lstlisting} 
async method Main {  
  r1 = call m;
  if $*$

    r2 = x;
  else 
    r3 = y;

  await r1;
} 
async method m {
  await $*$
  retVal = x;
  x = input;
  return;
}
\end{lstlisting}
  \end{minipage}
  \hfill
  \begin{minipage}[t]{0.48\linewidth}
    \begin{lstlisting} 
async method Main {  
  r1 = call m;
  if $*$
    await r1;
    r2 = x;
  else 
    r3 = y;
    await r1;

} 
async method m {
  await $*$
  retVal = x;
  x = input;
  return;
}
  \end{lstlisting}
  \end{minipage}
    \vspace{-0.9cm}
    \caption{Examples of asynchronizations.}
    \label{fig:dataRacesExamples}
    \vspace{-0.8cm}
\end{wrapfigure}
 For instance, to repair the data race between {\tt r2 = x} and {\tt x = input} in the program on the left of Fig.~\ref{fig:dataRacesExamples}, the statement \plog{await} \texttt{r1} must be moved before \texttt{r2 = x} in the \plog{if} branch, which implies that another await must be added on the \plog{else} branch. The result is given on the right of Fig.~\ref{fig:dataRacesExamples}.

The following lemma shows that repairing a minimal data race cannot introduce smaller data races (w.r.t. $\prec_\so$), which ensures some form of monotonicity when repairing minimal data races iteratively.

\vspace{-1mm}
\begin{lemma}\label{lemma:repairMono}
Let $\Aprog$ be an asynchronization, $(\action_1,\action_2)$ a data race in $\Aprog$ that is minimal w.r.t. $\prec_\so$, and $(\statement_c,\statement)$ the root cause of $(\action_1,\action_2)$. Then, $\repairdatarace(\Aprog, \statement_c,\statement)$ does not admit a  data race that is smaller than $(\action_1,\action_2)$ w.r.t. $\prec_\so$.
\vspace{-1mm}
\end{lemma}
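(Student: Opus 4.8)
The plan is to combine the explicit form of the repair given by Lemma~\ref{lemma:anc2} with the monotonicity of moving awaits upward that already underlies Lemma~\ref{lemma:MonotoneAwaitRepair}, and then to localize every newly created race in the synchronous order so that its later component cannot fall strictly before $\action_2$. Since $\prec_\so$ compares two races through the synchronous position of their second (synchronously later) components, it suffices to show that any data race $(\action_1',\action_2')$ present in $\repairdatarace(\Aprog,\statement_c,\statement)$ but not in $\Aprog$ has $\action_2'$ not synchronously strictly before $\action_2$.

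First I would unfold the repair: by Lemma~\ref{lemma:anc2} and the construction of $\repairdatarace$, the root cause $(\statement_c,\statement)$ prescribes moving the await $\statement_w$ matching $\statement_c$ to the position immediately before $\statement$, where $\statement$ generates the action $\action$ with $\action_c=\lca_\co(\action_1,\action_2)$, $(\action_c,\action)\in\mo$, and $(\action,\action_2)\in\co^*$. As this transformation moves an await up, the monotonicity argument of Lemma~\ref{lemma:MonotoneAwaitRepair} applies: suspending the invocation $\anidenp$ earlier can only create races between actions that both occur after the execution of $\statement_w$, hence after $\action$. Moreover, $\anidenp$ now completes at the relocated await, so it happens-before every action in the call subtree rooted at $\action$ (in particular before $\action_2$ and before every action synchronously between $\action$ and $\action_2$); therefore no newly created race can pair an action of $\anidenp$ with an action synchronously before $\action_2$.

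The central step is to bound $\action_2'$ in the synchronous order. Both $\action_1'$ and $\action_2'$ occur after the relocated await, which sits just before $\action$, so neither is synchronously before $\action$. When $\action=\action_2$ (the first case of Lemma~\ref{lemma:anc2}) this already gives that $\action_2'$ is synchronously at or after $\action_2$, and we are done. In the remaining case, where $\action$ is a call from which $\action_2$ is reached via $\co^*$, the only actions that are simultaneously after the await and synchronously strictly before $\action_2$ lie inside the subtree rooted at $\action$. A new race with its later component there cannot involve $\anidenp$, by the previous paragraph; and a new race there not involving $\anidenp$ would be insensitive to the relocation of $\statement_w$ and thus already present in $\Aprog$, contradicting the minimality of $(\action_1,\action_2)$ w.r.t. $\prec_\so$. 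Hence $\action_2'$ is never synchronously strictly before $\action_2$, which is exactly $(\action_1',\action_2')\not\prec_\so(\action_1,\action_2)$.

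The main obstacle I expect is precisely this last localization: ruling out a freshly created race whose later component is synchronously strictly between $\action$ and $\action_2$. The delicate point is that relocating $\statement_w$ makes the caller of $\anidenp$ yield earlier, which a priori could expose new interleavings of the subtree of $\action$ with concurrently pending tasks, so the argument must show that any such interleaving either already existed before the repair (making the race not new) or is blocked by the new happens-before edges emanating from the now-synchronized $\anidenp$. A secondary technical point is the branching case of $\repairdatarace$, where awaits are additionally inserted at maximal distance in branches conflicting with the one containing $\statement$; one must check that these auxiliary awaits, being placed as late as possible, introduce no race with a $\prec_\so$-smaller later component, which again follows by applying the monotonicity property branch-wise.
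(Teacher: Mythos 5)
Your proposal is correct and rests on the same core mechanism as the paper's proof: the only change made by $\repairdatarace(\Aprog,\statement_c,\statement)$ is moving the await $\statement_w$ matching $\statement_c$ up to just before $\statement$, and the only concurrency this can add is between actions occurring after the relocated await. Where you diverge is in the finishing step. The paper localizes \emph{both} sides of any newly enabled race directly: the new concurrent pairs $(\action',\action'')$ have $\statement'$ related by $\co^*$ to the continuation of $\ameth$ from $\statement$ onward, and $\statement''$ related by $\co^*$ to statements following the call to $\ameth$ in $\ameth$'s caller; since in synchronous order the caller's post-call code runs after all of $\ameth$ (in particular after $\action_2$, which lies in the subtree of $\action$), the $\so$-later component of any new race satisfies $\statement_2\prec\statement''$ (and also $\statement_1\prec\statement'$), so every new race is strictly bigger w.r.t. $\prec_\so$ --- no appeal to minimality of $(\action_1,\action_2)$ is needed for the new races. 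You instead argue by contradiction inside the subtree of $\action$: a race whose later component is synchronously strictly before $\action_2$ either involves the awaited task $\anidenp$ (blocked by the new happens-before edge from $\anidenp$'s completion) or is insensitive to the relocation and hence already present in $\Aprog$, contradicting minimality. Your route is sound (the insensitivity claim holds because moving $\statement_w$ up only deletes happens-before edges from $\ameth$'s continuation to the caller's post-call region and adds edges out of $\anidenp$'s completion), and it buys a more explicit treatment of the worry you name --- interleavings of $\action$'s subtree with other pending tasks --- as well as of the branching case with auxiliary awaits, which the paper's proof does not discuss; the paper's direct localization buys a shorter argument and the stronger conclusion that all new races are strictly larger, including the bound on first components useful if $\prec_\so$ ties are broken there. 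One slip to fix: early on you write that the repair suspends ``the invocation $\anidenp$'' earlier --- it is the task executing $\ameth$ (the one containing $\statement_w$) that suspends earlier, while $\anidenp$'s execution is unchanged; your subsequent sentences show you use the correct reading, so this is cosmetic rather than a gap.
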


%\newpage

\begin{wrapfigure}{r}{0.45\textwidth}
  \centering
  \vspace{-0.1cm}
      \includegraphics[width=0.45\textwidth]{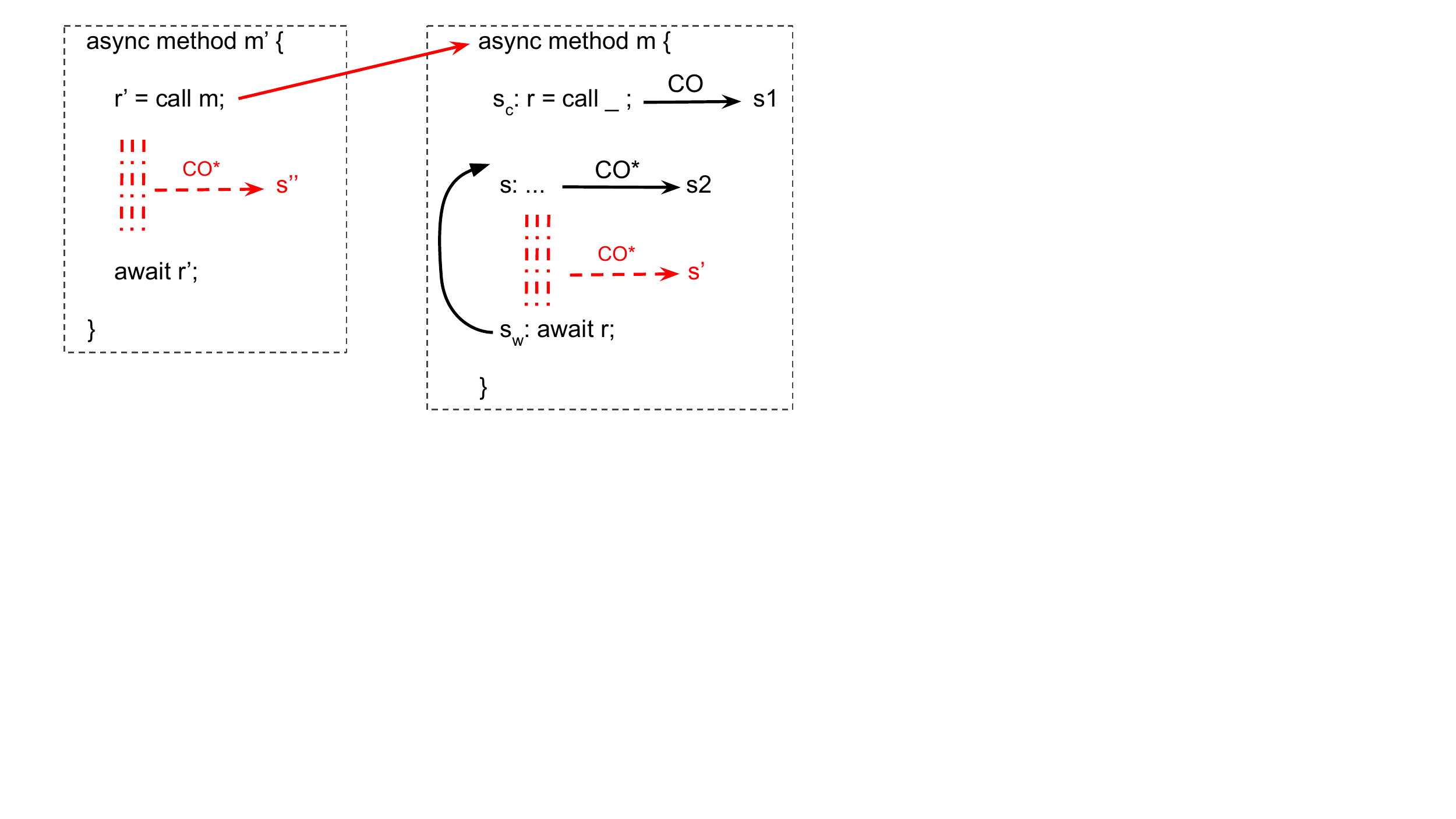}
  \vspace{-0.6cm}
      \caption{An excerpt of an asynchronous program.}
      \label{fig:proof}
      \vspace{-0.7cm}
\end{wrapfigure}
%\begin{proof}%[]
\textsc{Proof of Lemma~\ref{lemma:repairMono}.}
The only modification in the program $\Aprog'=\repairdatarace(\Aprog, \statement_c,\statement)$ compared to $\Aprog$ is the movement of the await $\statement_w$ matching the call $\statement_c$ to be before the statement $\statement$ in a method $\ameth$. The concurrency added in $\Aprog'$ that was not possible in $\Aprog$ is between actions $(\action',\action'')$ generated by statements $\statement'$ and $\statement''$, respectively, as shown in Fig.~\ref{fig:proof}. W.l.o.g., we assume that $(\action',\action'') \in \so$.
  The statements $\statement_1$ and $\statement_2$ are those generating $\action_1$ and $\action_2$, respectively.
  The statement $\statement'$ is related by $\co^*$ to some statement in $\ameth$ that follows $\statement$, and $\statement''$ is related by $\co^*$ to some statement that follows the call to $\ameth$ in the caller of $\ameth$. Note that $\statement'$ is ordered by $\prec$ after $\statement_2$. 
  Since $(\action_1,\action_2) \in \so$ and $(\action',\action'') \in \so$ then  $\statement_2\prec \statement''$ and $\statement_1\prec \statement'$. Thus, any new data race $(\action',\action'')$ in $\Aprog'$ that was not reachable in $\Aprog$ is bigger than $(\action_1,\action_2)$. 
%\end{proof}
\hfill $\Box$

\begin{theorem}\label{theorem:anc3}
  Given an asynchronization $\Aprog\in \Asyof{\aprog, \alib, \Alib}$, \textsc{MaxRel}($\Aprog$) returns the optimal asynchronization of $\aprog$ relative to $\Aprog$.
\end{theorem}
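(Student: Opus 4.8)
The plan is to compare the output $\hat\Aprog = \textsc{MaxRel}(\Aprog)$ with the unique maximal sound asynchronization $\Aprog^*$ of $\aprog$ relative to $\Aprog$, whose existence and uniqueness are given by Lemma~\ref{lemma:optimalAsync}, and to prove $\hat\Aprog = \Aprog^*$ by two inequalities. Recall that \textsc{MaxRel} iterates the following: pick a data race that is $\prec_\so$-minimal among those between synchronously reachable actions, compute its root cause $(\statement_c,\statement)$ as characterized by Lemma~\ref{lemma:anc2}, and apply $\repairdatarace(\Aprog,\statement_c,\statement)$. I write $\Aprog = \Aprog_0, \Aprog_1, \ldots, \hat\Aprog$ for the intermediate programs produced.

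First I would dispatch termination and the easy inequality. Every repair moves a single await \emph{up} (toward its matching call), so $\Aprog_{i+1} \leq \Aprog_i \leq \Aprog$; since each await admits only finitely many positions, the loop terminates. When it halts, no data race between synchronously reachable actions remains, so by Lemma~\ref{lem:races} the output $\hat\Aprog$ is sound. Being sound and below $\Aprog$, it satisfies $\hat\Aprog \leq \Aprog^*$ by the definition of $\Aprog^*$ as the greatest sound asynchronization below $\Aprog$.

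The heart of the proof is the reverse inequality $\hat\Aprog \geq \Aprog^*$, which I would obtain by maintaining the invariant $\Aprog_i \geq \Aprog^*$. The base case $\Aprog_0 = \Aprog \geq \Aprog^*$ is immediate. For the inductive step, the race repaired at iteration $i$ is between synchronously reachable actions, so by Lemma~\ref{lem:races} it is \emph{not} a data race in the sound program $\Aprog^*$; since $\Aprog^* \leq \Aprog_i$ and moving awaits up only adds happens-before, $\Aprog^*$ must already remove this concurrency. The crux, deferred to the next paragraph, is that $\Aprog^*$ does so precisely by placing the await matching $\lca_\co(\action_1,\action_2)$ far enough up. Granting this, Lemma~\ref{lemma:anc2} tells us that $\repairdatarace$ puts exactly this await at the \emph{maximal} distance from its call that still eliminates the race, so that await sits at least as far from its call in $\Aprog_{i+1}$ as in $\Aprog^*$, while every other await coincides with $\Aprog_i$. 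With the induction hypothesis this gives $\Aprog_{i+1} \geq \Aprog^*$, and at termination $\hat\Aprog \geq \Aprog^*$, whence $\hat\Aprog = \Aprog^*$.

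The step I expect to be the main obstacle is justifying that the canonical repair is \emph{necessary}, i.e., that $\Aprog^*$ removes the minimal race by moving precisely the await at $\lca_\co$ and not, instead, some \emph{ancestor} await; if $\Aprog^*$ eliminated the race only via an ancestor await it could leave the inner await further down, breaking the invariant. I would rule this out by combining the maximality of $\Aprog^*$ — replacing an ancestor move by the inner move keeps the program strictly larger, so a maximal program never prefers the ancestor move when the inner one suffices — with the $\prec_\so$-increasing order of repairs guaranteed by Lemma~\ref{lemma:repairMono}: any ancestor-await movement that $\Aprog^*$ genuinely requires is forced by a strictly smaller data race, which \textsc{MaxRel} has already repaired, so the corresponding await already agrees with $\Aprog^*$ in $\Aprog_i$ by the induction hypothesis. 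Interleaving these two observations so that the invariant survives every iteration is the delicate part of the argument.
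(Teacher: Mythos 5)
Your skeleton (sandwich the output against the unique maximal sound asynchronization $\Aprog^*$ from Lemma~\ref{lemma:optimalAsync}, with the invariant $\Aprog^*\leq\Aprog_i$ maintained across repair iterations) is a genuinely different decomposition from the paper's: the paper never reasons iteration-by-iteration. It proves soundness of the output exactly as you do (via Lemma~\ref{lem:races}, plus Lemma~\ref{lemma:soundDR}), and then handles maximality by perturbing the \emph{final} program: for any successor $\Aprog^1$ of the output with $\Aprog^1\leq\Aprog$, it picks the $\wao$-\emph{biggest} await whose position changed, observes that \textsc{MaxRel} must have moved that await up to repair some race $(\action_1,\action_2)$, and replays that race in $\Aprog^1$ — the point of the $\wao$-maximal choice being that every await in a method (in)directly called by the method containing that await sits in the same position as in the output, so $\action_1$ and $\action_2$ (reached because every asynchronization contains the synchronous execution where each $\awaitact\ *$ acts as skip) are again unrelated by $\hbo$ and concurrent. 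Uniqueness from Lemma~\ref{lemma:optimalAsync} then closes the argument. This sidesteps entirely the per-step "necessity of the canonical repair" claim your route depends on.

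That claim — the crux you yourself flag — is where your proof has a genuine gap, and the two tools you propose for it do not discharge it. First, "replacing an ancestor move by the inner move keeps the program strictly larger" is false as an order-theoretic statement: a program with the ancestor await high and the inner await low is $\leq$-\emph{incomparable} to one with the ancestor await low and the inner await high, so maximality of $\Aprog^*$ alone orders nothing here (you would need that $\Aprog^*$ is the \emph{greatest} sound element below $\Aprog$, which does follow from uniqueness, combined with a separate soundness argument for the inner-move variant that you never give). Second, Lemma~\ref{lemma:repairMono} is about the repair \emph{process} — applying $\repairdatarace$ to a $\prec_\so$-minimal race creates no smaller race — and says nothing about which awaits a static program like $\Aprog^*$ "must have moved"; $\Aprog^*$ is not produced by any repair sequence, so "already repaired by \textsc{MaxRel}" does not transfer to it. What is actually needed is a semantic locality lemma in the style of the $\hbo$ analysis inside the proofs of Lemma~\ref{lemma:optimalAsync} and Lemma~\ref{lemma:anc2}: for a race with $\action_c=\lca_\co(\action_1,\action_2)$, both actions lie in the subtree of the invocation executing $\action_c$, so awaits in ancestor methods cannot place $\action_1$ and $\action_2$ in $\hbo$; and under your invariant $\Aprog^*\leq\Aprog_i$, awaits inside the callee's subtree can only sit \emph{higher} in $\Aprog^*$, making the callee suspend no later and hence preserving the concurrency; since $\statement$ is by Lemma~\ref{lemma:anc2} the \emph{latest} statement whose $\co^*$-descendants include $\action_2$, any position of the await matching $\statement_c$ after $\statement$ leaves the race intact, forcing $\Aprog^*$'s await to be at or above the position $\repairdatarace$ chooses. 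With that lemma stated and proved (including the check that the witnessing concurrent execution really exists in $\Aprog^*$, via synchronous reachability of $\action_1,\action_2$), your induction closes; without it, the inductive step is unsupported exactly where you predicted the difficulty would be.
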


\begin{proof}[Proof of Theorem~\ref{theorem:anc3}]
  Since the recursive calls $\textsc{RCMinDR}$ find all data races between synchronously reachable actions then the output $\Aprog'=\textsc{MaxRel}(\Aprog)$ is sound and therefore it is equivalent to $\aprog$ (Lemma \ref{lemma:soundDR} and Lemma \ref{lem:races}). Now we need to show that any successor $\Aprog^1$ of $\Aprog'$ that is also smaller than $\Aprog$ (w.r.t. $\leq$) admits data races. Let $\statement_w$ be the biggest await statement w.r.t. $\wao$ whose position in $\Aprog^1$ is changed with respect to its position in $\Aprog'$ (moved down). Since $\Aprog^1\leq \Aprog$, then $\statement_w$ was also moved up by the procedure $\textsc{MaxRel}$ with respect to its position in $\Aprog$ to fix some data race $(\action_1,\action_2)$. Let $\ameth$ be the method $\ameth$ that contains $\statement_w$ and $\statement_c$ be the matching call. We will now show that $(\action_1,\action_2)$ forms a data race in $\Aprog^1$ as well.  $\Aprog^1$ has an execution $\rho$ that reaches both $\action_1$ and $\action_2$ (since $\executionsconf(\Aprog^1)$ includes the synchronous execution where all $\awaitact\ *$ are interpreted as skip which reaches $\action_1$ and $\action_2$). Since every other await $\statement'_w$ in $\Aprog^1$ that occurs in a method $\ameth'$ (in)directly called by $\ameth$ (including the method associated with the call $\statement_c$) is in the same position as in $\Aprog'$, then the two actions $\action_1$ and $\action_2$ are not related by $\hbo$ and are concurrent. Thus, $(\action_1,\action_2)$ forms a data race in $\Aprog^1$, which concludes the proof.
\end{proof}

The fact that data races are enumerated in the order defined by $\prec_\so$ guarantees a bound on the number of times an await matching the same call is moved during the execution of  \textsc{MaxRel}($\Aprog$). In general, this bound is the number of statements covered by all the awaits matching the call in the input program $\Aprog$. Actually, this is a rather coarse bound. A more refined analysis has to take into account the number of branches in the CFGs. For programs without conditionals or loops, every await is moved at most once during the execution of  \textsc{MaxRel}($\Aprog$). In the presence of branches, a call to an asynchronous method may match multiple \plog{await} statements (one for each CFG path starting from the call), and the data races that these \plog{await} statements may create may be incomparable w.r.t. $\prec_\so$. 
Therefore, for a call statement $\statement_c$, let $|\statement_c|$ be the sum of $|\cover(\statement_w)|$ for every await $\statement_w$ matching $\statement_c$ in $\Aprog$.

\begin{lemma}\label{lem:await2}
  For any asynchronization $\Aprog\in \Asyof{\aprog, \alib, \Alib}$ and call statement $\statement_c$ in $\Aprog$, the while loop in \textsc{MaxRel}($\Aprog$) does at most $|\statement_c|$ iterations that result in moving an await matching $\statement_c$.
\end{lemma}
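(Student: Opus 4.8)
The plan is to exhibit a potential function on the current program that strictly decreases at every iteration that moves an await matching $\statement_c$, and is left untouched (or only decreased) by all other iterations. Concretely, I would set
\[
  \Phi(\statement_c) = \sum_{\statement_w \text{ matches } \statement_c} |\cover(\statement_w)|,
\]
the sum ranging over the \plog{await} statements matching $\statement_c$ in the current program, with $\cover(\statement_w)$ regarded as a set of statements of $\aprog$. On the input $\Aprog$ this equals $|\statement_c|$, and it is always nonnegative, so the bound follows once I show: (i) each iteration moving an await matching $\statement_c$ drops $\Phi(\statement_c)$ by at least one, and (ii) no iteration raises $\Phi(\statement_c)$.

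For (i), I would invoke Lemma~\ref{lemma:anc2}: repairing the minimal race $(\action_1,\action_2)$ places the matching await $\statement_w$ immediately before the identified statement $\statement$, so $\statement$ leaves the cover of the surviving image of $\statement_w$ along that CFG path, giving a strict drop on that path. The delicate point is the branched case, where $\repairdatarace$ replaces the awaits reachable from $\statement$ by a single await before $\statement$ and inserts fresh awaits in the conflicting branches. Here I would use the placement property recorded before the lemma — the inserted awaits sit at maximal distance from $\statement_c$ — to argue that each inserted await covers only statements already in the cover of the await it supersedes on that branch. Hence the split introduces no new (await, covered-statement) pair, and the pair removed on the repair path is not compensated, so $\Phi(\statement_c)$ strictly decreases.

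For (ii), an iteration that moves an await matching a different call $\statement_c'\neq \statement_c$ changes neither the position of $\statement_c$ nor of any await matching it; since $\cover$ is fixed by these positions and counts only statements of $\aprog$, every summand of $\Phi(\statement_c)$ is preserved (reachability changes could only shrink it, which is harmless). I would then rule out an await matching $\statement_c$ ever being pushed back down, as that would re-enlarge a cover: this is exactly where the enumeration order matters. Because \textsc{MaxRel} processes races along $\prec_\so$ and, by Lemma~\ref{lemma:repairMono}, repairing a minimal race creates no strictly smaller race, awaits are only ever moved up; covers are therefore monotonically non-increasing, and a statement deleted from a cover during a move of an await matching $\statement_c$ is gone permanently. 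Each such move thus consumes at least one of the $|\statement_c|$ initial (await, covered-statement) pairs, capping the number of these moves at $|\statement_c|$.

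I expect the branched case of (i) to be the main obstacle: a repair does not merely slide one await but may duplicate it across branches and create awaits at maximal distance, so a naive count could see the potential grow. The heart of the argument is therefore the placement guarantee of $\repairdatarace$ — that these inserted awaits re-cover only already-covered statements — which is precisely what keeps the split coverage-non-increasing and preserves the net decrease established on the repair path.
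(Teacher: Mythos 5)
Your proposal is correct, and its core is the same counting idea as the paper's: the paper's treatment of the branched case is precisely your potential decrease, stated in one sentence --- ``every repair of a data race consists in moving one await closer to the matching call $\statement_c$ and before one more statement covered by some await matching $\statement_c$ in the input $\Aprog$.'' Your potential $\Phi(\statement_c)=\sum_{\statement_w}|\cover(\statement_w)|$ makes that bookkeeping explicit and uniform, and you correctly identify and discharge the point the paper leaves implicit: that the awaits inserted by $\repairdatarace$ in conflicting branches (at maximal distance from $\statement_c$) only re-cover statements already in the cover of the superseded, post-join await, so a repair never creates a new (await, covered-statement) pair. Where the two proofs diverge is in emphasis: the paper spends most of its effort on a sharper claim for programs without conditionals or loops --- each individual await is moved \emph{at most once} --- proved by contradiction via Lemma~\ref{lemma:anc2} (if two races were repaired by two moves of the same await, the witnessing actions $\action$ and $\action'$ are related by $\mo^*$, so the first move would already have repaired the second race), and only then falls back on the counting argument for the general case; your uniform potential argument skips this refinement but still yields the stated $|\statement_c|$ bound everywhere. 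One small misattribution: the fact that awaits only ever move up during \textsc{MaxRel} is not a consequence of Lemma~\ref{lemma:repairMono} --- that lemma only guarantees that repairing a minimal race creates no $\prec_\so$-smaller race, validating the enumeration order --- but rather holds by construction of the repair step itself (cf.\ the discussion around Lemma~\ref{lemma:MonotoneAwaitRepair}, where downward moves arise only when passing to a \emph{predecessor} asynchronization, not within \textsc{MaxRel}); this does not affect the soundness of your argument.
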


\begin{proof}[Proof of Lemma~\ref{lem:await2}]
  We consider first the case without conditionals or loops, and we show by contradiction that every \plog{await} statement $\statement_w$ is moved at most once during the execution of  \textsc{MaxRel}($\Aprog$), i.e., there exists at most one iteration of the while loop which changes the position of $\statement_w$. Suppose that the contrary holds for an \plog{await} $\statement_w$. Let $(\action_1,\action_2)$, and $(\action_3,\action_4)$ be the data races repaired by the first and second moves of $\statement_w$, respectively. 
  By Lemma \ref{lemma:anc2}, there exist two actions $\action$ and $\action'$ such that 
  \begin{align*}
  (\action_c,\action)\in \mo,\ (\action,\action_2)\in \co^*,(\action,\action_w)\in\mo\mbox{ and }
  (\action_c,\action')\in \mo,\ (\action',\action_4)\in \co^*, (\action',\action_w)\in\mo
  \end{align*}
  where  $\action_w = (\_,\aniden, \awaitact(\anidenp))$ and $\action_c = (\_,\aniden, \callact(\anidenp))$ are the asynchronous call action and the matching  await action. 
  Let $\statement_2$ and $\statement_4$ be the statements generating the two actions $\action_2$ and $\action_4$, respectively.
  Then, we have either  $\statement_2\prec \statement_4$ or $\statement_2 = \statement_4$, and both cases imply that $(\action,\action')\in \mo^*$. Thus, moving the await statement generating $\action_w$ before the statement generating $\action$ implies that it is also placed before the statement generating $\action'$ (that occurs after $\action$ in the same method). Thus, the first move of the await $\statement_w$ repaired both data races, which is contradiction. 
  
  In the presence of conditionals or loops, moving an await up in one branch may correspond to adding multiple awaits in the other conflicting branches. Also, one call in the program may correspond to multiple awaits on different branches. However, every repair of a data race consists in moving one await closer to the matching call $\statement_c$ and before one more statement covered by some await matching $\statement_c$ in the input $\Aprog$.
  \end{proof}
    %!TEX root = draft.tex
\section{Computing Root Causes of Minimal Data Races} \label{sec:recApp}
We present a reduction from the problem of computing root causes of minimal data races to reachability (assertion checking) in sequential programs. 
  This reduction builds on a program instrumentation for checking if there exists a minimal data race that involves two given statements $(\statement_1,\statement_2)$ that are reachable in an execution of the original synchronous program, whose correctness relies on the assumption that another pair of statements cannot produce a smaller data race. This instrumentation is used in an iterative process where pairs of statements are enumerated according to the colexicographic order induced by $\prec$. This specific enumeration ensures that the assumption made for the correctness of the instrumentation is satisfied.
  
\begin{figure}[t]
  \lstset{basicstyle=\ttfamily\scriptsize,numbers=left,
      stepnumber=1,numberblanklines=false,mathescape=true,escapechar=@,morekeywords={assert}}
  \begin{minipage}[t]{0.51\textwidth}
  \begin{lstlisting}[xleftmargin=4mm]
Add before $\statement_1$:
if ( lastTaskDelayed == $\bot$ && * ) @\label{ln:s1Start}@
  lastTaskDelayed := myTaskId();
  DescendantDidAwait := thisHasDoneAwait;
  return @\label{ln:s1End}@

Add before $\statement_2$:
  if ( task_$\statement_c$ == myTaskId() )
    $\statement$ := $\statement_2$; @\label{ln:S2S}@
  assert (lastTaskDelayed == $\bot$ || !DescendantDidAwait); @\label{ln:assert}@
\end{lstlisting}
  \end{minipage}
  \hfill
  \begin{minipage}[t]{0.465\textwidth}
  \begin{lstlisting}[firstnumber=13]
Replace every statement ``await r'' with:
  if( r == lastTaskDelayed  ) then @\label{ln:awaitStart}@
    if ( !DescendantDidAwait )
      DescendantDidAwait := thisHasDoneAwait; 
    lastTaskDelayed := myTaskId();
    return @\label{ln:awaitReturn}@
  else  
    thisHasDoneAwait := true @\label{ln:awaitEnd}@

Add before every statement ``r := call m'':
  if ( task_$\statement_c$ == myTaskId() ) then @\label{ln:callS}@
    s := this statement;

Add after every statement ``r := call m'':
  if ( r == lastTaskDelayed ) @\label{ln:callSC}@
    $\statement_c$ := this statement;
    task_$\statement_c$ := myTaskId();
\end{lstlisting}
  \end{minipage}
  \vspace{-0.4cm}
  \caption{A program instrumentation for computing the root cause of a minimal data race between the statements $\statement_1$ and $\statement_2$ (if any). All variables except for \texttt{thisHasDoneAwait} are program (global) variables. \texttt{thisHasDoneAwait} is a local variable. The value $\bot$ represents an initial value of a variable. The variables $\statement_c$ and $\statement$ store the (program counters of the) statements representing the root cause. The method \texttt{myTaskId} returns the id of the current task. }
  \label{fig:instrumentation}
  %\vspace{-0.1cm}
\end{figure}

  Given an asynchronization $\Aprog$, the instrumentation described in Fig.~\ref{fig:instrumentation} represents a synchronous program where all \plog{await} statements are replaced with synchronous code (lines~\ref{ln:awaitStart}--\ref{ln:awaitEnd}). This instrumentation simulates asynchronous executions of $\Aprog$ where methods may be only partially executed, modeling \plog{await} interruptions. It reaches an error state (see the assert at line~\ref{ln:assert}) when an action generated by $\statement_1$ is concurrent with an action generated by $\statement_2$, which represents a data race, provided that $\statement_1$ and $\statement_2$ access a common program variable (these statements are assumed to be given as input). Also, the values of $\statement_c$ and $\statement$ when reaching the assertion violation represent the root-cause of this data race.
  
  The instrumentation simulates an execution of $\Aprog$ to search for a data race as follows (we discuss the identification of the root-cause afterwards): 
  \begin{itemize}[noitemsep,topsep=0pt]
    \item It executes under the synchronous semantics until an instance of $\statement_1$ is non-deterministically chosen as a candidate for the first action in the data race ($\statement_1$ can execute multiple times if it is included in a loop for instance). The current invocation is interrupted when it is about to execute this instance of $\statement_1$ and its task id $t_0$ is stored into \texttt{lastTaskDelayed} (see lines~\ref{ln:s1Start}--\ref{ln:s1End}).
    \item Every invocation that transitively called $t_0$ is interrupted when an \plog{await} for an invocation in this call chain (whose task id is stored into \texttt{lastTaskDelayed}) would have been executed in the asynchronization $\Aprog$ (see line~\ref{ln:awaitReturn}). 
    \item Every other method invocation is executed until completion as in the synchronous semantics.
    \item When reaching $\statement_2$, if $\statement_1$ has already been executed (\texttt{lastTaskDelayed} is not $\bot$) and at least one invocation has only partially been executed, which is recorded in the boolean flag \texttt{DescendantDidAwait} and which means that $\statement_1$ is concurrent with $\statement_2$, then the instrumentation stops with an assertion violation.
  \end{itemize}
  A subtle point is that the instrumentation may execute code that follows an \plog{await} $r$ even if the task $r$ has been executed only partially, which would not happen in an execution of the original $\Aprog$. Here, we rely on the assumption that there exist no data race between that code and the rest of the task $r$. Such data races would necessarily involve two statements which are before $\statement_2$ w.r.t. $\prec$. Therefore, the instrumentation is correct only if it is applied by enumerating pairs of statements $(\statement_1,\statement_2)$ w.r.t. the colexicographic order induced by $\prec$.
  
  Next, we describe the computation of the root-cause, i.e., the updates on the variables $\statement_c$ and $\statement$.  
  By definition, the statement $\statement_c$ in the root-cause should be a call that makes an invocation that is in the call stack when $\statement_1$ is reached. This can be checked using the variable \texttt{lastTaskDelayed} that stores the id of the last such invocation popped from the call stack (see the test at line~\ref{ln:callSC}). The statement $\statement$ in the root-cause can be any call statement that has been executed in the same task as $\statement_c$ (see the test at line~\ref{ln:callS}), or $\statement_2$ itself (see line~\ref{ln:S2S}).

  Let $\sem{\Aprog,\statement_1,\statement_2}$ denote the instrumentation in Fig.~\ref{fig:instrumentation}. We say that the values of $\statement_c$ and $\statement$ when reaching the assertion violation are the root cause computed by this instrumentation. %We use $\prec_\so$ 
  The following theorem states its correctness.
  
  \vspace{-1mm}
  \begin{theorem}\label{thm:RootCause}
  If $\sem{\Aprog,\statement_1,\statement_2}$ reaches an assertion violation, then it computes the root cause of a minimal data race, or there exists $(\statement_3,\statement_4)$ such that $\sem{\Aprog,\statement_3,\statement_4}$ reaches an assertion violation and $(\statement_3,\statement_4)$ is before $(\statement_1,\statement_2)$ in colexicographic order w.r.t. $\prec$.
  \vspace{-1mm}
  \end{theorem}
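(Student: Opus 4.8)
The plan is to read the assertion violation of $\sem{\Aprog,\statement_1,\statement_2}$ operationally and then split on whether the simulated run is a genuine asynchronous execution of $\Aprog$. First I would fix the correspondence between the instrumentation state and the semantics: \texttt{lastTaskDelayed}$\,\neq\bot$ records that some instance of $\statement_1$, in a task $t_0$, was suspended just before executing it (lines~\ref{ln:s1Start}--\ref{ln:s1End}); the test at line~\ref{ln:awaitReturn} suspends exactly those callers of $t_0$ that reach the \plog{await} matching the pending task, so the suspended invocations are precisely the call-stack chain of $t_0$; and \texttt{DescendantDidAwait} is set to true iff one of these invocations has already passed an \plog{await} (recorded in \texttt{thisHasDoneAwait}). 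Consequently, reaching the assertion at line~\ref{ln:assert} means that when $\statement_2$ is about to run, the delayed action $\action_1$ of $\statement_1$ has not completed and is separated from $\action_2$ by at least one \plog{await}, so $(\action_1,\action_2)\notin\hbo{}$; as $\statement_1$ and $\statement_2$ are assumed to access a common variable, $(\action_1,\action_2)$ is a data race in the simulated trace.

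Consider next the \emph{faithful} case, in which every \plog{await}~$r$ that the run proceeds past (the \plog{else} branch, line~\ref{ln:awaitEnd}) awaits a task $r$ that has in fact completed. Then the simulated trace is a real execution of $\Aprog$ and $(\action_1,\action_2)$ is a genuine data race. Here I would appeal to Lemma~\ref{lemma:anc2}: its $\action_c=\lca_\co(\action_1,\action_2)$ is the unique call on the stack of $\statement_1$ whose callee lies in the delayed chain, which is exactly the call recorded into $\statement_c$ at line~\ref{ln:callSC} (whose guard \texttt{r == lastTaskDelayed} fires as the chain is popped), while its action $\action$ --- the latest action after $\action_c$ in method order with $\action\,\co^*\,\action_2$ --- is exactly the last call of $\action_c$'s task recorded at line~\ref{ln:callS}, or $\statement_2$ itself recorded at line~\ref{ln:S2S}. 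Thus in the faithful case the instrumentation outputs the genuine root cause $(\statement_c,\statement)$ of $(\action_1,\action_2)$.

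It remains to discharge minimality and unfaithfulness, both of which feed the second disjunct. If the genuine race $(\action_1,\action_2)$ is minimal w.r.t.\ $\prec_\so$ we are done by the previous paragraph. Otherwise a strictly $\prec_\so$-smaller race exists, and a minimal one below it is between synchronously reachable actions by the argument of Lemma~\ref{lem:races}; such a race is detected faithfully by the pair $(\statement_3,\statement_4)$ of the statements generating it. Since $\prec_\so$ orders a race by the $\prec$-position of its second (in $\so$) statement, and our statement pairs are enumerated in colexicographic order induced by $\prec$, this pair is colex-before $(\statement_1,\statement_2)$, giving the second disjunct. In the \emph{unfaithful} case the run proceeds past some \plog{await}~$r$ while $r$ is only partially executed: either the prematurely executed continuation is independent of the remainder of $r$, so the two can be commuted into a faithful trace reaching the same violation and we are back in the preceding cases, or the continuation conflicts with the remainder of $r$; this conflict is itself a data race whose two statements are both reached strictly before $\statement_2$ and hence precede it w.r.t.\ $\prec$, so the corresponding pair $(\statement_3,\statement_4)$ is again colex-before $(\statement_1,\statement_2)$ and triggers $\sem{\Aprog,\statement_3,\statement_4}$.

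The main obstacle will be the unfaithful case: I must argue that proceeding past a partial \plog{await}~$r$ can affect the reachability of the violation only through a genuine conflict with the remainder of $r$, and that this conflict is a data race strictly below $\statement_2$ in $\prec$. I would obtain this by adapting the self-referential argument from the proof of Lemma~\ref{lem:races}: any value read by the prematurely executed continuation that differs from its faithful counterpart must be produced by an access racing with $r$'s remainder, and the first such divergence pins down two statements both executed before $\statement_2$, i.e.\ a colex-earlier violating pair. The remaining work --- verifying that the updates to $\statement_c$, $\statement$, and \texttt{task\_}$\statement_c$ track exactly the objects of Lemma~\ref{lemma:anc2}, including the branching case where one call statement matches several \plog{await} statements --- is routine bookkeeping but must be carried out carefully.
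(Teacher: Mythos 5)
Your proposal is correct and takes essentially the same approach as the paper, which in fact states Theorem~\ref{thm:RootCause} without a formal proof and justifies it only through the surrounding prose: your faithful/unfaithful case split, the use of Lemma~\ref{lemma:anc2} to match the recorded $(\statement_c,\statement)$ with the root cause, and the reduction of partial-await unfaithfulness to a colex-earlier violating pair are precisely the paper's ``subtle point'' argument made explicit. The only caveat is that your claims that a smaller race is necessarily detected by its own pair $(\statement_3,\statement_4)$ and that the unfaithful divergence can be commuted away are asserted at the same level of informality as the paper itself, so nothing is missing relative to the source.
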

  
  Based on Theorem~\ref{thm:RootCause}, we define an implementation of the procedure $\textsc{RCMinDR}(\Aprog)$ used in computing maximal asynchronizations (Algorithm~\ref{algo2}) as follows:
  \setlength{\itemsep}{1mm}
  \begin{itemize}
    \item For all pairs of read or write statements $(\statement_1,\statement_2)$ in colexicographic order w.r.t. $\prec$ that are reachable in an execution of the original synchronous program $\aprog$.
    \vspace{1mm}
    \begin{itemize}
      \item If $\sem{\Aprog,\statement_1,\statement_2}$ reaches an assertion violation, then 
  %		\vspace{1mm}
      \begin{itemize}
        \item return the root cause computed by $\sem{\Aprog,\statement_1,\statement_2}$ 
      \end{itemize}
    \end{itemize}
    \item return $\bot$
  \end{itemize}
  
  Checking whether read or write statements are reachable can be determined using a linear number of reachability queries in the synchronous program $\aprog$. Also, the order $\prec$ between read or write statements can be computed using a quadratic number of reachability queries in the synchronous program $\aprog$. Therefore, $\statement\prec\statement'$ iff an instrumentation of $\aprog$ that sets a flag when executing $\statement$ and asserts that this flag is not set when executing $\statement'$ reaches an assertion violation. The following theorem states the correctness of the procedure above.
  
  %\vspace{-1mm}
  \begin{theorem}
  $\textsc{RCMinDR}(\Aprog)$ returns the root cause of a minimal data race of $\Aprog$ w.r.t. $\prec_\so$, or $\bot$ if $\Aprog'$ is data race free.
  %\vspace{-1mm}
  \end{theorem}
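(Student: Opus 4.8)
The plan is to split the argument along the two possible outcomes of $\textsc{RCMinDR}(\Aprog)$ --- returning a root cause, and returning $\bot$ --- and to drive everything through Theorem~\ref{thm:RootCause} together with the ``first violation'' structure of the enumeration. Throughout I would read ``data race'' as ``data race between synchronously reachable actions'', since these are exactly the pairs the instrumentation inspects, and by Lemma~\ref{lem:races} their absence coincides with soundness of $\Aprog$ (I read the $\Aprog'$ in the statement as $\Aprog$). Two facts are then all I need: \emph{(i)} the forward correctness of the instrumentation, namely that if $(\action_1,\action_2)$ is a data race generated by reachable read/write statements $(\statement_1,\statement_2)$ accessing a common variable and no strictly $\prec_\so$-smaller data race exists, then $\sem{\Aprog,\statement_1,\statement_2}$ reaches the assertion at line~\ref{ln:assert}; and \emph{(ii)} Theorem~\ref{thm:RootCause}, which says that whenever $\sem{\Aprog,\statement_1,\statement_2}$ reaches a violation, either it computes the root cause of a minimal data race, or some colexicographically earlier pair $(\statement_3,\statement_4)$ also reaches a violation.

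For the case where $\Aprog$ admits a data race I would argue completeness and correctness separately. For completeness: since $\prec_\so$ is a strict partial order that factors through the finitely many pairs of read/write statements, a $\prec_\so$-minimal data race $(\action_1,\action_2)$ exists; reordering so that $(\action_1,\action_2)\in\so$, its generating statements $(\statement_1,\statement_2)$ are reachable read/write statements accessing a common variable with no $\prec_\so$-smaller data race, so by \emph{(i)} the instrumentation for this pair reaches a violation. Hence the enumeration encounters a violation for at least this pair and $\textsc{RCMinDR}$ does not fall through to $\bot$. For correctness: let $(\statement_1,\statement_2)$ be the \emph{first} pair in the colexicographic enumeration for which $\sem{\Aprog,\statement_1,\statement_2}$ reaches a violation, i.e.\ the pair at which $\textsc{RCMinDR}$ returns. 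Because no colexicographically earlier pair produces a violation, the second disjunct of Theorem~\ref{thm:RootCause} is impossible, so the first disjunct holds and the returned $(\statement_c,\statement)$ is the root cause of a minimal data race.

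For the case where $\Aprog$ is data race free I would use the same ``first violation'' argument in contrapositive form. Suppose toward a contradiction that some pair triggers the assertion, and let $(\statement_1,\statement_2)$ be the colexicographically first such pair; then Theorem~\ref{thm:RootCause} again forces it to be the root cause of a minimal data race, contradicting data race freedom. Hence no instrumentation reaches a violation and the enumeration returns $\bot$. I would close by noting that the enumeration itself is well defined: the reachable read/write statements and the order $\prec$ that orders the colexicographic traversal are computed by the linear and quadratic families of reachability queries in $\aprog$ described just before the statement.

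I expect the main obstacle to be justifying fact \emph{(i)}, the forward correctness of the instrumentation. The instrumentation simulates asynchronous executions in which a descendant task may be resumed past an $\awaitact$ even though it was only partially run, and one must argue that, in the absence of any $\prec_\so$-smaller data race, this relaxation neither suppresses the targeted race nor manufactures a spurious assertion violation before reaching $\statement_2$. This is exactly the assumption flagged in the discussion of the instrumentation, and the colexicographic enumeration is what discharges it; making the dependency precise --- that $\prec_\so$-minimality of the targeted race guarantees every prematurely resumed continuation is race-free with the remainder of its task, so that the simulated and genuine executions agree up to $\statement_2$ --- is the delicate part of the argument.
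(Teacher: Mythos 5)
Your proposal is correct and takes essentially the same route as the paper: the theorem is obtained by combining Theorem~\ref{thm:RootCause} with the first-violation structure of the colexicographic enumeration (at the first violating pair the second disjunct of Theorem~\ref{thm:RootCause} is impossible, so the computed $(\statement_c,\statement)$ is the root cause of a minimal data race, and under data race freedom the same argument in contrapositive yields $\bot$). Your fact \emph{(i)} --- that a $\prec_\so$-minimal race is actually detected because prematurely resumed continuations cannot race with the remainder of their task without a $\prec$-earlier pair --- is exactly the informal justification the paper gives in its discussion of the instrumentation, so flagging it as the delicate, only semi-formal step matches the paper's own treatment.
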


\section{Formalization and Proofs of Section~\ref{sec:asymComplexity}} \label{appendixF}

\begin{theorem}\label{th:compl3}
Checking whether there exists a sound asynchronization different from the strong asynchronization is PSPACE-complete.
%\vspace{-1mm}
\end{theorem}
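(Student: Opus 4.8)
The plan is to identify the strong asynchronization with the bottom element $\BAsyof{\aprog, \alib, \Alib}$ of $\leq$ (the fully synchronous program, in which every await sits immediately after its matching call), which is always sound because it is equivalent to $\aprog$, and to prove membership and hardness separately.

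\emph{Membership.} The first ingredient is downward-closure of soundness under $\leq$: moving an await closer to its matching call only adds happens-before edges, so $\hbo{}$ can only grow when passing to a smaller asynchronization. Hence any two synchronously reachable actions that are concurrent in some $\Aprog''\leq\Aprog'$ are also concurrent in $\Aprog'$, so by Lemmas~\ref{lemma:soundDR} and~\ref{lem:races} soundness of $\Aprog'$ implies soundness of every $\Aprog''\leq\Aprog'$. Consequently a sound asynchronization strictly above $\BAsyof{\aprog, \alib, \Alib}$ exists iff some immediate successor of $\BAsyof{\aprog, \alib, \Alib}$ is sound: any sound $\Aprog'>\BAsyof{\aprog, \alib, \Alib}$ dominates an immediate successor of the bottom (one produced by $\NextEle$), which is then sound by downward closure, while conversely every immediate successor is already a sound asynchronization different from the strong one. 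There are only polynomially many immediate successors, and checking soundness of a fixed asynchronization reduces, again via Lemmas~\ref{lemma:soundDR} and~\ref{lem:races}, to the absence of a data race between two synchronously reachable read/write statements; using the instrumentation of Fig.~\ref{fig:instrumentation} this amounts to a polynomial number of reachability queries in a sequential Boolean program, each of which lies in PSPACE. Since PSPACE is closed under polynomially many such queries (and $\mathrm{NPSPACE}=\mathrm{PSPACE}$ lets us instead guess the successor and the racing pair), the whole test is in PSPACE.

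\emph{Hardness.} I would reduce the complement of reachability in a sequential Boolean program, which is PSPACE-complete since $\mathrm{PSPACE}=\mathrm{co\text{-}PSPACE}$. Given a program $P$ with target location $\ell$ and a fresh global variable $z$, construct the asynchronization
\begin{center}
\lstset{basicstyle=\ttfamily\scriptsize,mathescape=true,morekeywords={async,method,await}}
\begin{lstlisting}
async method Main {
  r1 = call m;
  r = z;
  await r1;
}
async method m {
  await $*$;
  // simulate P; set z = 1 iff P reaches $\ell$
  return;
}
\end{lstlisting}
\end{center}
whose only asynchronous call is $\plog{m}$ and whose only movable await is $\plog{await\ r1}$, so the unique asynchronization different from the strong one places $\plog{await\ r1}$ after $\plog{r = z}$. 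In the happens-before model this placement leaves $\plog{r = z}$ in $\plog{Main}$ unordered with the write $\plog{z = 1}$ in the continuation of $\plog{m}$ (no await separates the read from the completion of $\plog{m}$), making them concurrent, whereas in the strong asynchronization $\plog{await\ r1}$ precedes $\plog{r = z}$ and orders $\plog{z = 1}$ before it. Since $\plog{z = 1}$ is synchronously reachable iff $P$ reaches $\ell$, this is a data race between synchronously reachable actions iff $P$ reaches $\ell$. Therefore the unique non-strong asynchronization is sound iff $P$ does not reach $\ell$, so a sound asynchronization different from the strong one exists iff $P$ does not reach $\ell$, yielding PSPACE-hardness.

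\emph{Main obstacle.} The delicate part is the hardness gadget. I must guarantee that the construction admits exactly one candidate non-strong asynchronization, so that its existence is equivalent to a single reachability query, and I must argue carefully that the absence of an intervening await genuinely makes the read and the write concurrent in the $\hbo{}$ relation underlying the data-race notion of Lemma~\ref{lem:races}, rather than appealing to a physical single-threaded interleaving order. As the example in Fig.~\ref{fig:dataRacesExample01} illustrates, such a read and write are deemed racy even though one cannot physically preempt the other, and this is precisely the property the gadget exploits. A secondary subtlety is justifying downward-closure of soundness directly from the monotonicity of $\hbo{}$, since it is what reduces the existence question to inspecting only immediate successors of the bottom.
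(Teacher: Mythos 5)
Your hardness half is correct and is essentially the paper's own approach: the paper also reduces from (the complement of) reachability in a sequential program, by inserting a fresh method $\ameth$ that writes a fresh variable $x$, a call to $\ameth$ followed by a conflicting write to $x$ at the target location $\ell$ (plus writes to $x$ after every call that can transitively reach $\ell$, to neutralize the other awaits), so that $\ell$ is reachable iff the strong asynchronization is the only sound one. Your self-contained two-method gadget with a single movable await achieves the same equivalence more cleanly, and your reliance on happens-before concurrency rather than physical preemption is indeed how the paper's race notion works (cf.\ Fig.~\ref{fig:dataRacesExample01}).

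The membership half, however, rests on a claim that is false in this model: soundness is \emph{not} downward-closed under $\leq$. Moving an await closer to its matching call does not merely add happens-before edges, because an \plog{await} statement is also a suspension point of the method that \emph{contains} it: moving it up makes the enclosing invocation suspend earlier, enlarging the suffix of that invocation that runs as a continuation, concurrently with code in its (transitive) callers. The paper says this explicitly in the proof of Lemma~\ref{lemma:MonotoneAwaitRepair} (``moving an await up in $\Aprog'$ can only create data races between actions that occur after the execution of this await (because the invocation is suspended earlier)''); this non-monotonicity is precisely why the paper needs \textsc{MaxRel}, \textsc{AsySyn} and Lemma~\ref{lemma:MonotoneAwaitRepair} in the first place. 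Consequently your equivalence ``a sound asynchronization above the bottom exists iff some immediate successor of the bottom is sound'' is unjustified: an immediate successor moves one await one step down while leaving every other await in bottom position, and those bottom-position awaits in callee methods make the callees suspend as early as possible, exposing maximal continuations, so such a successor may race even though a larger asynchronization (with callee awaits placed lower) is sound. Your argument never establishes that this cannot happen. The repair is easy, and you half-mention it parenthetically: drop the restriction to immediate successors altogether, nondeterministically guess an \emph{arbitrary} asynchronization (a polynomial-size object), and verify its soundness via Lemmas~\ref{lemma:soundDR} and~\ref{lem:races} using the instrumentation of Fig.~\ref{fig:instrumentation}, i.e.\ polynomially many reachability queries in a sequential program, each in PSPACE; $\mathrm{NPSPACE}=\mathrm{PSPACE}$ then gives membership with no monotonicity assumption. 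Note also that your gadget is unaffected by this issue, since it admits exactly one non-strong asynchronization, so the hardness direction stands as written.
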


\begin{proof}[Proof of Theorem \ref{th:compl3}]
  (1) define a new method $\ameth$ that writes to a new program variable $x$, and insert a call to $\ameth$ followed by a write to $x$ at location $\ell$, and 
 (2) insert a write to $x$ after every call statement that calls a method in $\{\ameth'\}^*$, where $\ameth'$ is the method containing $\ell$.
  Let $\ameth_a$ be an asynchronous version of $\ameth$ obtained by inserting an \plog{await} $*$ at the beginning. Then, $\ell$ is reachable in $\aprog$ iff the only sound asynchronization of $\aprog'$ w.r.t. $\{\ameth_a\}$ is the strong asynchronization.
\end{proof}

    %!TEX root = draft.tex
\vspace{-5pt}
\section{Formalization and Proofs of Section \ref{sec:analysis}}\label{sec:appanalysis}

The \textsc{MaxRel}$^\#$ procedure repairs data races in an order which is $\prec_\so$ 
with some exceptions that do not affect optimality, i.e., the number of times an await matching the same call can be moved. 
For instance, if a method $\ameth$ calls two other methods $\ameth_1$ and $\ameth_2$ in this order, the procedure above may 
handle $\ameth_2$ before $\ameth_1$, i.e., repair data races between actions that originate from $\ameth_2$ before data 
races that originate from $\ameth_1$, although the former are bigger than the latter in $\prec_\so$. 
This does not affect optimality because those repairs are ``independent'', i.e., any repair in $\ameth_2$ cannot 
influence a repair in $\ameth_1$, and vice-versa. The crucial point is that this procedure repairs data races between 
actions that originate from a method $\ameth$ before data races that involve actions in methods preceding $\ameth$ in 
the call graph, which are bigger in $\prec_\so$ than the former.

Note that \textsc{MaxRel}$^\#$ procedure which is based on the bottom-up inter-procedural data-flow analysis compromises precision to reduce the complexity of the problem from undecidable in general or PSPACE-complete with finite data to polynomial time. However, because of this imprecision, certain await statements may be moved closer to the matching call unnecessarily. For instance, in Fig.~\ref{fig:dataRacesExample10}, the precise algorithm (using the procedure \textsc{MaxRel} in Algorithm~\ref{algo2}) will only repair the data race on $x$ because doing so, the potential data race on $y$ will become unreachable. On the other hand, the polynomial-time algorithm (using the \textsc{MaxRel}$^\#$ procedure) will also repair the data race on $y$, moving another await closer to the matching call, since it cannot reason about data (one statement of this data race is only reachable if the variable $r4$ is $1$).  %We will improve our explanations.

\begin{figure}[h!]
    %\centering
\lstset{basicstyle=\ttfamily\scriptsize,numbers=left,
        stepnumber=1,numberblanklines=false,mathescape=true}
\hspace{3mm}
\begin{minipage}[t]{0.23\textwidth}
\begin{lstlisting}[numbersep=2pt]
void Main() {
 F();

 x = 2;
 
} 

void F() {
 IO();
  
 x = 1;

}
\end{lstlisting}
\end{minipage}
\hfill
\begin{minipage}[t]{0.30\textwidth}
    \begin{lstlisting}[numbersep=2pt]
async Task MainAsync() {
 Task t1 = F();

 x = 2; $\label{ln:multix2}$
 await t1;
}  

async Task F() {
 Task t2 = IOAsync();

 x = 1; $\label{ln:multix1}$
 await t2;
}
\end{lstlisting}
\end{minipage}
\hfill
\begin{minipage}[t]{0.38\textwidth}
\begin{lstlisting}[xleftmargin=2.5mm,numbersep=2pt]
void Main() {
 Thread thr1 = new Thread(F);
 thr1.Start(); 
 x = 2;
 thr1.Join(); $\label{ln:join1}$
}  

void F() {
 Thread thr2 = new Thread(IO);
 thr2.Start(); 
 x = 1;
 thr2.Join();
}
\end{lstlisting}
\end{minipage}
\vspace{-0.4cm}
\caption{A synchronous C\# program, an asynchronization, and a multi-threaded refactoring.}
\label{fig:example02}
%\vspace{-0.1cm}
\end{figure}

\section{Multi-threaded Refactorings}\label{sec:startJoin}

% Contrary to ``await'', ``Join'' may block the execution of the caller and the thread it is running on if the awaited thread did not finish
We discuss an extension of our framework to \emph{multi-threaded refactorings} that rewrite a sequential program into a multi-threaded program where every method invocation is executed on a different thread. A caller can wait for a callee to complete using a \texttt{join} primitive. A \texttt{start} primitive for spawning a new thread is the counterpart of an asynchronous call while \texttt{join} is the counterpart of \texttt{await}. For instance, 
%Another possible refactoring of C\# sequential programs is to transform them to multi-threaded programs using the Start/Join primitives. They allow to start a new thread to execute a given procedure using ``Start'' and to wait until its completion using ``Join''. In 
Fig.~\ref{fig:example02} lists a sequential program, a possible asynchonization, and a multi-thread refactoring (both refactorings place the awaits/joins as far away as possible from the calls).
% we placed the awaits and Joins statements where they cannot be moved further away from their matching calls and Starts, respectively, because of the use of the return values. 

An important difference between start/join and async/await is the happens-before order relation. For instance, the asynchronization on the center of Fig.~\ref{fig:example02} assigns 1 to \texttt{x} (line \ref{ln:multix1}) before it assigns 2 to \texttt{x} (line \ref{ln:multix2}), as in the original sequential program. However, the multi-thread program on the right of Fig.~\ref{fig:example02} may execute these two assignments in any order, and admits a behavior that is not possible in the sequential program (assigning 2 before assigning 1). 
%it is possible that the write 2 to \texttt{x} at line \ref{ln:multix2} is executed before the write 1 to \texttt{x} at line \ref{ln:multix1} contrary to the sequential program. 
Repairing this data-race consists in moving the \texttt{join} at line \ref{ln:join1} to occur before assigning 2 to \texttt{x} at line \ref{ln:multix2}. In general, the happens-before order is weaker compared to an analogous asynchronization, where awaits are placed as the joins, which implies that any multi-threaded refactoring can be rewritten to an asynchronization. The vice-versa may not be possible as shown in this example.

%Thus, it enforces that the execution of \texttt{F} always terminates before the write 2 to \texttt{x} at line \ref{ln:multix2}. Moreover, for every data-race free multi-thread program, the corresponding asynchronization (every await is placed in the same position as the corresponding Join) is data-race free as well. This is because the happens-before order is stronger in an asynchronization program.
Despite this difference, it can still be proved that there exists a unique multi-threaded refactoring that is sound, i.e., does not admit data races, and maximal, i.e., maximizes the distance between start and join, a result similar to Lemma~\ref{lemma:optimalAsync}. Assuming by contradiction the existence of two incomparable maximal and sound refactorings, one can show that moving a join in one refactoring further away from the matching call as in the other refactoring does not introduce data races (contradicting optimality).
% on the uniqueness of an optimal refactoring holds for Start/Join primitives as well. 
%Recalling the lemma proof, we just need to show that given two incomparable data-race free multi-thread programs $\Aprog^1$ and $\Aprog^2$ that are optimal relative to a given multi-thread program $\Aprog$, we can move down a Join statement $\statement_1$ in $\Aprog^1$ and the resulting program $\Aprog'^1$ is data-race free. Thus, we contradict the fact that $\Aprog^1$ is optimal. 
%This is because the new placement of $\statement_1$ in $\Aprog'^1$ corresponds to its placement in $\Aprog^2$ which is data-race free, thus $\Aprog'^1$ must be data-race free as well. 
To compute maximal and sound multi-threaded refactorings, one can apply the same 
%data-race free program from a given multi-thread program, we follow an 
iterative process of repairing data-races (the happens-before reflects multi-threading instead of async/await), prioritizing data races involving statements that would execute first in the sequential program. The repairing of a data-race is similar and consists in moving a join up.

%in a multi-thread program consists of moving up a Join statement as explained in the previous paragraph. However, 
In contrast to async/await, moving a join up does not introduce new data races (since no new parallelism is introduced). This implies that all the predecessors of a sound multi-threaded refactoring are also sound, i.e., the set of sound multi-threaded refactorings is downward closed. %This implies that after outputting the first solution, 

\end{document}